\newcommand{\ownthmSpaceAbove}{5pt}
\newcommand{\ownthmSpaceBelow}{5pt}
\newcommand{\resetCurThmBraces}{%
\gdef\curThmBraceOpen{(}%
\gdef\curThmBraceClose{)}}
\newcommand{\removeThmBraces}{%
\gdef\curThmBraceOpen{}%
\gdef\curThmBraceClose{}}
\declaretheoremstyle[
    spaceabove=\ownthmSpaceAbove,
    spacebelow=\ownthmSpaceBelow,
    headpunct=.,
    postheadspace=.5em,
    notebraces={\curThmBraceOpen}{\curThmBraceClose},
    postheadhook={\resetCurThmBraces},
]{definition}
\declaretheoremstyle[
    style=definition,
    bodyfont=\itshape,
    notebraces={\curThmBraceOpen}{\curThmBraceClose},
    postheadhook={\resetCurThmBraces},
]{theorem}
    \theoremstyle{theorem}
    \newtheorem{theorem}{Theorem}[section]
    \newtheorem{proposition}[theorem]{Proposition}
    \newtheorem{corollary}[theorem]{Corollary}
    \newtheorem{lemma}[theorem]{Lemma}
    \theoremstyle{definition}
    \newtheorem{example}[theorem]{Example}
    \theoremstyle{definition}
    \newtheorem{definition}[theorem]{Definition}
    \newtheorem{remark}[theorem]{Remark}
\newcommand{\pr}{^\prime}
\newcommand{\product}{\times}
\newcommand{\Set}{\mathbf{Set}}
\newcommand{\xCL}{\textbf{xCL}\xspace}
\newcommand{\Id}{\mathrm{Id}}
\newcommand{\id}{\mathrm{id}}
\newcommand{\strength}{\mathsf{st}}
\newif\ifedit
\newcommand{\ST}[1]{\textcolor{purple}{ST: #1}}
\newcommand{\STin}[1]{\todo[color=purple!30,inline]{Stelios: #1}}
\newcommand{\AN}[1]{\textcolor{brown}{AN: #1}}
\newcommand{\ANin}[1]{\todo[color=yellow!30,inline]{Andreas: #1}}
\newcommand{\DD}[1]{\textcolor{brown}{DD: #1}}
\newcommand{\DDin}[1]{\todo[color=yellow!30,inline]{Dominique: #1}}
\newcommand{\HU}[1]{\textcolor{brown}{DD: #1}}
\newcommand{\HUin}[1]{\todo[color=yellow!30,inline]{Henning: #1}}
\newcommand{\SM}[1]{\textcolor{brown}{DD: #1}}
\newcommand{\SMin}[1]{\todo[color=yellow!30,inline]{Stefan: #1}}
\newcommand{\LS}[1]{\textcolor{brown}{DD: #1}}
\newcommand{\LSin}[1]{\todo[color=yellow!30,inline]{Lutz: #1}}
\newcommand{\ST}[1]{}
\newcommand{\STin}[1]{}
\newcommand{\AN}[1]{}
\newcommand{\ANin}[1]{}
\newcommand{\DD}[1]{}
\newcommand{\DDin}[1]{}
\newcommand{\HU}[1]{}
\newcommand{\HUin}[1]{}
\newcommand{\SM}[1]{}
\newcommand{\SMin}[1]{}
\newcommand{\LS}[1]{}
\newcommand{\LSin}[1]{}
\newcommand{\arrow}[2]{%
  $\begin{tikzcd}[ampersand replacement=\&] #1 \arrow[r, shift left=0.3ex] \& #2 \end{tikzcd}$}
\newcommand{\goes}[2]{\ensuremath{#1 \rightarrow #2}}
\newcommand{\goesv}[3]{\ensuremath{#1 \xrightarrow{~#3~} #2}}
\newcommand{\set}{\mathbf{Set}}
\newcommand{\Nat}{\mathsf{Nat}}
\newcommand{\coit}{\mathop{\oname{coit}}}
\DeclareFontFamily{OMX}{MnSymbolE}{}
\DeclareSymbolFont{MnLargeSymbols}{OMX}{MnSymbolE}{m}{n}
\DeclareFontShape{OMX}{MnSymbolE}{m}{n}{
  <-6>  MnSymbolE5
  <6-7>  MnSymbolE6
  <7-8>  MnSymbolE7
  <8-9>  MnSymbolE8
  <9-10> MnSymbolE9
  <10-12> MnSymbolE10
  <12->   MnSymbolE12
}{}
\DeclareFontShape{OMX}{MnSymbolE}{b}{n}{
  <-6>  MnSymbolE-Bold5
  <6-7>  MnSymbolE-Bold6
  <7-8>  MnSymbolE-Bold7
  <8-9>  MnSymbolE-Bold8
  <9-10> MnSymbolE-Bold9
  <10-12> MnSymbolE-Bold10
  <12->   MnSymbolE-Bold12
}{}
\let\llangle\@undefined
\let\rrangle\@undefined
\DeclareMathDelimiter{\llangle}{\mathopen}%
{MnLargeSymbols}{'164}{MnLargeSymbols}{'164}
\DeclareMathDelimiter{\rrangle}{\mathclose}%
{MnLargeSymbols}{'171}{MnLargeSymbols}{'171}
\providecommand{\catname}{\mathbf} 
\providecommand{\clsname}{\mathcal}
\providecommand{\oname}[1]{{\operatorname{\mathsf{#1}}}}
\def\defcatname#1{\expandafter\def\csname B#1\endcsname{\catname{#1}}}
\def\defcatnames#1{\ifx#1\defcatnames\else\defcatname#1\expandafter\defcatnames\fi}
\def\defclsname#1{\expandafter\def\csname C#1\endcsname{\clsname{#1}}}
\def\defclsnames#1{\ifx#1\defclsnames\else\defclsname#1\expandafter\defclsnames\fi}
\def\defbbname#1{\expandafter\def\csname BB#1\endcsname{{\bm{\mathsf{#1}}}}}
\def\defbbnames#1{\ifx#1\defbbnames\else\defbbname#1\expandafter\defbbnames\fi}
\def\Set{\catname{Set}}
\providecommand{\argument}{\operatorname{-\!-}}
\DeclareOldFontCommand{\bf}{\normalfont\bfseries}{\mathbf}
\providecommand{\Id}{\operatorname{Id}}
\providecommand{\id}{\mathsf{id}}
\providecommand{\comp}{\mathbin{\circ}}
\providecommand{\xto}[1]{\,\xrightarrow{#1}\,}
\providecommand{\dar}{\kern-1.2pt\operatorname{\downarrow}}	
\providecommand{\uar}{\kern-1.2pt\operatorname{\uparrow}}	
\providecommand{\fst}{\oname{fst}}
\providecommand{\snd}{\oname{snd}}
\providecommand{\pr}{\oname{pr}}
\providecommand{\brks}[1]{\langle #1\rangle}
\providecommand{\inl}{\oname{inl}}
\providecommand{\inr}{\oname{inr}}
\DeclareSymbolFont{Symbols}{OMS}{cmsy}{m}{n}
\DeclareMathSymbol{\iobj}{\mathord}{Symbols}{"3B}
\providecommand{\curry}{\oname{curry}}
\providecommand{\ev}{\oname{ev}}
\providecommand{\pacman}[1]{}					                     %
\newcommand{\undefine}[1]{\let #1\relax}					                       %
\providecommand{\noqed}{\def\qed{}}				                 %
\providecommand{\mone}{{\text{\kern.5pt\rmfamily-}\mathsf{\kern-.5pt1}}}
\providecommand{\smin}{\smallsetminus}
\def\mfix#1{\oname{#1}\@ifnextchar\bgroup\@mfix{}}	       %
\def\@mfix#1{#1\@ifnextchar\bgroup\mfix{}}			           %
\providecommand{\case}[3]{\mfix{case}{\mathbin{}#1}{of}{#2}{\kern-1pt;}{\mathbin{}#3}}
\DeclareMathSymbol{\mathinvertedexclamationmark}{\mathord}{operators}{'074}
\DeclareMathSymbol{\mathexclamationmark}{\mathord}{operators}{'041}
\newcommand{\raisedmathinvertedexclamationmark}{%
  \mathord{\mathpalette\raised@mathinvertedexclamationmark\relax}%
}
\newcommand{\raised@mathinvertedexclamationmark}[2]{%
  \raisebox{\depth}{$\m@th#1\mathinvertedexclamationmark$}%
}
\newcommand{\Pt}{V}
\newcommand{\Pow}{\mathcal{P}}
\newcommand{\HO}{\mathcal{HO}}
\newcommand{\under}[1]{\lvert#1\rvert}
\newcommand{\SKI}{\mathrm{SKI}}
\newcommand{\ap}{{\mathrm{ap}}}
\newcommand{\var}{\mathsf{var}}
\newcommand{\cn}{{\mathrm{cn}}}
\newcommand{\cv}{{\mathrm{cv}}}
\newcommand{\Sigmas}{\Sigma^{\star}}
\newcommand{\ar}{\mathsf{ar}}
\newcommand{\NT}{\mathrm{Nat}}
\newcommand{\epito}{\twoheadrightarrow}
\newcommand{\seq}{\subseteq}
\newcommand{\ol}{\overline}
\newcommand{\beh}{{\mathsf{beh}}}
\providecommand{\C}{}
\providecommand{\D}{}
\renewcommand{\C}{{\mathbb{C}}}
\renewcommand{\D}{{\mathbb{D}}}
\renewcommand{\id}{{\mathsf{id}}}
\renewcommand{\Nat}{\mathds{N}}
\renewcommand{\NT}{\mathsf{Nat}}
\newcommand{\f}{\oname{f}}
\newcommand{\takeout}[1]{\empty}
\newcommand{\ini}{\iota}
\newcommand{\ter}{\tau}
\DeclareMathOperator{\Coalg}{\mathsf{Coalg}}
\DeclareMathOperator{\Alg}{\mathbf{Alg}}
\renewcommand{\rho}{\varrho}
\newcommand{\opp}{\mathsf{op}}
\newcommand{\mypowfin}{\mathscr{P}_\omega}
\newcommand{\pullbackangle}[2][]{\arrow[phantom,to path={
                     -- ($ (\tikztostart)!1cm!#2:([xshift=8cm]\tikztostart) $)
                        node[anchor=west,pos=0.0,rotate=#2,
                        inner xsep = 0]
                        {\begin{tikzpicture}[minimum
                        height=1mm,baseline=0,#1]
    \draw[-] (0,0) -- (.5em,.5em) -- (0,1em);
                        \end{tikzpicture}}}]{}}
\setlist[enumerate,1]{label=(\arabic*),font=\normalfont,align=left,leftmargin=0pt,labelindent=0pt,listparindent=\parindent,labelwidth=0pt,itemindent=!,topsep=3pt,parsep=0pt,itemsep=3pt,start=1}
\setlist[enumerate,2]{label=(\alph*),font=\normalfont,labelindent=*,leftmargin=*,start=1}
\setlist[itemize]{labelindent=*,leftmargin=*}
\setlist[description]{labelindent=*,leftmargin=*,itemindent=-1 em}
\renewcommand{\comp}{\cdot}
\renewcommand{\c}{\colon}
\tikzstyle{shiftarr}=[
\tikzset{
    commutative diagrams/.cd,
    arrow style=tikz,
    diagrams={>=stealth},
    row sep=large,
    column sep = huge
}
\theoremstyle{definition}
\newtheorem{rem}[theorem]{Remark} %
\DeclareMathOperator{\parto}{\rightrightarrows}
\numberwithin{equation}{section}
\newcommand{\xra}[1]{\xrightarrow{~#1~}}
\renewcommand{\xto}{\xra}
\let\xmpsto=\xmapsto
\renewcommand{\xmapsto}[1]{\xmpsto{~#1~}}
\newcommand{\V}{\mathcal{V}}
\newcommand{\iotaq}{\iota_\sim}
\renewcommand{\Nat}{\mathbb{N}}
\newcommand{\fset}{\mathbb{F}}
\newcommand{\vcat}{\set^{\fset}}
\newcommand{\alg}[1]{\mathbf{Alg}(#1)}
\newcommand{\mon}{\bullet}
\newcommand{\gcat}{\mathbb{C}}
\providecommand{\DiNat}{\mathsf{DiNat}}
\newcommand{\mS}{{\mu\Sigma}}
\newcommand{\mSq}{{\mS_{\sim}}}
\newcommand{\BmS}[1]{B(\mS,#1)}
\newcommand{\BmSf}[1]{B(\id,#1)}
\DeclareMathOperator{\iter}{\oname{it}}
\DeclareMathOperator{\coiter}{\oname{coit}}
\newcommand{\appp}{\mathsf{app}}
\newcommand{\app}{\,}
\newcommand{\skitermu}{\Lambda_{\mathsf{\tiny CL}}}
\newcommand{\finalc}{Z}
\theoremstyle{definition}
\newtheorem{notation}[theorem]{Notation}
\newtheorem{assumptions}[theorem]{Assumptions}
\crefname{rem}{Remark}{Remarks} 
\crefname{assumptions}{Assumptions}{Assumptions} 
\renewcommand{\qedhere}{\ensuremath{\qed}}
\renewcommand{\noqed}{\def\qed{}}
\renewcommand{\xto}[1]{\mathrel{\raisebox{-.75pt}{$\xrightarrow{\;\smash{\raisebox{-1.75pt}{{\scriptsize $#1$}}\;}}$}}}
\begin{document}\allowdisplaybreaks

\title{Higher-order bialgebraic semantics}

\author{Sergey Goncharov}
	\orcid{0000-0001-6924-8766}
	\affiliation{%
		\institution{School of Computer Science, University of Birmingham}
		\city{Birmingham}
		\country{UK}
		\authoremail{s.goncharov@bham.co.uk}
	}

\author{Stefan Milius}
	\orcid{0000-0002-2021-1644}
	\affiliation{%
		\institution{Friedrich-Alexander-Universität Erlangen-Nürnberg}
		\city{Erlangen}
		\country{Germany}
		\authoremail{stefan.milius@fau.de}
	}

\author{Lutz Schr{\"o}der}
	\orcid{0000-0002-3146-5906}
	\affiliation{%
		\institution{Friedrich-Alexander-Universität Erlangen-Nürnberg}
		\city{Erlangen}
		\country{Germany}
		\authoremail{lutz.schroeder@fau.de}
	}

\author{Stelios Tsampas}
	\orcid{0000-0001-8981-2328}
	\affiliation{%
		\institution{Centre for Formal Methods and Future Computing, Syddansk Universitet}
		\city{Odense}
		\country{Denmark}
		\authoremail{stelios@imada.sdu.dk}
	}

\author{Henning Urbat}
	\orcid{0000-0002-3265-7168}
	\affiliation{%
		\institution{Friedrich-Alexander-Universität Erlangen-Nürnberg}
		\city{Erlangen}
		\country{Germany}
		\authoremail{henning.urbat@fau.de}
	}

\jfpVolume{36}
\jfpArticle{1}
\jfpDOI{10.46298/jfp.17738}
\jfpYear{2026}
\received[Submitted]{April 2024}
\received[accepted]{March 2026}

\begin{abstract}
Compositionality proofs in higher-order languages are notoriously involved, and general semantic frameworks guaranteeing compositionality are hard to come by. In particular, Turi and Plotkin's bialgebraic abstract GSOS framework, which provides off-the-shelf compositionality results for first-order languages, so far does not apply to higher-order languages. In the present work,  we develop a theory of abstract GSOS specifications for higher-order languages, in effect transferring the core principles of Turi and Plotkin's framework to a higher-order setting. In our theory, the operational semantics of higher-order languages is represented by certain dinatural transformations that we term \emph{(pointed) higher-order GSOS laws}. We give a general compositionality result that applies to all systems specified in this way and discuss how compositionality of combinatory logics and the $\lambda$-calculus w.r.t.\ a strong variant of Abramsky's applicative bisimilarity are obtained as instances.
\end{abstract}

\maketitle

\section{Introduction}
\label{sec:intro}

The framework  of \emph{Mathematical Operational
  Semantics}, introduced by~\cite{DBLP:conf/lics/TuriP97}, elucidates the operational semantics of programming languages, and guarantees 
compositionality of programming language semantics in all cases that it covers.
In this framework, operational semantics are presented as distributive laws,
varying in complexity, of a monad over a comonad in a suitable category. An
important example is that of \emph{GSOS laws}, i.e. natural transformations of type
\begin{equation*}
  \rho_X \colon \Sigma (X \product BX) \to B\Sigmas X, 
\end{equation*}
with endofunctors $\Sigma, B \c \gcat \to \gcat$ respectively specifying the
\emph{syntax} and \emph{behaviour} of the system at hand. The idea is that a GSOS law  represents a set of inductive transition rules that specify how programs are run. For instance, the choice of $\gcat = \Set$
and $B = (\mypowfin)^{L}$, where~$\mypowfin$ is the finite powerset
functor and $L$ a set of transition labels, leads to the
well-known GSOS rule format by \cite{DBLP:journals/jacm/BloomIM95} for specifying labeled transition
systems. For that reason, Turi and Plotkin's framework is
often referred to as \emph{abstract GSOS}.

The semantic interpretation of GSOS laws is conveniently presented in a bialgebraic setting (cf.\ \Cref{sec:abstract-gsos}). Every GSOS law $\rho$  \eqref{eq:rho} canonically induces a bialgebra
\[ \Sigma(\mS)\xto{\ini} \mS \xto{\gamma} B(\mS) \]
on the object $\mS$ of programs freely generated by the syntax functor $\Sigma$,
where the algebra structure~$\ini$ inductively constructs programs and the
coalgebra structure $\gamma$ describes their one-step behaviour according to the
given law $\rho$. The above bialgebra is thus the \emph{operational model} of
$\rho$.
Dually, its \emph{denotational model} is a bialgebra
\[ \Sigma(\nu B)\xto{\alpha} \nu B \xto{\tau} B(\nu B),\]
which extends the final coalgebra $\nu B$ (to be
thought of as the domain of abstract program behaviours) of the behaviour functor $B$. Both the operational
and the denotational model are characterized by universal properties, namely
as the \emph{initial $\rho$-bialgebra} and the \emph{final $\rho$-bialgebra}, respectively.
This immediately entails a key feature of abstract GSOS: The semantics is
automatically \emph{compositional}, in that
behavioural equivalence (e.g.\ bisimilarity) of programs is a congruence with respect to the operations of the
language. The bialgebraic framework has been used widely to establish
further correspondences and obtain compositionality
results, see e.g.\ the work of~\cite{56f40c248cb44359beb3c28c3263838e}, 
  \cite{DBLP:conf/fossacs/KlinS08}, \cite{DBLP:conf/lics/FioreS06}, 
  \cite{DBLP:journals/tcs/MiculanP16}, and \cite{DBLP:conf/fscd/0001MS0U22}.

As a first step towards extending the abstract GSOS framework to languages with
\emph{variable binding}, such as the $\pi$-calculus by \cite{DBLP:journals/iandc/MilnerPW92a} and the
$\lambda$-calculus, \cite{DBLP:conf/lics/FiorePT99} use the theory of
\emph{presheaves} to establish an abstract categorical foundation of
syntax with variable binding, and develop a theory of capture-avoiding
substitution in this abstract setting. Based on these foundations,
the semantics of \emph{first-order} languages with variable binding,
more precisely that of the $\pi$-calculus and value-passing
CCS, see \cite{DBLP:books/daglib/0067019}, is formulated in terms of
GSOS laws on categories of presheaves by \cite{DBLP:conf/lics/FioreT01}. We also
introduce higher-order bialgebras and construct the initial such bialgebra.

However, the question of the mathematical operational semantics of the
$\lambda$-calculus, or generally that of higher-order languages, still
remains a well-known issue in the literature (see e.g.~the
introductory paragraph by
\cite{DBLP:journals/lmcs/HirschowitzL22}). Indeed, in order to give
the semantics of a higher-order language in terms of some sort of a
distributive law of a syntax functor over some choice of a behaviour
functor, one needs to overcome a number of fundamental problems. For
instance, for a generic set $X$ of programs, the most obvious set of
``higher-order behaviours over $X$'' would be $X^{X}$, the set of
functions that expect an input program in~$X$ and
produce a new program in $X$. Of course, the assignment $X \mapsto X^{X}$ is
not functorial in $X$ but bifunctorial; more precisely, it yields a bifunctor
\[B(X,Y) = Y^{X} \c \Set^{\opp} \product \Set \to \Set\] of mixed
variance.  Working with mixed variance bifunctors as a basis for
higher-order behaviour makes the situation substantially more complex
in comparison to Turi and Plotkin's original setting. In particular,
natural transformations alone will no longer suffice as the technical
basis of a framework involving mixed variance functors, and it is not
a priori clear what the right notion of coalgebra for a mixed variance
functor should be. In this paper, we address these issues, with a view
to obtaining a general congruence result.

\paragraph*{Contributions}~ We develop a theory of abstract GSOS for
higher-order languages, extending Turi and Plotkin's original
first-order framework. We model such languages abstractly in terms of
syntax endofunctors of the form $\Sigma=V+\Sigma'\c \C\to \C$ (for an endofunctor $\Sigma'\colon \C\to \C$ representing the constructors of the language and a choice of an object $V \in \gcat$ to be thought of as an object of variables), and behaviour
bifunctors $B \c \gcat^{\opp} \product \gcat \to \gcat$. The key concept introduced in our paper is that of a \emph{$\Pt$-pointed higher-order GSOS law}: a family of morphisms
\[
  \rho_{X,Y} \c \Sigma(jX \product B(jX,Y)) \to B(jX,\Sigmas(jX + Y))
\]
\emph{dinatural} in $X \in \Pt/\gcat$ and \emph{natural} in
$Y \in \gcat$, with $j \c \Pt/\gcat \to \gcat$ denoting
the forgetful functor from the coslice category $\Pt/\gcat$ to $\gcat$.
Similar to the first-order case, we think of a higher-order GSOS law as encoding the set of inductive small-step operational rules of a higher-order language. We show that each $\Pt$-pointed higher-order GSOS law inductively
determines an operational model given by a \emph{higher-order bialgebra}
\[ \Sigma(\mS)\xto{\ini} \mS \xto{\gamma} B(\mS,\mS) \]
on the initial algebra $\mS$ of program terms. In analogy to the
first-order case, the operational model is the initial higher-order bialgebra for the given higher-order GSOS law.

From a coalgebraic standpoint, the morphism
$\gamma \c \mS \to B(\mS,\mS)$ is a coalgebra for the restricted
\emph{endofunctor} $B(\mS, \argument) \c \gcat \to \gcat$. Our
semantic domain of choice is the final $B(\mS, \argument)$-coalgebra
$(Z,\zeta)$, the object of abstract behaviours determined by the functor
$B(\mS, \argument)$. We obtain a morphism
$\coiter \gamma \c \mS \to Z$ by coinductively extending $\gamma$;
that is, we take the unique coalgebra morphism into the final coalgebra:
\begin{equation*}
  \begin{tikzcd}
    \mS
    \arrow[r, "\gamma"]
    \ar[dashed]{d}[swap]{\coiter \gamma}
    &
    B(\mS, \mS)
    \arrow[d, "B(\mS{,} \coiter \gamma)"]
    \\
    \finalc
    \arrow[r, "\zeta"]
    & B(\mS, \finalc)
  \end{tikzcd}
\end{equation*}
Importantly, in contrast to first-order abstract GSOS, the final coalgebra
$(Z,\zeta)$ generally does \emph{not} extend to a final higher-order bialgebra; 
in fact, a final bialgebra does not usually seem to exist (\Cref{ex:final-bialgebra}). 
As a consequence, proving compositionality in our higher-order setting is substantially more
challenging than in the first-order case, requiring entirely new techniques and 
additional assumptions on the base category $\C$ and the functors $\Sigma$ and $B$.
Specifically, we investigate higher-order GSOS laws in a \emph{regular} category~$\C$. 

As our main compositionality result, we show that the kernel pair of
$\coiter (\gamma) \c \mS \to \finalc$ (which under mild
conditions coincides with the coalgebraic bisimilarity relation) is a congruence. We
demonstrate the expressiveness of higher-order abstract GSOS by modeling
two important examples of higher-order systems. We draw our first
example, the \emph{SKI combinator calculus} by \cite{10.2307/2370619},
from the world of combinatory logic, which we represent using a higher-order GSOS law on the category of sets. For our second
example we move on to a category of presheaves, on which we model the
call-by-name and the call-by-value $\lambda$-calculus.  In all of these
examples, we demonstrate that the induced semantics corresponds to
{strong} variants of \emph{applicative
  bisimilarity}, originally introduced by \cite{Abramsky:lazylambda}.

\paragraph*{Organization}~
In~\Cref{sec:prelim} we provide a brief introduction to the core
categorical concepts that are used throughout this paper. Moving on,
in~\Cref{sec:sets} we discuss examples from combinatory logic and
present a basic rule format for higher-order languages that
illustrates the principles behind our approach.  \Cref{sec:hogsos} is
where we define our key notion of pointed higher-order GSOS law and
prove our main compositionality result
(\Cref{th:main}). In~\Cref{sec:lam} we show how to implement the call-by-name and
call-by-value $\lambda$-calculus in our abstract framework.  We
conclude the paper with a
discussion of further developments and potential avenues for future work in \Cref{sec:concl}.

\paragraph*{Related work}~
Formal reasoning on higher-order languages is a long-standing research
topic
(e.g.~\cite{3720}, \cite{Abramsky:lazylambda}, \cite{DBLP:journals/iandc/AbramskyO93}).
The series of workshops on \emph{Higher Order Operational Techniques
  in Semantics}, see \cite{10.5555/309656}, played an important role in
establishing the so-called \emph{operational methods} for higher-order
reasoning. The two most important such methods are \emph{logical
  relations} (see \cite{tait1967intensional}, \cite{DBLP:journals/iandc/Statman85},
  \cite{DBLP:journals/iandc/OHearnR95}, \cite{DBLP:journals/corr/abs-1103-0510}) and
\emph{Howe's method} (see \cite{DBLP:conf/lics/Howe89,
  DBLP:journals/iandc/Howe96}), both of which remain in use to date.
Other significant contributions towards reasoning on higher-order
languages were made by
\mbox{\citet{DBLP:journals/iandc/Sangiorgi94, DBLP:journals/iandc/Sangiorgi96}} and
\citet{DBLP:conf/lics/Lassen05}.
 While GSOS-style
frameworks ensure compositionality for free by mere adherence to given
rule formats, both logical relations and Howe's method instead have
the character of robust but inherently complex methods whose
instantiation requires considerable effort.

Recently, notable progress has been made towards generalizing Howe's
method by
\cite{DBLP:journals/lmcs/HirschowitzL22} and \cite{DBLP:conf/lics/BorthelleHL20},
based on previous work on \emph{familial monads} and operational
semantics by \cite{DBLP:journals/pacmpl/Hirschowitz19}. According to the
authors, their approach departs from Turi and Plotkin's bialgebraic
framework exactly because that framework did not cover
higher-order languages at the time. \cite{DBLP:conf/lics/LagoGL17} give a general account of
congruence proofs, and specifically Howe's method, for applicative
bisimilarity for $\lambda$-calculi with algebraic effects, based on
the theory of relators. \cite{DBLP:journals/entcs/HermidaRR14} present a foundational
account of logical relations as \emph{structure-preserving} relations
in a reflexive graph category.

Rule formats like the GSOS format by \cite{DBLP:journals/jacm/BloomIM95}
have been very useful for guaranteeing congruence of bisimilarity at a
high level of generality. However, rule formats for higher-order
languages have been scarce.  An important example is that of
the \emph{promoted tyft/tyxt} rule
format, see~\cite{DBLP:conf/lics/Bernstein98} and \cite{DBLP:journals/entcs/MousaviR07},
which has similarities to our presentation of combinatory logic
in~\Cref{sec:sets}, but it is unclear whether or not it is an instance of a
general, categorical format.  The rule format of
\citet{DBLP:journals/iandc/Howe96} was presented in the context of
Howe's method. A variant of Howe's format was recently developed by
\citet{DBLP:journals/lmcs/HirschowitzL22}.

The present paper is an extended and fully revised version of our contribution to POPL 2023, see~\cite{gmstu23}. In comparison to the latter, we include additional details and explanations that had to be omitted for lack of space, in particular full proofs, which we hope will further illuminate our results. 

Subsequent work following our POPL 2023 paper is discussed in \Cref{sec:concl}.

\section{Preliminaries}\label{sec:prelim}

\subsection{Category theory}\label{sec:categories}
We assume familiarity with basic notions from category theory
such as limits and colimits, functors, natural transformations, and monads; see e.g.~\cite{mac2013categories}. For the convenience of the reader, we review some terminology and notation used in the paper.

\paragraph*{Products and coproducts}~ Given
 objects $X_1, X_2$ in a category~$\C$, we write $X_1\times X_2$ for their product, 
 $\fst\colon X_1 \times X_2 \to X_1$ and $\snd\colon X_1\times X_2\to X_2$ for the projections, and $\brks{f_1,f_2}\colon Y \to X_1 \times X_2$ for the pairing of
morphisms $f_i\colon Y \to X_i$, $i = 1,2$. Dually, we write $X_1+X_2$ for the coproduct, $\inl\c X_1\to X_1+X_2$ and $\inr\c X_2\to X_1+X_2$ for the injections, and $[g_1,g_2]\c X_1+X_2\to Y$ for the copairing of morphisms $g_i\colon X_i\to Y$, $i=1,2$. Moreover, we let $\nabla=[\id_X,\id_X]\colon X+X\to X$ denote the codiagonal.

\paragraph*{Dinatural transformations}~
Given bifunctors $F,G\colon \C^\opp\times \C\to \D$ of mixed variance, a \emph{dinatural transformation} from $F$ to $G$ is a family of morphisms \[\sigma_X\colon F(X,X)\to G(X,X)\qquad (X\in \C)\] such that for every morphism $f\colon X\to Y$ of~$\C$, the hexagon below commutes.
\[
\begin{tikzcd}
  & F(X,X) \ar{r}{\sigma_X} & G(X,X) \ar{dr}{G(\id,f)} & \\
  F(Y,X) \ar{ur}{F(f,\id)} \ar{dr}[swap]{F(\id,f)} & & & G(X,Y) \\
  & F(Y,Y) \ar{r}{\sigma_Y} & G(Y,Y) \ar{ur}[swap]{G(f,\id)} & 
\end{tikzcd}
\]

\paragraph*{Regular categories}~
A \emph{regular epimorphism} is an epimorphism that is the coequaliser of some
pair of parallel morphisms.
A category is \emph{regular} if (1)~it has finite limits, (2)~for
every morphism $f\c A\to B$, the kernel pair $p_1,p_2\colon E\to A$ of
$f$ has a coequalizer, and (3)~regular epimorphisms are stable under
pullback.  In a regular category, every morphism $f\colon A\to B$
admits a factorization $\begin{tikzcd}A \ar[two heads]{r}{e} & C \ar[tail]{r}{m} & B\end{tikzcd}$ into a regular
epimorphism $e$ followed by a monomorphism $m$; specifically, $e$ is the
coequalizer of the kernel pair of $f$\footnote{The kernel pair of $f\colon A\to B$ is the pullback $p_1,p_2\colon K\to A$ of $f$ along itself.}, and $m$ is the unique factorizing
morphism. Indeed, the main purpose of regular categories is to provide
a notion of image factorization of morphisms that relates to kernels
of morphisms in a similar way as in set theory and universal algebra. Examples of regular
categories include the category $\Set$ of sets and functions, the
category $\Set^\C$ of (covariant) presheaves on a small category $\C$ and natural transformations, more generally every elementary topos, and every category of algebras over a
signature (see below). In all these cases, regular epimorphisms and
monomorphisms are the (componentwise) surjective and
injective morphisms, resp.

\paragraph*{Algebras}~ Given an endofunctor $\Sigma$ on a category $\gcat$,
a \emph{$\Sigma$-algebra} is a pair $(A,a)$ of an object~$A$
(the \emph{carrier} of the algebra) and a morphism $a\colon \Sigma A\to A$
(its \emph{structure}). A \emph{morphism} from
$(A,a)$ to a $\Sigma$-algebra $(B,b)$ is a morphism $h\colon A\to B$
of~$\gcat$ such that $h\comp a = b\comp \Sigma h$. (We denote composition of morphisms by `$\cdot$'.) Algebras for $\Sigma$ and
their morphisms form a category $\Alg(\Sigma)$, and an \emph{initial}
$\Sigma$-algebra is simply an initial object in that category.  We denote
the initial $\Sigma$-algebra's carrier $\mu \Sigma$ if it exists, and its structure by
$\ini\colon \Sigma(\mu \Sigma) \to \mu \Sigma$. Moreover, we write
\[\iter(a)\colon (\mu \Sigma,\ini) \to (A,a)\] for the unique morphism
from~$\mu \Sigma$ to the algebra~$(A,a)$; we often drop parentheses and write $\iter a$ for $\iter(a)$.

More generally, a \emph{free $\Sigma$-algebra} on an object $X$ of $\C$ is a
$\Sigma$-algebra $(\Sigma^{\star}X,\iota_X)$ together with a morphism
$\eta_X\c X\to \Sigma^{\star}X$ of~$\C$ such that for every algebra $(A,a)$
and every morphism $h\colon X\to A$ in $\C$, there exists a unique
$\Sigma$-algebra morphism $h^\star\colon (\Sigma^{\star}X,\iota_X)\to (A,a)$
such that $h=h^\star\comp \eta_X$. If free algebras
exist on every object, then their formation induces a monad
$\Sigma^{\star}\colon \C\to \C$, the \emph{free monad} on $\Sigma$,
see \cite{Barr70}. Every $\Sigma$-algebra $(A,a)$ induces an
Eilenberg--Moore algebra $\hat a \colon \Sigma^{\star} A \to A$, viz.\ $\hat a =\id_A^\star$ for the identity morphism $\id_A\c A\to A$.

\begin{example}[Algebras over a signature]
The most familiar example of functor algebras is that of algebras over a
signature.  An \emph{algebraic signature} consists of a set $\Sigma$
of operation symbols together with a map $\ar\colon \Sigma\to \Nat$
associating to every operation symbol $\f$ its \emph{arity}
$\ar(\f)$. Symbols of arity $0$ are called \emph{constants}. Every
signature~$\Sigma$ induces the polynomial set functor
$X\mapsto \coprod_{\f\in\Sigma} X^{\ar(\f)}$, which we
denote by the same letter $\Sigma$. We identify an element $(x_1,\ldots,x_{\ar(\f)})$ of the summand $X^{\ar(\f)}$ with the flat $\Sigma$-term $\f(x_1,\ldots,x_{\ar(\f)})$. An algebra for the functor
$\Sigma$ then is equivalently an algebra for the signature $\Sigma$,
i.e.~a set $A$ equipped with an operation $\f^A\colon A^n \to A$ for
every $n$-ary operation symbol $\f\in \Sigma$. Morphisms between
$\Sigma$-algebras are maps respecting the algebraic structure.

Given a set $X$ of variables, the free algebra $\Sigmas X$ is the
$\Sigma$-algebra of terms generated by $\Sigma$ with variables from
$X$. In
particular, the free algebra on the empty set is the initial algebra~$\mu \Sigma$, 
and it is formed by all \emph{closed terms} of the
signature.
For every $\Sigma$-algebra $(A,a)$, the induced
Eilenberg--Moore algebra $\hat a\colon \Sigmas A \to A$ is given by the map evaluating terms over $A$ in the algebra.
\end{example}

A relation ${\sim}\seq A\times A$ on a $\Sigma$-algebra
$A$ is called a \emph{congruence} if, for every $n$-ary operation
symbol $\f\in \Sigma$ and all elements $a_i,b_i\in A$
($i=1,\ldots,n$),
\[ a_i \sim b_i \quad(i=1,\ldots,n) \qquad\text{implies}\qquad
  \f^A(a_1,\ldots,a_n)\sim \f^A(b_1,\ldots,b_n). \] 
Note that unlike other authors, we do not require congruences to be equivalence relations. An equivalence relation $\sim$ is a congruence if and only if it is the \emph{kernel} of some morphism, i.e.\ there
exists a morphism $h\colon A\to B$ to some $\Sigma$-algebra $B$ such that
\[ a\sim b \qquad\text{iff}\qquad h(a)=h(b). \]

\paragraph*{Coalgebras}~Dually to the notion of algebra, a \emph{coalgebra} for an
endofunctor $B$ on $\gcat$ is a pair $(C,c)$ of an object $C$ (the
\emph{state space}) and a morphism $c\colon C\to BC$ (its
\emph{structure}). A \emph{morphism} from
$(C,c)$ to a $B$-coalgebra $(D,d)$ is a morphism
$h\colon C\to D$ of $\C$ such that $Bh\comp c = d\comp h$.
Coalgebras for $B$ and their morphisms form a category $\Coalg(B)$, and a
\emph{final} $B$-coalgebra is a final object in that category. If
it exists, we denote the corresponding state space and the structure as $\nu B$ and 
$\ter\colon \nu B \to B(\nu B)$. Moreover, for each coalgebra $(C,c)$ we denote the unique coalgebra morphism by
\[\coiter (c)\colon (C,c)\to (\nu B, \ter).\]
Again, we often write $\coiter c$ for $\coiter(c)$. Informally, a coalgebra is a categorical abstraction of a state-based system. The final coalgebra $\nu B$ is the domain of all possible abstract behaviours that $B$-coalgebras may expose, and $\coiter(c)$ sends every state of a coalgebra to its abstract behaviour. Accordingly, given coalgebras $(C,c)$ and $(C',c')$ for a functor $B\colon \Set\to\Set$, two states $x\in C$ and $x'\in C'$ are \emph{behaviourally equivalent}, denoted $x\equiv x'$, if they are identified in the final coalgebra:
\[ x\equiv x' \iff \coiter(c)(x)=\coiter(c')(x)\]
Behavioural equivalence is closely related to the notion of {bisimilarity}. A \emph{bisimulation} between coalgebras $(C,c)$ and $(C',c')$ is a relation $R\seq C\times C'$ that can be equipped with a coalgebra structure $r\colon R\to BR$ such that both projection maps $p\colon (R,r)\to (C,c)$ and $p'\colon (R,r)\to (C',c')$ are coalgebra morphisms. Two states $x\in C$ and $x'\in C'$ are \emph{bisimilar}, denoted $x\sim x'$, if they are contained in some bisimulation. For functors $B$ preserving weak pullbacks, bisimilarity coincides with behavioural equivalence,
see \cite{DBLP:journals/tcs/Rutten00}:
\[ x\sim x' \iff x\equiv x'. \]\par  
\begin{example}\label{ex:lts}
A coalgebra $c\colon C\to (\Pow_\omega C)^L$ for the functor $BX=(\Pow_\omega X)^L$ on $\Set$, where~$\Pow_\omega$ is the finite power set functor and $L$ is a fixed set of labels, is precisely a labeled transition system (LTS). Given LTS $(C,c)$ and $(C',c')$, a relation $R\seq C\times C'$ is a bisimulation if it preserves and reflects transitions, i.e.\ for each pair $R(x,x')$ and each label~$a\in L$,
\begin{itemize}
\item if $x\xto{a} y$ then there exists $y'\in C'$ such that $x'\xto{a} y'$ and $R(y,y')$;
\item if $x'\xto{a} y'$ then there exists $y\in C$ such that $x\xto{a} y$ and $R(y,y')$.
\end{itemize}
Since the functor $B$ preserves weak pullbacks, bisimilarity is behavioural equivalence.
\end{example}

\subsection{Abstract GSOS}\label{sec:abstract-gsos}
In the following we briefly review the categorical \emph{abstract GSOS} framework by \cite{DBLP:conf/lics/TuriP97} for modeling the operational semantics of (first-order) languages. We will refer to it as \emph{first-order abstract GSOS} for distinction with the higher-order extension developed in \Cref{sec:hogsos}. The framework is parametric in two endofunctors $\Sigma, B\colon \C\to \C$ on a category $\C$ with binary products, where $\Sigma$ is assumed to have an initial algebra $\mS$ and to generate a free monad~$\Sigmas$.
The functors $\Sigma$ and $B$ represent the \emph{syntax} and \emph{behaviour} of a language; in particular,~$\mS$ is regarded as the algebra of closed program terms. In abstract GSOS, the (small-step) operational semantics of a language is specified by a \emph{GSOS law of $\Sigma$ over $B$}, viz.\ a natural transformation 
\begin{equation}\label{eq:rho} \rho_X\colon \Sigma(X\times BX)\to B\Sigmas X \quad (X\in \C). 
\end{equation}
Informally, $\rho$ encodes the operational rules of the underlying language: for every constructor~$\f$ of the language, it specifies the one-step-behaviour of programs $\f(-,\cdots,-)$, i.e.\ the $\Sigma$-terms they transition into next, depending on the one-step behaviours of all the operands.
\begin{example}
Consider a process algebra with a parallel composition operator $(p,q)\mapsto p\parallel q$ specified by the operational rules
\begin{equation}\label{eq:rules-par} \frac{p\xto{a} p'}{p \parallel q\xto{a} p' \parallel q} \qquad\qquad \frac{q\xto{a} q'}{p \parallel q\xto{a} p \parallel q'} 
\end{equation}
where $a$ ranges over a fixed set $L$ of action labels. To model this specification in abstract GSOS, one takes the polynomial functor $\Sigma X=X\times X$ corresponding to the binary operator~$\parallel$, and the behaviour functor $BX=(\Pow_\omega X)^L$ representing labeled transition systems. The rules \eqref{eq:rules-par} induce a GSOS law of $\Sigma$ over $B$, i.e.\ a natural transformation 
\[ \rho_X\colon X\times (\Pow_\omega X)^L \times X \times (\Pow_\omega X)^L \to (\Pow_\omega(\Sigmas X))^L \qquad (X\in \Set), \]
whose components simply encode the two rules into a function:
\[ \rho_X(p,f,q,g) = \lambda a.\, \{ p'\parallel q : p'\in f(a) \} \cup \{ p\parallel q' : q'\in g(a) \}.    \]
\cite{DBLP:conf/lics/TuriP97} observed more generally that GSOS laws of polynomial functors~$\Sigma$ over the functor $BX=(\Pow_\omega X)^L$ correspond to specifications in the \emph{GSOS rule format} by \cite{DBLP:journals/jacm/BloomIM95}. The rules \eqref{eq:rules-par} for parallel composition are instances of GSOS rules.
\end{example} 
Every GSOS law $\rho$ canonically induces an operational and a denotational model, which both form \emph{bialgebras} for the given law $\rho$. Formally, a \emph{$\rho$-bialgebra} $(X,a,c)$ consists of an object $X\in \C$, a $\Sigma$-algebra $a\c \Sigma X\to X$ and a $B$-coalgebra $c\c X\to BX$ such that the left-hand diagram below commutes. A \emph{morphism} from $(X,a,c)$ to a $\rho$-bialgebra $(X',a',c')$ is a $\C$-morphism $h\c X\to X'$ that is both a $\Sigma$-algebra morphism and a $B$-coalgebra morphism, i.e.\ the right-hand diagram commutes:
\[
\begin{tikzcd}
\Sigma X \ar{r}{a} \ar{d}[swap]{\Sigma\langle \id,\,c\rangle} & X \ar{r}{c} & BX \\
\Sigma(X\times BX) \ar{rr}{\rho_X} & & B\Sigmas X \ar{u}[swap]{B\hat a}  
\end{tikzcd}
\qquad\qquad
\begin{tikzcd}
\Sigma X \ar{r}{a} \ar{d}[swap]{\Sigma h} & X \ar{d}{h} \ar{r}{c} & BX \ar{d}{Bh} \\
\Sigma X' \ar{r}{a'} & X' \ar{r}{c'} & BX'  
\end{tikzcd}
\]
We think of a $\rho$-bialgebra $(X,a,c)$ as a \emph{model} of the law $\rho$: the algebra $a\colon \Sigma X\to X$ interprets the operations of the language, and the coalgebra $c\colon X\to BX$ is a transition system whose transitions are given by the operational rules specified by $\rho$. 

The universal property of the initial algebra $(\mS,\iota)$ entails that there
exists a unique $B$-coalgebra structure $\gamma\c \mS\to B(\mS)$ such that
$(\mS,\iota,\gamma)$ is a $\rho$-bialgebra. This is the initial
$\rho$-bialgebra, i.e.\ the initial object in the category of bialgebras and
their morphisms. It is called the \emph{operational model} of $\rho$, as it is thought of as the transition system on terms specified by the rules corresponding to $\rho$. 

 Dually, if $B$ has a final coalgebra $(\nu B,\tau)$, it
uniquely extends to a $\rho$-bialgebra $(\nu B,\alpha,\tau)$. This is the
final $\rho$-bialgebra, and it is called the \emph{denotational model} of $\rho$. We think of $\nu B$ as the object of abstract behaviours of systems of type $B$, and of the denotational model as interpreting the operations of the language at the level of abstract behaviours.

It follows that, both by initiality and finality, there exists a unique bialgebra morphism $\beh_\rho$ from the operational model
$(\mS,\iota,\gamma)$ to the denotational model $(\nu B,\alpha,\tau)$:
\[
\begin{tikzcd}
\Sigma (\mS) \ar{r}{\ini} \ar{d}[swap]{\Sigma \beh_\rho} & \mS \ar[dashed]{d}{\beh_\rho} \ar{r}{\gamma} & B(\mS) \ar{d}{B\beh_\rho} \\
\Sigma(\nu B) \ar{r}{\alpha} & \nu B \ar{r}{\tau} & B(\nu B)  
\end{tikzcd}
\]
The map $\beh_\rho$ gives the denotational semantics of the language: it assigns to each program in $\mS$ its abstract behaviour in $\nu B$. 

 For $\C=\Set$ and a polynomial functor $\Sigma$, the fact that $\beh_\rho$ is a
 $\Sigma$-algebra morphism immediately implies that \emph{behavioural equivalence},
 namely the relation $\equiv$ on $\mS$ given by \[p\equiv q \quad\iff\quad
 \beh_\rho(p)=\beh_\rho(q),\]
is a congruence on the initial algebra $\mS$. This means that the behaviour of a program term $\f(p_1,\ldots,p_n)$ is completely determined by the behaviour of its subterms $p_1,\ldots,p_n$; in other words, the denotational semantics
 induced by $\rho$ is \emph{compositional}.

\section{Combinatory logic}
\label{sec:sets}
Our goal in the sequel is to extend Turi and Plotkin's abstract GSOS in a way that \emph{higher-order} languages, such as the $\lambda$-calculus, can be modeled and reasoned about in the abstract categorical framework. We ease the reader into our theory by first considering a combinatory logic, the \emph{SKI
calculus}, originally introduced by~\cite{10.2307/2370619}. It forms a computationally complete fragment of the untyped $\lambda$-calculus that does not feature variables, which for now allows us to bypass the technical intricacies arising from binding and substitution; these are treated later in \Cref{sec:lam}. Specifically, we
work with a variant of the SKI calculus featuring auxiliary
operators, first introduced by~\cite{GianantonioRPO}. We refer to this variant as \emph{extended} combinatory logic, or $\xCL$ for short. It
is as expressive as the standard calculus but allows for a simpler coalgebraic
presentation.

\subsection{Extended combinatory logic}\label{sec:unary-ski}
 The set $\skitermu$ of $\xCL$-terms
is generated by the grammar
\[
  \skitermu \Coloneqq S \mid K \mid I \mid \appp(\skitermu, \skitermu) \mid
  S'(\skitermu) \mid K'(\skitermu) \mid S''(\skitermu,\skitermu).
\]
The binary operation $\appp$ corresponds to function application; we usually write $s\,t$ for $\appp(s,t)$. The standard combinators (constants) $S$, $K$ and $I$ represent the $\lambda$-terms \[S=\lambda x.\,\lambda y.\,\lambda z.\, (x\app z)\app (y\app z),\qquad K=\lambda x.\, \lambda y.\, x, \qquad I=\lambda x.\, x.\] The unary operators
$S'$ and $K'$ capture application of~$S$ and~$K$, respectively, to one
 argument: $S'(t)$ behaves like $S\app t$, and $K'(t)$
behaves like $K\app t$.
Finally, the binary operator $S''$ is meant to
capture application of $S$ to two arguments: $S''(s,t)$ behaves like
$(S\app s)\app t$. In this way, the behaviour of each combinator can
be described in terms of \emph{unary} higher-order functions; for example, the
behaviour of $S$ is that of a function taking a term $t$ to $S'(t)$.

The small-step operational semantics of $\xCL$ is given by the rules displayed in 
\Cref{fig:skirules}, where $p,p',q,t$ range over terms in $\skitermu$. The rules determine a labeled transition system 
\begin{equation}\label{eq:xcl-lts}\to~
\subseteq \skitermu \product (\skitermu + \{\_\}) \product \skitermu
\end{equation} by
induction on the structure of terms in $\skitermu$, with $\{\_\}$ denoting the lack
of a transition label. In this instance the set $\skitermu$ of labels coincides
with the state space of the transition system. Note that every $t\in\skitermu$ either admits a single
unlabeled transition $t\to t'$ or a family of labeled transitions
${(t\xto{s} t_s)_{s\in\skitermu}}$; thus, the transition system is deterministic. The intention is that unlabeled transitions
correspond to \emph{reductions} (i.e.\ computation steps) and that labeled transitions represent
\emph{higher-order behaviour}: a transition $t\xto{s} t_s$ means that the program $t$ acts as a function that outputs $t_s$ on input $s$, where $s$ is itself a program term.
An important observation towards specifying an abstract format in
\Cref{sec:ho} is that labeled transitions are uniformly defined for every
input $s \in \skitermu$, in that operators do not inspect the structure
of $s$.

\begin{remark}
  The incorporation of the auxiliary operators $S'$, $S''$ and $K'$ leads to a
  more consistent semantics, where all combinators accept exactly one argument.
  In addition, this allows us to bypass the issue of a program equivalence
  potentially telling combinators apart based on the number of their input arguments.
  Interestingly, \cite{GianantonioRPO} also introduce the auxiliary operators $S'$,
  $S''$ and $K'$ to avoid precisely this issue (see
  \textsection 5.2 in \emph{op.~cit.}). The added operators do not alter the functional
behaviour of programs compared to the standard $\SKI$ calculus, except for
adding more unlabeled transitions. For example, the conventional rule $S\, t\, s\,
e\to (t\,e)\,(s\,e)$ for the $S$-combinator (see e.g.~\cite{hindley2008lambda})
is rendered as the following sequence of transitions in $\xCL$:
\[
S\,t\,s\,e\to S'(t)\,s\,e\to S''(t,s)\, e\to (t\,e)\,(s\,e).
\]
The first transition exists because (1) $S\xto{t} S'(t)$ by the rule for $S$; (2) therefore $S\, t \to S'(t)$ by the second rule of application; (3) therefore $S\,t\,s\,e\to S'(t)\,s\,e$ by the first rule of application. Similarly for the second and third transition.
\end{remark}

\begin{figure*}[t]
\begin{gather*}
\frac{}{S\xto{t}S'(t)}
\qquad
\frac{}{S'(p)\xto{t}S''(p,t)}
\qquad
\frac{}{S''(p,q)\xto{t}(p\app t)\app (q\app t)}
\\[1ex]
\frac{}{K\xto{t}K'(t)}
\qquad
\frac{}{K'(p)\xto{t}p}
\qquad
\frac{}{I\xto{t}t}
\qquad
\frac{p\to p'}{p \app q\to p' \app q}
\qquad
\frac{p\xto{q} p'}{p \app q\to p'}
\end{gather*}
\caption{Operational semantics of the $\xCL$ calculus.}
\label{fig:skirules}
\end{figure*}

Like every labeled transition system, $\skitermu$ comes with a notion of \emph{(strong) bisimilarity}; see \Cref{ex:lts}. Explicitly, a relation
$R \subseteq \skitermu \product \skitermu$ is a \emph{bisimulation} if, for all $p \mathbin{R} q$,
\begin{enumerate}
\item $p\to p' \implies \exists q'.\, q\to q' \wedge p' \mathbin{R} q'$;
\item $q\to q' \implies \exists p'.\, p\to p' \wedge p' \mathbin{R} q'$;
\item $\forall t\in\skitermu,~p\xto{t} p' \implies \exists q'.\, q\xto{t} q' \wedge p'
  \mathbin{R} q'$;
\item $\forall t\in\skitermu,~q\xto{t} q' \implies \exists p'.\, p\xto{t} p' \wedge p'
  \mathbin{R} q'$.
\end{enumerate}
We write $\sim$ for the greatest bisimulation, viz.\ the union of all
bisimulations (which is itself a bisimulation), and call two programs $p$ and
$q$ \emph{bisimilar} if $p\sim q$. The relation $\sim$ identifies programs that are indistinguishable in terms of computation steps and functional behaviour, in that they
produce bisimilar terms given the same input.

\begin{example}
  \label{ex:ski}
The terms $(S\app K) \app I$ and $(S \app K) \app K$ transition as follows:
\[
\arraycolsep=1.4pt
 \begin{array}{ccccccccccc}
    (S \app K) \app I & \to & S'(K)\app I&\to & S''(K,I)&\xto{t}& (K \app t) \app (I \app
    t)&\to& K'(t) \app (I \app t)&\to& t,\\
    (S \app K) \app K&\to& S'(K)\app K&\to& S''(K,K)&\xto{t}& (K \app t) \app (K \app
    t)&\to& K'(t) \app (K \app t)&\to& t.
  \end{array}
\]
  It follows that $(S\app K)\app I \sim (S\app K) \app K$.
\end{example}

The set $\skitermu$ of $\xCL$-terms forms the initial algebra for
the algebraic signature
 \[ \Sigma=\{\,S/0,K/0,I/0,S'/1,K'/1,S''/2,\appp/2\,\}, \]
with arities as indicated, and bisimilarity is respected by all operations of the language:
\begin{proposition}[Compositionality of $\xCL$]
  \label{prop:skicong1}
  The bisimilarity relation $\sim$ is a congruence.
\end{proposition}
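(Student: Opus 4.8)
The statement asserts that $\sim$ is preserved by every operation of the signature $\Sigma=\{S,K,I,S',K',S'',\appp\}$; for the three constants this is vacuous, so the content is closure under the unary operators $S',K'$ and the binary operators $S'',\appp$. The plan is the standard one for bisimilarity: let $R$ be the least congruence on $\skitermu$ containing $\sim$ (equivalently, the least equivalence relation that contains $\sim$ and is closed under all operations of $\Sigma$). Since $\sim\seq R$ by construction, it suffices to prove the reverse inclusion $R\seq\sim$; as $\sim$ is the greatest bisimulation, this reduces to showing that $R$ is itself a bisimulation. Once this is done we obtain $R=\sim$, and since $R$ is a congruence, so is $\sim$.

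Before the main argument I would record two structural facts about the transition system of \Cref{fig:skirules}. First, determinism in the form already noted in the text: by induction on terms, every $t\in\skitermu$ is either a \emph{value} --- one of $S,K,I,S'(p),K'(p),S''(p,q)$, which carry a family of labeled transitions and no unlabeled one --- or an application, which carries exactly one unlabeled transition and no labeled one. Second, and crucially, a \emph{uniformity lemma}: every value $v$ admits a context $C_v$, depending on $v$ but not on the input, such that $v\xto{s}C_v[s]$ for all $s\in\skitermu$ (e.g.\ $C_S=S'(\hole)$, $C_{S''(p,q)}=(p\,\hole)\,(q\,\hole)$, $C_{K'(p)}=p$, $C_I=\hole$). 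Combined with the fact that $R$, being a congruence, is closed under contexts, this yields: if $s\mathbin R s'$ and $v\xto{s}v_s$, $v\xto{s'}v_{s'}$, then $v_s=C_v[s]\mathbin R C_v[s']=v_{s'}$.

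The core is to show $R$ is a bisimulation by induction on the derivation of a pair $(a,b)\in R$. The cases for the equivalence-closure rules are routine (reflexivity and symmetry are immediate; transitivity uses the transfer property of the two sub-pairs together with transitivity of $R$), and the base case $a\sim b$ holds because $\sim$ is a bisimulation with all witnesses lying in $\sim\seq R$. In the operation-closure case $a=\f(\vec a)$, $b=\f(\vec b)$ with $a_i\mathbin R b_i$, the operators $S',K',S''$ are easy: each has only labeled transitions whose targets are built from the arguments and the input by applying constructors, so the matching targets are $R$-related directly by closure of $R$ under the operations.

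The main obstacle is the application case, $a_1\,a_2\mathbin R b_1\,b_2$ with $a_1\mathbin R b_1$ and $a_2\mathbin R b_2$, precisely because application is where the higher-order rule $\frac{p\xto{q}p'}{p\,q\to p'}$ feeds the argument in as the transition label. A transition via the reduction rule $\frac{p\to p'}{p\,q\to p'\,q}$ is handled by the induction hypothesis on $(a_1,b_1)$ and closure of $R$ under $\appp$. A transition $a_1\,a_2\to a_1'$ via the higher-order rule means $a_1\xto{a_2}a_1'$; the induction hypothesis on $(a_1,b_1)$ yields $b_1\xto{a_2}b_1''$ with $a_1'\mathbin R b_1''$, which in particular shows $b_1$ is a value, so it also fires $b_1\xto{b_2}b_1'$ and hence $b_1\,b_2\to b_1'$. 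It remains to connect $b_1''$ (the output on label $a_2$) with $b_1'$ (the output on label $b_2$): here the uniformity lemma applied to $a_2\mathbin R b_2$ gives $b_1''\mathbin R b_1'$, and transitivity of $R$ delivers $a_1'\mathbin R b_1'$. This last step --- transporting the matched transition across the change of label from $a_2$ to $b_2$ --- is the genuinely higher-order point, and the part I expect to require the most care, both in isolating the uniformity lemma and in ensuring $R$ is taken to be a full, transitive congruence so that the two $R$-links compose.
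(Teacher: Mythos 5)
Your proof is correct, and it is worth observing that your relation $R$ (the congruence closure of $\sim$) coincides exactly with the paper's $\hat\sim^\star$, the transitive closure of its one-hole-context closure $\hat\sim$; so both arguments ultimately show that the same relation is a bisimulation, and both turn on the same genuinely higher-order insight. The differences are in the decomposition. The paper proves, by structural induction on one-hole contexts $C[\cdot]$, that $\hat\sim$ is a bisimulation \emph{up to transitive closure}, and then appeals to soundness of that up-to technique, which it only asserts in a footnote; you instead do a rule induction on derivations of the congruence closure, so transitivity is an ordinary induction case rather than an external up-to principle --- in effect you inline the soundness argument the paper omits, which makes your proof self-contained. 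You also isolate explicitly, as a ``uniformity lemma'', what the paper uses inline in its last case: every value $v$ satisfies $v\xto{s}C_v[s]$ for a term $C_v$ independent of the label $s$, possibly with a repeated hole (e.g.\ $C_{S''(p,q)}=(p\app\hole)\app(q\app\hole)$), and it is precisely this repetition that forces transitivity into both proofs --- the paper via $\hat\sim^\star$, you by taking $R$ transitive from the outset. One point you rightly flag and must keep precise: your $C_v$ are multi-hole, so the step from $s\mathbin{R}s'$ to $C_v[s]\mathbin{R}C_v[s']$ needs closure of $R$ under substitution into arbitrary terms, which follows from reflexivity together with operation-compatibility of $R$. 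The trade-off between the two routes: the paper's version is shorter modulo standard up-to machinery and is organized around contexts, while yours is somewhat longer but entirely elementary and does not rest on the unproven footnote.
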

\begin{proof}
  In the following, by a \emph{context} we mean a term
  $C\in \Sigmas\{\cdot\}$ in which the variable~`$\cdot$' (the `hole' of the context) appears at most
  once. We denote a context by $C[\cdot]$, and we write $C[p]=C[p/\cdot]\in \skitermu$ for the closed term
  obtained by substituting $p\in \skitermu$ for the hole in $C[\cdot]$. An
  equivalence relation $R\subseteq\skitermu\times\skitermu$ is a
  congruence if and only if the following relation is contained in $R$: 
  \begin{align*}
    \hat R = \{(C[p],C[q])\in\skitermu\times\skitermu\mid \text{$C[\cdot]$ is
      a context and $p \mathbin{R} q$}\}.
  \end{align*}
  Thus, our task is to prove ${\hat\sim}\subseteq{\sim}$. To this end,
  it suffices to prove that $\hat\sim$ is a bisimulation up to
  transitive closure. This
means that for every context~$C[\cdot]$ and
  $p,q\in\skitermu$ such that $p \sim q$,
\begin{itemize}
\item either there exist $p',q'\in \skitermu$ such that $C[p]\to p'$, $C[q]\to q'$ and $p' \hat\sim^\star q'$,
\item or for every $t\in\skitermu$, there exist $p',q'\in \skitermu$ such that $C[p]\xto{t} p'$, $C[q]\xto{t} q'$
    and~$p' \hat\sim^\star q'$,
\end{itemize}
where $\hat\sim^\star$ denotes the transitive closure of
$\hat\sim$. This then implies that $\hat\sim^\star$ is a bisimulation.~\footnote{In other words, in $\xCL$, \emph{bisimulation up to
      transitive closure} is \emph{sound} for bisimilarity. The proof is simple,
  and we omit details. For more information on {up-to}
  techniques, see e.g. \cite{SANGIORGI_1998,DBLP:conf/sofsem/RotBR13}.}
Consequently we obtain ${\hat\sim}\seq {\hat\sim^\star}\seq {\sim}$ because $\sim$ is
the greatest bisimulation.

We proceed by structural induction on $C[\cdot]$. The cases where~$C[\cdot]$ is
not an application term are straightforward. For instance, for
$C[\cdot] = S''(r,C'[\cdot])$, the property in question can be read from the
diagram
\begin{equation*}
\begin{tikzcd}[column sep=normal, row sep=normal]
S''(r,C'[p])
  \rar[phantom,description,"\hat\sim"]
  \dar["t"'] &
S''(r,C'[q])
  \dar["t"]\\
(r \app t)\app (C'[p] \app t)
  \rar[phantom,description,"\hat\sim^\star"] &
(r \app t)\app (C'[q] \app t)
\end{tikzcd}
\end{equation*}
In fact, the two terms on the bottom are even related by $\hat\sim$. For
application terms, we distinguish cases as follows.
\begin{itemize}
\item If $C[\cdot]=C'[\cdot] \app r$ and the transition of $C[p]$ comes from an
  unlabeled transition ${C'[p]\to p'}$, then the transition of $C[p]$
  is $C'[p]\app r\to p' \app r$, and by induction, we have~$q'$ such that
  $C'[q]\to q'$ and $p' \mathbin{\hat\sim^\star} q'$. This implies $C[q]=C'[q]\app r\to
  q' \app r$ and moreover $p'\app r \mathbin{\hat\sim^\star} q' \app r$; for the latter we use that the relation $\hat\sim^\star$ is a congruence, which follows from the fact that $\hat\sim$ is a congruence by definition and transitive closures of congruences are congruences.
\item If $C[\cdot]=C'[\cdot]\app r$ and the transition of $C[p]$ comes from a labeled
  transition $C'[p]\xto{r} p'$, then the transition of $C[p]$ is
  $C'[p]\app r\to p'$. By induction, we have~$q'$ such that
  $C'[q]\xto{r} q'$ and $p' \mathbin{\hat\sim^\star} q'$, and then $C'[q]\app r\to
  q'$.
\item If $C[\cdot]=r\app C'[\cdot]$ and the transition of $C[p]$ comes from an unlabeled
  transition $r\to r'$, then we have
  $r \app C'[p]\to r'\app C'[p]$ and $r\app C'[q] \to r'\app C'[q]$ which completes the case since $r'\app
  C'[p] \mathbin{\hat\sim} r'\app C'[q]$, so $r'\app
  C'[p] \mathbin{\hat\sim^\star} r'\app C'[q]$.
\item Finally, suppose that $C[\cdot]=r\app C'[\cdot]$ and the transition of $C[p]$
  comes from a labeled transition of~$r$. According to the rules in
  \Cref{fig:skirules}, for every $t\in\skitermu$ we have
  $r\xto{t} r'[t/x]$ for some term $r'$ with one free variable $x$
  such that $r'$ depends only on $r$ but not on $t$. Hence,
  $r\xto{C'[p]} r' [C'[p]/x]$, $r\app C'[p]\to r' [C'[p]/x]$ and
  similarly $r\app C'[q]\to r' [C'[q]/x]$. Since
  $C'[p]\mathbin{\hat\sim} C'[q]$, we conclude that
  $r'[C'[p]/x] \mathbin{\hat\sim^\star} r' [C'[q]/x]$. (Note that the
  variable $x$ can appear multiple times in $r'$, so we generally do
  not have $r'[C'[p]/x] \mathbin{\hat\sim} r' [C'[q]/x]$. This explains the need for the transitive closure
  ${\hat\sim^\star}$.)  \hfill$\qed$
\end{itemize}\def\qed{}
\end{proof}
The proof of \Cref{prop:skicong1} is laborious, as it requires tedious
case distinctions and a carefully chosen up-to technique, although the latter
could have been avoided by working with multi-hole contexts. The present proof is 
also tailored to a specific language, one among many systems exhibiting
higher-order behaviour. In the sequel, we describe an abstract,
categorical representation of such higher-order systems that
guarantees the compositionality of the semantics. In particular, we
shall demonstrate that \Cref{prop:skicong1} emerges as an instance of a
general compositionality result (\Cref{th:main}).

\subsection{A simple higher-order rule format}
\label{sec:ho}
From a coalgebraic perspective, the deterministic labeled
transition system \eqref{eq:xcl-lts} on $\xCL$-terms forms a coalgebra
\begin{equation}
  \label{eq:skiutr}
  \gamma \c \skitermu \to \skitermu +
  \skitermu^{\skitermu}
\end{equation}
for the set functor $Y\mapsto Y+Y^{\skitermu}$, where the two summands of the
codomain represent unlabeled and labeled transitions, respectively. Note that the
coalgebra \eqref{eq:skiutr} can be regarded as an instance of an \emph{applicative
  transition system} (see \cite{Abramsky:lazylambda}) with $\beta$-reduction.
We can abstract away from the set~$\skitermu$ of labels  and
consider $\gamma$ as a system of the form
\begin{equation*}
  Y \to Y + Y^{X}.
\end{equation*}
For higher-order systems such as $\gamma \c \skitermu \to \skitermu +
\skitermu^{\skitermu}$, we expect $X = Y$, underlining the fact that inputs
come from the state space of the system. Note that the assignment
\begin{equation}
  \label{eq:simplehigher}
  B(X,Y) = Y + Y^{X} \c \Set^{\opp} \product \Set \to \Set
\end{equation}
gives rise to a \emph{bifunctor} that is contravariant in $X$ and
covariant in $Y$. On the side of syntax, the signature of
$\xCL$ yields the polynomial endofunctor $\Sigma \c \Set \to \Set$ given by
\begin{equation*}
  \Sigma X = \coprod\nolimits_{\f\in\{S,K,I,S',K',S'',\appp\}} X^{\ar(\f)},
\end{equation*}
As a first step towards our higher-order abstract GSOS framework, we next introduce a simple concrete
rule format for higher-order combinatory calculi that generalizes $\xCL$.

\begin{definition}[$\HO$ rule format]  \label{def:hoformat}
Fix the countably infinite set
 \[\V = \{\,x\,\} \;\cup\; \{\,x_{i}, y_i, y_i^z \;:\; i\in \{1,2,3,\ldots\} \text{ and } z\in \{x, x_1,x_2,x_3,\ldots\}\,\}.\]
 of metavariables and an algebraic signature $\Sigma$. 
\begin{enumerate}
\item An \emph{$\HO$ rule} for an operation symbol
$\f\in \Sigma$ is an expression of the form
\begin{equation}\label{eq:rule-red}
  \inference{(x_j\to y_j)_{j\in
      W}\qquad(\goesv{x_{i}}{y^{z}_{i}}{z})_{i\in
      \ol{W},\,z \in \{x_1,\ldots,x_n\}}}{\f(x_1,\ldots,x_{n})\to
    t}
\end{equation}
or
\begin{equation}\label{eq:rule-nonred}
  \inference{(x_j\to y_j)_{j\in W}\qquad(\goesv{x_{i}}{y^{z}_{i}}{z})_{i\in
      \ol{W},\,z \in \{x,x_1,\ldots,x_n\}}}{\goesv{\f(x_1,\ldots,x_{n})}{t}{x}}
\end{equation}
where $x,x_i,y_i,y_i^{x_j}\in\V$, $n=\ar(\f)$, $W\seq \{1,\ldots, n\}$, $\ol{W}=\{1,\ldots,n\}\smin W$, and $t\in \Sigmas \V$ is a term depending only on the variables occurring in the premise; that is, in the rule \eqref{eq:rule-red} the term $t$ can depend on the variables $x_i$ ($i=1,\ldots,n$), $y_j$ ($j\in W$), and $y_i^{x_j}$ ($i\in \ol{W}$, $j=1,\ldots,n$), and in \eqref{eq:rule-nonred} it can additionally depend on $x$ and $y_i^x$ ($i\in \ol{W}$).
\item  An \emph{$\HO$ specification} for $\Sigma$ is a set of $\HO$
  rules such that for
  each $n$-ary $\f\in \Sigma$ and each $W \seq
  \{1,\ldots,n\}$, there is exactly one rule of the form \eqref{eq:rule-red} or \eqref{eq:rule-nonred} in the set.
\end{enumerate}
\end{definition}

Intuitively, for every given rule the set $W\seq \{1,\ldots.,n\}$ determines which of
the operands of $\f(x_1,\ldots,x_n)$ perform a reduction and which exhibit
higher-order behaviour, i.e.~behave like functions. For
$i\in \ol{W}$, the format dictates that said
functions can be applied to a left-side variable $x_{j}$ or the input
label~$x$, and then the output $x_{i}(x_{j}) = y_{i}^{x_j}$ or
$x_i(x)=y_i^x$ can be used in the conclusion term $t$. The uniformity is apparent:
rules cannot make any assumptions on the input label $x$ or on other
left-side variables that are used as arguments on the premises.

\begin{example}\label{ex:ski-to-ho}
  The rules of $\xCL$ in \Cref{fig:skirules} form an $\HO$ specification modulo suitable renaming of variables and adding dummy premises. For illustration, let us consider the second rule for application. Using the variables $x_1, x_2, y_1^{x_2}$ instead of $p,q,p'$, this rule can be rewritten as
\[ \inference[\texttt{app2}]{x_1\xto{x_2} y_1^{x_2}}{x_1 \app x_2\to y_1^{x_2}}. \]
This is not yet an $\HO$ rule, since the latter require a complete list of premises. However, by filling in the missing premises in every possible way, \texttt{app2} is equivalent to the following two $\HO$ rules. These rules correspond to $y_2$ being a reducing term or to $y_2$ computing a function, respectively, that is, to the choices $W=\emptyset$ and $W=\{2\}$ in \eqref{eq:rule-red}:
  \begin{gather*}
    \inference[\texttt{app2-a}]{\goesv{x_1}{y_1^{x_1}}{x_1} & \goesv{x_1}{y_1^{x_2}}{x_2} &
      \goesv{x_2}{y_2^{x_1}}{x_1} & \goesv{x_2}{y_2^{x_2}}{x_2}} {\goes{x_1 \app x_2}{y_1^{x_2}}} \\
  \inference[\texttt{app2-b}]{\goesv{x_1}{y_1^{x_1}}{x_1} & \goesv{x_1}{y_1^{x_2}}{x_2} &
     \goes{x_2}{y_2}}{\goes{x_1 \app x_2}{y_1^{x_2}}}
  \end{gather*}
  Similarly, the combinator rule
  \[ \frac{}{S'(p)\xto{t}S''(p,t)} \]
  is turned into the two rules
  \[ \frac{x_1\to y_1}{S'(x_1)\xto{x} S''(x_1,x)}\qquad \frac{x_1\xto{x_1} y_1^{x_1}}{S'(x_1)\xto{x} S''(x_1,x)}  \]  
 by using the variables $x_1, x$ instead of $p,t$ and adding the required dummy premises for $x_1$.
\end{example} 
Generalizing the case of $\xCL$, every $\HO$ specification induces a $B(\mS,-)$-coalgebra \begin{equation}\label{eq:can-model-ho}\gamma\colon \mS\to  \mS+\mS^\mS\end{equation} carried by the initial algebra $\mS$ of closed $\Sigma$-terms that runs program terms according to the given $\HO$ rules. Again, its bisimilarity relation is compatible with all language operations:
\begin{proposition}[Compositionality of $\HO$]\label{prop:cong-ho} For every $\HO$ specification, the bisimilarity relation $\sim$ of the induced coalgebra $\gamma\colon \mS\to  B(\mS,\mS)$ is a congruence on $\mS$. 
\end{proposition}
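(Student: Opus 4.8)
The plan is to generalise the proof of \Cref{prop:skicong1} from $\xCL$ to an arbitrary $\HO$ specification, replacing the hard-coded case analysis on the rules of \Cref{fig:skirules} by a case analysis driven by the shape of a generic $\HO$ rule \eqref{eq:rule-red}--\eqref{eq:rule-nonred}. As before, a \emph{context} is a term $C[\cdot]\in\Sigmas\{\cdot\}$ with at most one occurrence of the hole, and I write $\hat\sim$ for the contextual closure $\{(C[p],C[q]) : C[\cdot]\text{ a context},\ p\sim q\}$. Since $\sim$ is an equivalence relation, it is a congruence iff $\hat\sim\seq{\sim}$, so the goal reduces to this inclusion, which I would obtain exactly as in the first-order case: prove that $\hat\sim$ is a \emph{bisimulation up to transitive closure}, whence $\hat\sim^\star$ is a genuine bisimulation and ${\hat\sim}\seq{\hat\sim^\star}\seq{\sim}$ because $\sim$ is the greatest bisimulation. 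Along the way I would record two facts used repeatedly: (i) since $\hat\sim$ is reflexive and closed under wrapping a single-hole context, its transitive closure $\hat\sim^\star$ is a full \emph{congruence}, i.e.\ $t[\vec a]\mathbin{\hat\sim^\star}t[\vec b]$ whenever $a_i\mathbin{\hat\sim^\star}b_i$ (change one occurrence at a time and chain); and (ii) the induced coalgebra $\gamma\colon\mS\to\mS+\mS^\mS$ is a function, so the system is deterministic: every closed term either performs a unique reduction or else is a function exhibiting a labeled transition for every input.

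The core is a structural induction on the context $C[\cdot]$. The base case $C[\cdot]=\cdot$ is immediate since $\sim$ is itself a bisimulation. For the inductive step write $C[\cdot]=\f(s_1,\dots,s_{m-1},s_m[\cdot],s_{m+1},\dots,s_n)$, where the hole sits in the single operand $s_m[\cdot]$ and all other $s_j$ are closed; thus $C[p]$ and $C[q]$ differ only in their $m$-th operand, with $s_m[p]\mathbin{\hat\sim}s_m[q]$. By the induction hypothesis applied to the smaller context $s_m[\cdot]$, the terms $s_m[p]$ and $s_m[q]$ match each other's transitions up to $\hat\sim^\star$, so by determinism they share their transition type; hence the set $W\seq\{1,\dots,n\}$ of reducing operands is the same for $C[p]$ and $C[q]$, and by the well-formedness clause of \Cref{def:hoformat} a \emph{unique} rule of the form \eqref{eq:rule-red} or \eqref{eq:rule-nonred} fires for both terms. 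It then remains to show that the two instances of the conclusion term $t$ are $\hat\sim^\star$-related, which by congruence of $\hat\sim^\star$ (fact (i)) reduces to checking that each value substituted for a metavariable of $t$ on the $p$-side is $\hat\sim^\star$-related to its $q$-side counterpart.

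The substituted values are the operands $s_i$ (for $x_i$), their reducts $y_j$ ($j\in W$), and their function outputs $y_i^{z}$ ($i\in\ol W$); only those mentioning the index $m$ can differ. The one technical ingredient, proved by induction on terms and expressing the \emph{uniformity} of the format (operators never inspect the input label), is that every labeled transition of a closed term has the form $\goesv{s}{r[u/x]}{u}$ for a template $r$ depending on $s$ but not on the input $u$. With this I would dispatch the subcases: reducts ($j\in W$) are matched by the induction hypothesis; for $y_i^{x_k}$ with $k\neq m$ the label $s_k$ is a fixed closed term, so the outputs agree trivially (if $i\neq m$) or match by the induction hypothesis at that label (if $i=m$); and for $y_i^{x_m}$ with $i\neq m$, uniformity rewrites the two outputs as $r_i[s_m[p]/x]$ and $r_i[s_m[q]/x]$, which are $\hat\sim^\star$-related by congruence since $s_m[p]\mathbin{\hat\sim}s_m[q]$.

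I expect the genuine obstacle to be the self-application subcase $i=k=m$ (with $m\in\ol W$), where a \emph{single} operand supplies both the function and its argument, so that both the function term and the label change between $C[p]$ and $C[q]$. Here I would split the comparison into two steps: first fix the label at $s_m[p]$ and use the induction hypothesis (same label, matched functions) to get $r^{p}[s_m[p]/x]\mathbin{\hat\sim^\star}r^{q}[s_m[p]/x]$, then change the argument from $s_m[p]$ to $s_m[q]$ inside the fixed template $r^{q}$ by congruence to obtain $r^{q}[s_m[p]/x]\mathbin{\hat\sim^\star}r^{q}[s_m[q]/x]$, and finally compose. Because $x$ may occur several times in $r^{q}$, the second step genuinely needs the transitive closure rather than $\hat\sim$ alone—precisely the phenomenon flagged in the last case of the proof of \Cref{prop:skicong1}—which is why the whole argument is organised as a bisimulation up to transitive closure. (This direct argument is later subsumed by the general compositionality theorem, \Cref{th:main}.)
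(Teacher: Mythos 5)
Your proof is correct, but it is not the route the paper actually takes: the paper gives no direct proof of \Cref{prop:cong-ho} at all --- it remarks only that the argument would be similar to that of \Cref{prop:skicong1} and instead obtains the proposition as an instance of the general compositionality theorem (\Cref{th:main}), whose hypotheses are verified for $\Set$, polynomial $\Sigma$, and $B(X,Y)=Y+Y^X$ in \Cref{exa:xcl}. What you have written out is precisely the omitted direct generalization, and the two points where you go beyond the \xCL argument are exactly the right ones. First, the uniformity property --- every labeled transition of a closed term has the shape $\goesv{s}{r[u/x]}{u}$ with the template $r$ independent of the label $u$ --- can no longer be read off a finite rule table and must be proved by structural induction; your observation that the label enters the instantiated conclusion term only through $x$ and the $y_i^x$, which are themselves templates by the inner induction hypothesis, is what makes that induction go through. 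Second, the self-application case $i=k=m$ is permitted by the $\HO$ format (cf.\ rule \texttt{app2-a} in \Cref{ex:ski-to-ho}) but never occurs in the conclusions of the \xCL rules, so it has no counterpart in the proof of \Cref{prop:skicong1}; your two-step argument --- match the two functions at the common label $s_m[p]$ by the induction hypothesis, then move the label from $s_m[p]$ to $s_m[q]$ inside the fixed template by congruence of $\hat\sim^\star$, and compose by transitivity --- closes it correctly. One point worth stating carefully: $\hat\sim$ itself is \emph{not} closed under simultaneous substitution, since its contexts have a single hole; only $\hat\sim^\star$ is, via the one-occurrence-at-a-time chaining of your fact (i), and both of the last two cases rely on this. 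The trade-off between the two routes is clear: your argument is elementary and self-contained but tied to $\Set$ and this particular behaviour bifunctor, whereas the paper's detour through \Cref{th:main} requires the categorical machinery of \Cref{sec:hogsos} but then yields this proposition, the nondeterministic calculus of \Cref{sec:nd-ski}, and the $\lambda$-calculus results uniformly as instances.
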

The proof is similar to that of \Cref{prop:skicong1}; we omit it because it is subsumed by our general compositionality result for higher-order abstract GSOS (\Cref{th:main}).

We are now prepared to make the key observation leading to our notion of \emph{higher-order GSOS law} developed in \Cref{sec:hogsos}. Recall from \Cref{sec:abstract-gsos} that GSOS specifications are in bijective correspondence with GSOS laws of a polynomial functor $\Sigma$ over $BX=(\Pow_\omega X)^L$, i.e.\ natural
transformations of the form
\[
  \rho_X\colon \Sigma(X \product (\mypowfin X)^{L}) \to (\mypowfin(\Sigmas X))^{L} \qquad (X\in \Set).
\]
Similarly, $\HO$
specifications correspond to certain (di)natural transformations that distribute a polynomial functor $\Sigma$ over the behaviour bifunctor $B(X,Y)=Y+Y^X$:

\begin{theorem}
  \label{prop:yon1}
  For every algebraic signature $\Sigma$, there is a bijective correspondence between $\HO$ specifications for $\Sigma$ and families of maps
\begin{equation}
  \label{eq:simpleho}
  \rho_{X,Y} \c \Sigma(X \times B(X,Y))\to
  B(X,\Sigmas(X+Y))\qquad (X,Y\in \Set)
\end{equation}
dinatural in $X$ and natural in $Y$.
\end{theorem}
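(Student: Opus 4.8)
The plan is to prove the bijection by a Yoneda-style \emph{generic element} argument, carried out separately for each operation symbol and each choice of reducing arguments, after first splitting the domain bifunctor into manageable pieces. Since $\Set$ is distributive and $\Sigma$ is polynomial, I would rewrite the domain as a coproduct
\[ \Sigma(X\times B(X,Y)) = \coprod_{\f\in\Sigma}\bigl(X\times(Y+Y^X)\bigr)^{\ar(\f)} \cong \coprod_{\f\in\Sigma}\ \coprod_{W\seq\{1,\ldots,\ar(\f)\}} X^{\ar(\f)}\times Y^{W}\times (Y^{X})^{\ol W}, \]
naturally in $Y$ and dinaturally in $X$, where $W$ collects the arguments landing in the reduction summand $Y$ and $\ol W$ those in the function summand $Y^X$. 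Because the coproduct injections are (di)natural and jointly epic, a family $\rho_{X,Y}$ is dinatural in $X$ and natural in $Y$ if and only if each of its restrictions $\rho^{\f,W}_{X,Y}$ to a summand is. Thus the task reduces to classifying, for each pair $(\f,W)$ with $n=\ar(\f)$, the families $\rho^{\f,W}_{X,Y}\c X^n\times Y^{W}\times (Y^X)^{\ol W}\to B(X,\Sigmas(X+Y))$ dinatural in $X$ and natural in $Y$ — which on the format side matches the requirement that a specification contain exactly one rule per pair $(\f,W)$.

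Next I would deal with the covariant variable. Note $Y^{W}\times(Y^X)^{\ol W}\cong Y^{P_X}$ with $P_X:=W+\ol W\times X$, so in the variable $Y$ the domain is a copower of the representable $\Set(P_X,\argument)$. By the Yoneda lemma (for fixed $X$), $\rho^{\f,W}_{X,\argument}$ is then determined by the map $\phi_X\c X^n\to B(X,\Sigmas(X+P_X))$ obtained by evaluating at the generic element $\id_{P_X}$. I would further show that the choice of codomain summand of $B$ is a single bit depending only on $(\f,W)$: composing $\rho^{\f,W}$ with the natural coproduct-indicator $B(X,\argument)\to 2$ and using naturality in $Y$ (collapse $Y\to 1$) removes all dependence on the $Y$- and function-arguments, leaving a purely covariant map into the constant functor $2$, which dinaturality in $X$ forces to be constant. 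This cleanly separates reduction rules (left summand $\Sigmas(X+Y)$) from labeled rules (right summand $\Sigmas(X+Y)^X$).

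Fixing the mode, I would use dinaturality in $X$ to reduce $\phi_X$ to its value at a canonical finite object: $X_0=\{x_1,\dots,x_n\}$ in the reduction case, and $X_0=\{x,x_1,\dots,x_n\}$ with a fresh label $x$ in the labeled case, evaluated at the generic tuple $(x_1,\dots,x_n)$ and, for labeled rules, at the generic label $x$. The resulting value is an element $t\in\Sigmas(X_0+P_{X_0})$, and $X_0+P_{X_0}$ is \emph{exactly} the variable set permitted in the conclusion of \eqref{eq:rule-red}, resp.\ \eqref{eq:rule-nonred}: the arguments $x_i$, the reducts $y_j$ ($j\in W$), and the applications $y_i^{z}$ ($i\in\ol W$) with $z$ ranging over $\{x_1,\dots,x_n\}$, resp.\ $\{x,x_1,\dots,x_n\}$. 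Conversely, a rule determines a law by substitution: given $t$, send $x_i\mapsto\inl x_i$, $y_j\mapsto\inr y_j$, and $y_i^{z}\mapsto\inr g_i(z)$, where $g_i\c X\to Y$ is the $i$-th function argument and $z$ may be the label, placing the result in the appropriate summand; a direct check shows this is dinatural in $X$ and natural in $Y$. The two assignments are mutually inverse, since the generic element recovers both $t$ and the mode bit verbatim.

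The crux is this dinatural-Yoneda step: because $X$ occurs with \emph{mixed} variance, the ordinary Yoneda lemma does not apply to it, so I must argue by hand that dinaturality transports the generic value to all instances and, conversely, that every value at the generic object extends to a genuinely dinatural family. Verifying that this extension is well defined — independent of how the generic variables are instantiated and collapsed by maps out of $X_0$ — is precisely where the uniformity built into the $\HO$ format is indispensable: since rules never inspect the label or the argument used as a function input, the prescribed substitution commutes with the dinaturality squares, which is exactly what makes the correspondence a bijection.
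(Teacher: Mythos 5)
Your proposal is correct, and its skeleton matches the paper's proof: the same decomposition into summands indexed by pairs $(\f,W)$, and essentially the same ``mode bit'' separation (your composite with the indicator $B(X,\argument)\to 2$, collapsing $Y\to 1$ and then $X\to 1$, is a repackaging of the paper's evaluation at $\rho_{1,1}$). Where you genuinely diverge is the handling of the mixed-variance variable. You apply Yoneda in $Y$ first and are then left with the ``crux'' you name yourself: a by-hand generic-element argument for the dinatural variable $X$ --- showing the hexagons at maps $n\to X$ transport the value at the generic tuple to all instances, and conversely that the substitution-defined family is dinatural. This does go through (injectivity follows from the hexagon instantiated at $f\c n\to X$ combined with naturality in $Y$; surjectivity is a direct computation), but note the extra care your ordering forces: after your $Y$-Yoneda step the generic object $P_X=W+\ol{W}\times X$ itself depends on $X$, so the transport argument must interleave dinaturality in $X$ with naturality in $Y$ along maps $P_X\to P_{X'}$; this is exactly where your sketch is thinnest. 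The paper sidesteps the issue entirely by treating $X$ first via currying: dinaturality of $\rho_{X,Y}$ in $X$ is \emph{equivalent} to ordinary covariant naturality of the curried family $X^{\ar(\f)}\to \NT_Y\bigl(Y^{W+X\times\ol{W}},\Sigmas(X+Y)\bigr)$, since the contravariance of $X$ in the exponent cancels against the contravariance of $\NT_Y(-,\cdot)$ in its first argument; then the ordinary Yoneda lemma applies twice and no bespoke dinatural-Yoneda step is needed. So your route is more elementary and makes the inverse (the substitution formula, which the paper only records in a remark) explicit inside the proof, while the paper's currying buys exactly the discharge of your crux for free. One small correction of emphasis: the dinaturality of the substitution extension is an unconditional computation for any term $t$ over the allowed variable set, so it does not ``rest on the uniformity of the $\HO$ format'' as your last sentence suggests --- the format's uniformity is already encoded in the restriction on which variables $t$ may contain.
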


\begin{rem}
Before turning to the proof, let us elaborate on the statement of the theorem and discuss the underlying intuitions.
\begin{enumerate}
\item The (di)naturality conditions on the family $\rho = (\rho_{X,Y})_{X,Y\in\Set}$ assert that the hexagon
\begin{equation}\label{diag:dinat-rho}
\begin{tikzcd}[column sep=5, row sep=20]
  & [-15pt]\Sigma(X\times B(X,Y)) \ar{rr}{\rho_{X,Y}} && [12pt] B(X,\Sigmas (X+Y)) \ar{dr}{B(\id,\Sigmas(f+\id))} &[-10pt] \\
  \Sigma(X\times B(X',Y) \ar{ur}{\Sigma(\id\times B(f,\id))} \ar{dr}[swap]{\Sigma(f\times B(\id,\id))} & && & B(X,\Sigmas(X'+Y)) \\
  & \Sigma(X'\times B(X',Y)) \ar{rr}{\rho_{X',Y}} && [12pt] B(X', \Sigmas(X'+Y)) \ar{ur}[swap]{ B(f,\Sigmas(\id+\id))} & 
\end{tikzcd}
\end{equation}
commutes for all sets $X,X',Y$ and functions $f\colon X\to X'$, and moreover that the rectangle
\begin{equation}\label{diag:nat-rho}
\begin{tikzcd}
\Sigma(X\times B(X,Y)) \ar{r}{\rho_{X,Y}} \ar{d}[swap]{\Sigma(\id\times B(\id,g))} & B(X,\Sigmas(X+Y)) \ar{d}{B(\id,\Sigmas(\id+g))} \\
\Sigma(X\times B(X,Y')) \ar{r}{\rho_{X,Y'}} & B(X,\Sigmas(X+Y'))
\end{tikzcd}
\end{equation}
commutes for all sets $X,Y,Y'$ and functions $g\colon Y\to Y'$.

\item  The need for \emph{di}naturality comes from
  the mixed variance of the behaviour bifunctor $B$, which in turn is caused
  by the fact that variables are used both as states (covariantly) and
  as labels (contravariantly). The role of dinaturality is then the
  same as otherwise played by naturality: It ensures on an abstract
  level that the rules are parametrically polymorphic, that is, they do not
  inspect the structure of their arguments.

In more technical terms, (di)naturality enables the use of the Yoneda lemma to establish the bijective correspondence of \Cref{prop:yon1}. Explicitly, the bijection maps an $\HO$ specification~$\mathcal{R}$ to the family \eqref{eq:simpleho} defined as follows. Given $X,Y\in \Set$ and \[w=\f((u_1,v_1),\ldots, (u_n,v_n))\in \Sigma(X\times B(X,Y)),\]
consider the unique rule in $\mathcal{R}$ matching $\f$ and $W=\{ j\in \{1,\ldots,n\} : v_j\in Y  \}$. If that rule is of the form \eqref{eq:rule-red}, then 
\[\rho_{X,Y}(w)\in \Sigmas(X+Y)\seq B(X,\Sigmas(X+Y)) \]
is the term obtained by taking the term $t$ in the conclusion of \eqref{eq:rule-red} and applying the substitutions
\[ x_i\mapsto u_i~(i\in \{1,\ldots,n\}),\quad y_j\mapsto v_j~(j\in W), \quad y_i^{x_j}\mapsto v_i(u_j)~(i\in \ol{W}, j\in \{1,\ldots,n\} ).  \]
If the rule is of the form \eqref{eq:rule-nonred}, then 
\[\rho_{X,Y}(w)\in \Sigmas(X+Y)^X\seq B(X,\Sigmas(X+Y)) \]
is the map $u\mapsto t_u$, where the term $t_u$ is obtained by taking the term $t$ in the conclusion of \eqref{eq:rule-nonred} and applying the above substitutions along with
\[ x\mapsto u\qquad\text{and}\qquad y_i^{x}\mapsto v_i(u)\quad (i\in \ol{W}). \]
\item Thus, the intuition behind \Cref{prop:yon1} is that $\rho$ encodes a set of $\HO$-rules into a (di-)natural, i.e.\ parametrically polymorphic, family of functions. The use of two sets $X$ and~$Y$ in $\rho_{X,Y}$ reflects that rules may have premises $x\xto{x'} y$ with two types of variables, namely variables $x,x'\in X$ that can appear both as inputs (covariantly) and labels (contravariantly), and variables $y\in Y$ that appear (covariantly) as outputs. The coproduct in $\Sigmas(X+Y)$ ensures that both types of variables can appear in output terms of rules.
\end{enumerate}
\end{rem}

\begin{example}\label{ex:ski-to-rho}
Let $\rho$ be the family  corresponding to the $\HO$ specification of $\xCL$, see \Cref{ex:ski-to-ho}. Given \[w = (u_1,v_1)\app (u_2,v_2) = \appp((u_1,v_1),(u_2,v_2))\in \Sigma(X\times B(X,Y))\] where $v_1\in Y^X$,  one has $\rho_{X,Y}(w)=v_1(u_2)$, according to the rule $\texttt{app2}$.
\end{example}

\begin{proof}[Proof of~\Cref{prop:yon1}]
\begin{enumerate}
\item Removing the syntactic sugar from \Cref{def:hoformat}, we see
  that $\HO$ specifications for a signature~$\Sigma$ correspond
  bijectively to elements of the set
\begin{equation}
  \label{eq:hoformatns}
  \prod\limits_{\substack{\f\in \Sigma\\ W\seq \ar(\f)}}\Bigl(\Sigma^\star \bigl(\ar(\f)+W+\ar(\f)\times \ol{W}\bigr) + \Sigma^\star \bigl(\ar(\f)+1+W+(\ar(\f)+1)\times
    \ol{W}\bigr)\Bigr).
\end{equation}
Here, we identify the natural number $\ar(\f)$ with the set $\{1,\ldots, \ar(\f)\}$. Recall that a choice of $W\seq \{1,\ldots.\ar(\f)\}$ determines which of
the operands of $\f$ perform a reduction (see \Cref{def:hoformat} and the explanations afterwards), and we let $\ol{W}=\ar(\f)\smin W$ denote the complement. Thus the
summands under $\Sigmas$ spell out which variables may be used in the conclusion of the
respective rule. For instance, the rule \texttt{app2-b} of \Cref{ex:ski-to-ho} corresponds to the element
\[ (2,1)\in \ar(\appp)\times \ol{W}\seq \Sigma^\star \bigl(\ar(\appp)+W+\ar(\appp)\times \ol{W}\bigr) \]
where the variable $y_1^{x_2}$ is identified with $(2,1)$, and $\ar(\appp)=2$ is the arity of the  application operator, and $W=\{2\}$.

 We are thus left to prove that elements of the set~\eqref{eq:hoformatns} are in a bijective correspondence with (di)natural transformations of type~\eqref{eq:simpleho}.
\item 
  For functors $F,G\colon \Set^{\opp}\times \Set\times \Set\to \Set$ we let
  $\DiNat_{X,Y}(F(X,X,Y),G(X,X,Y))$ denote the collection of all families of
  maps
  $\rho_{X,Y}\colon F(X,X,Y)\to G(X,X,Y)$
dinatural in $X\in \Set$ and natural in $Y\in \Set$. Then we have the following
chain of bijections:
\begin{align*}
& \DiNat_{X,Y}\bigl(\Sigma(X\times B(X,Y)),B(X,\Sigmas(X+Y))\bigr) \\
=\; & \DiNat_{X,Y}\bigl(\coprod_{\f\in\Sigma} (X\times B(X,Y))^{\ar(\f)},B(X,\Sigmas(X+Y))\bigr) \\
\cong\; & \prod_{\f\in \Sigma}\DiNat_{X,Y}\bigl((X\times B(X,Y))^{\ar(\f)},B(X,\Sigmas(X+Y))\bigr) \\
=\; & \prod_{\f\in \Sigma}\DiNat_{X,Y}\bigl((X\times (Y + Y^{X}))^{\ar(\f)}, B(X,\Sigmas(X+Y))\bigr) \\
\cong\; & \prod_{\f\in \Sigma}\DiNat_{X,Y}\bigl(X^{\ar(\f)}\times (Y + Y^{X})^{\ar(\f)}, B(X,\Sigmas(X+Y))\bigr) \\
\cong\; & \prod_{\f\in \Sigma}\DiNat_{X,Y}\bigl(X^{\ar(\f)}\times \coprod_{W\seq \ar(\f)} Y^W\times (Y^X)^{\ol{W}}, B(X,\Sigmas(X+Y))\bigr) & (*)\\
\cong\; & \prod_{\f\in \Sigma}\DiNat_{X,Y}\bigl(X^{\ar(\f)}\times \coprod_{W\seq \ar(\f)} Y^W\times Y^{X\times\ol{W}}, B(X,\Sigmas(X+Y))\bigr) & (**) \\
\cong\; & \prod_{\f\in \Sigma}\DiNat_{X,Y}\bigl(\coprod_{W\seq \ar(\f)} X^{\ar(\f)}\times Y^W \times Y^{X\times \ol{W}} , B(X,\Sigmas(X+Y))\bigr) & (*) \\
\cong\; & \prod_{\f\in \Sigma} \prod_{W\seq \ar(\f)}\DiNat_{X,Y}\bigl(X^{\ar(\f)}\times Y^{W + X\times \ol{W}} , B(X,\Sigmas(X+Y))\bigr). 
\end{align*}
In the two steps marked ($*$) we use that products distribute over coproducts in $\Set$, and in  the step marked ($**$) we use that $(A^B)^C \cong A^{B\times C}$ for all $A,B,C\in \Set$. We claim that each factor of the last product satisfies
\begin{equation}\label{eq:dinat-sum}
 \begin{aligned}
    \DiNat_{X,Y}\bigl( X^{\ar(\f)}\times &Y^{W + X\times \ol{W}} , B(X,\Sigmas(X+Y))\bigr)\\ 
    \cong\;& 
    \DiNat_{X,Y}\bigl( X^{\ar(\f)}\times Y^{W + X\times \ol{W}} , \Sigmas(X+Y)\bigr) \;+\;\\ 
    \;&\DiNat_{X,Y}\bigl( X^{\ar(\f)}\times Y^{W + X\times \ol{W}} ,(\Sigmas(X+Y))^{X}\bigr).
  \end{aligned}
\end{equation}
To see this, let $\rho$ be a family of maps in $\DiNat_{X,Y}\bigl(X^{\ar(\f)}\times Y^{W + X\times \ol{W}} , B(X,\Sigmas(X+Y))\bigr)$. Consider the diagram below, where $!_X\c X\to 1$ and $!_Y\c Y\to 1$ are the unique maps into the singleton set $1$. The upper part of the diagram commutes by naturality in $Y$ and the lower part by dinaturality in $X$.
\[
\begin{tikzcd}
X^{\mathsf{ar}(\f)} \times Y^{W+ X\times\ol{W}} \ar{r}{\rho_{X,Y}} \ar{d}[swap]{X^{\mathsf{ar}(\f)}\times (!_Y)^{\id+X\times\id}} & \Sigmas(X+Y)+(\Sigmas(X+Y))^X \ar{d}{\Sigmas(\id+!_Y)+ (\Sigmas(\id+!_Y))^X}  \\
 X^{\mathsf{ar}(\f)}\times 1^{W+X\times \ol{W}}  \ar{r}{\rho_{X,1}} & \Sigmas(X+1)+(\Sigmas(X+1))^X  \ar{d}{ \Sigmas(!_X+\id) + (\Sigmas(!_X+\id))^X }  \\
X^{\mathsf{ar}(\f)}\times 1^{W+1\times\ol{W}} \ar{u}{\id\times 1^{\id+!_X\times\id}  }[swap]{\cong} \ar{d}[swap]{ (!_X)^{\mathsf{ar}(\f)}\times \id} & \Sigmas(1+1)+(\Sigmas(1+1))^X \\
 1^{\mathsf{ar}(\f)}\times 1^{W+1\times\ol{W}} \ar{r}{\rho_{1,1}} & \Sigmas(1+1)+(\Sigmas(1+1))^1 \ar{u}[swap]{\id+(\Sigmas(1+1))^{!_X}  }    
\end{tikzcd}
\] 
Note that every map of type $1\to A+B$ into a coproduct (disjoint union) simply chooses an element of one the summands $A$ or $B$. In particular, this applies to the map $\rho_{1,1}$: It has domain $1^{\mathsf{ar}(\f)}\times 1^{W+1\times\ol{W}}\cong 1$, so it chooses an element of $\Sigmas(1+1)$ or $(\Sigmas(1+1))^1$. Note that this element is independent of the given objects $X$ and $Y$ since the component $\rho_{1,1}$ is independent from $X$ and $Y$. It follows that the image of the upper leg
\[  (\Sigmas(!_X+\id) + (\Sigmas(!_X+\id)) \cdot  (\Sigmas(\id+!_Y)+ (\Sigmas(\id+!_Y))^X) \cdot \rho_{X,Y}  \]
of the above commutative diagram
is either a single element of $\Sigmas(1+1)$ or a single element of $(\Sigmas(X+1))^X$. This implies that the image of $\rho_{X,Y}$ must be contained in one of the summands of its codomain: It is either a subset of $\Sigmas(X+Y)$ for every $X,Y$, or a subset of $(\Sigmas(X+Y))^X$ for every $X,Y$.
This proves the isomorphism \eqref{eq:dinat-sum}.
\item It remains to show that the two summands in \eqref{eq:dinat-sum} are isomorphic to the
corresponding summands in~\eqref{eq:hoformatns}. For the first one, let $\NT_X(F(X),G(X))$ denote the collection of natural transformations between functors $F,G\c \Set\to \Set$. Then 
\begin{align*}
  & \DiNat_{X,Y}\bigl(X^{\ar(\f)}\times Y^{W + X\times \ol{W}} , \Sigmas(X+Y)\bigr)\\
  \cong\;& \NT_X\bigl(X^{\ar(\f)}, \NT_Y(Y^{W + X\times \ol{W}}, \Sigmas(X+Y))\bigr) \\
  \cong\; &  \NT_Y\bigl(Y^{W + \ar(\f)\times \ol{W}}, \Sigmas(\ar(\f)+Y)\bigr) \\
  \cong\; &  \Sigmas(\ar(\f)+W+\ar(\f)\times \ol{W}).
\end{align*}
The last two isomorphisms use the Yoneda lemma, and the first one is given by currying:
\[ \rho \quad\mapsto \quad (\,\lambda x\in X^{\ar(\f)}. \,(\rho_{X,Y}(x,-))_Y\,)_X.\]
Note that for every $x\in X^{\ar(\f)}$ the family $(\rho_{X,Y}(x,-)\c Y^{W+X\times \ol{W}}\to \Sigmas(X+Y))_Y$ is indeed natural in $Y$; the naturality squares are equivalent to the ones witnessing naturality of $\rho_{X,Y}$ in $Y$. Similarly, the family $(\,\lambda x\in X^{\ar(\f)}. \,(\rho_{X,Y}(x,-))_Y\,)_X$ is natural in $X$; the naturality squares are equivalent to the commutative hexagons witnessing dinaturality of $\rho_{X,Y}$ in $X$. 

Much analogously, we have
\begin{align*}
  & \DiNat_{X,Y}\bigl(X^{\ar(\f)}\times Y^{W + X\times \ol{W}} , (\Sigmas(X+Y))^X\bigr) \\
  \cong\;& \NT_X\bigl(X^{\ar(\f)}\times X, \NT_Y(Y^{W + X\times \ol{W}}, \Sigmas(X+Y))\bigr) \\
  \cong\;& \NT_X\bigl(X^{\ar(\f)+1}, \NT_Y(Y^{W + X\times \ol{W}}, \Sigmas(X+Y))\bigr) \\
  \cong\; &  \NT_Y\bigl(Y^{W + (\ar(\f)+1)\times \ol{W}}, \Sigmas(\ar(\f)+1+Y)\bigr) \\
  \cong\; &  \Sigmas(\ar(\f)+1+W+(\ar(\f)+1)\times \ol{W}).
\end{align*}
This concludes the proof.\hfill$\qed$
\end{enumerate}
\def\qed{}
\end{proof}

\subsection{Nondeterministic $\xCL$}\label{sec:nd-ski}
Just as the $\lambda$-calculus, combinatory logic can be enriched with other features,
such as nondeterminism, and the theory of applicative bisimulations can be readily
developed for such extensions. For the $\lambda$-calculus this has been pioneered by
\citet{DBLP:journals/iandc/Sangiorgi94}. For example, consider an extension
of $\xCL$ with a binary operator $\oplus$ representing nondeterministic choice.
The grammar of the extended language $\xCL^{\oplus}$ is given by
\[
  \skitermu^\oplus
  \Coloneqq
  S \mid K \mid I \mid
  \appp(\skitermu^\oplus, \skitermu^\oplus)
  \mid
  S'(\skitermu^\oplus)
  \mid
  K'(\skitermu^\oplus)
  \mid
  S''(\skitermu^\oplus,\skitermu^\oplus)
  \mid
  \skitermu^\oplus \oplus \skitermu^\oplus.
\]
On the side of the operational semantics, $\xCL^{\oplus}$ has the 
same rules as $\xCL$ (see \Cref{fig:skirules}), plus the following two rules for resolving nondeterminism:
\begin{align*}
\inference{}{p\oplus q\to p}
&&
\inference{}{p\oplus q\to q}
\end{align*}
This semantics calls for the modification of the behaviour bifunctor $B(X,Y)=Y+Y^X$ to
\begin{equation}
  \label{eq:higher-nd}
  B^{\oplus}(X,Y) = \mypowfin(Y+Y^{X}) \c \Set^{\opp} \product \Set \to \Set,
\end{equation}
where $\mypowfin$ is the finite powerset functor. Sets of nondeterministic
transition rules such as those of $\xCL^{\oplus}$ then correspond to families of functions
\begin{equation*}
  \rho_{X,Y} \c \Sigma(X \times B^\oplus(X,Y))\to B^\oplus(X,\Sigmas(X+Y))\qquad (X,Y\in \Set)
\end{equation*}
dinatural in $X$ and natural in $Y$.
In analogy to \Cref{prop:skicong1}, we have the following compositionality result:
\begin{proposition}\label{prop:skicong2}
  Bisimilarity for $\xCL^{\oplus}$ is a congruence.
\end{proposition}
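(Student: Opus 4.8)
The plan is to adapt the direct argument from the proof of \Cref{prop:skicong1} to $\xCL^\oplus$, the only differences being the extra operator $\oplus$ and the fact that the induced transition system is no longer deterministic. As in the deterministic case, I would first recall that an equivalence relation $R$ on $\skitermu^\oplus$ is a congruence iff its context closure $\hat R = \{(C[p],C[q]) : \text{$C[\cdot]$ a context and } p\mathbin R q\}$ is contained in $R$. Hence it suffices to prove $\hat\sim\subseteq\sim$, and for this I would show that $\hat\sim$ is a bisimulation up to transitive closure, so that $\hat\sim^\star$ is a genuine bisimulation and $\hat\sim\subseteq\hat\sim^\star\subseteq\sim$ follows from $\sim$ being the greatest bisimulation. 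Soundness of bisimulation-up-to-transitive-closure for strong bisimilarity is a standard up-to technique (see \cite{SANGIORGI_1998,DBLP:conf/sofsem/RotBR13}), and $\sim$ is an equivalence relation as the greatest strong bisimulation, so this reasoning transfers verbatim from $\xCL$ to $\xCL^\oplus$.

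The heart of the proof is a structural induction on $C[\cdot]$ in which every outgoing transition of $C[p]$ is matched by a transition of $C[q]$ up to $\hat\sim^\star$, and symmetrically. All cases already handled in \Cref{prop:skicong1} — the combinator cases and the two application cases $C[\cdot]=C'[\cdot]\app r$ and $C[\cdot]=r\app C'[\cdot]$ — carry over unchanged, since $\oplus$ introduces no labeled transitions and does not interact with the rules of \Cref{fig:skirules}. The only new cases are $C[\cdot]=C'[\cdot]\oplus r$ and $C[\cdot]=r\oplus C'[\cdot]$, and these are immediate: from $C[p]=C'[p]\oplus r$ there are exactly the two transitions $C[p]\to C'[p]$ and $C[p]\to r$, which are matched by $C'[q]\oplus r\to C'[q]$ and $C'[q]\oplus r\to r$, with $C'[p]\mathbin{\hat\sim}C'[q]$ (by definition of $\hat\sim$, since $p\sim q$) and $r\mathbin{\hat\sim}r$; as $\oplus$ has no labeled transitions, there is nothing further to check.

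The one genuine difference from the deterministic setting is that the bisimulation game now branches: each clause of the bisimulation definition must be verified separately for every outgoing transition, rather than for the single transition available to each $\xCL$-term. In the induction this means applying the inductive hypothesis once per transition of the relevant subcontext and lifting each matched pair through the surrounding syntax, using that $\hat\sim^\star$ is itself a congruence (the transitive closure of the congruence $\hat\sim$). I expect the main obstacle to be purely the organizational bookkeeping of these cases; the only place where the transitive closure is genuinely needed is, exactly as in \Cref{prop:skicong1}, the case $C[\cdot]=r\app C'[\cdot]$ with $r$ exhibiting higher-order behaviour, where the single input variable may occur several times in the output term.

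Finally, it is worth noting that the direct argument can be bypassed altogether once the general framework is in place: the rules of $\xCL^\oplus$ determine a higher-order GSOS law for $B^\oplus$ on the regular category $\Set$, so the statement also follows as an instance of the general compositionality result (\Cref{th:main}), whose conclusion that the kernel pair of $\coiter\gamma$ is a congruence specializes to bisimilarity here because $B^\oplus(\mS,\argument)=\mypowfin(\argument+(\argument)^\mS)$ preserves weak pullbacks (being assembled by composition and coproduct from the identity, the exponent functor $(\argument)^\mS$, and the finite powerset functor, all of which preserve weak pullbacks).
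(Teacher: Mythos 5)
Your proposal is correct, but your primary argument takes a genuinely different route from the paper. The paper deliberately gives \emph{no} syntactic proof of \Cref{prop:skicong2}: immediately after stating it, the authors announce that rather than repeating the induction of \Cref{prop:skicong1} they will derive it from \Cref{th:main}, and the actual proof (in \Cref{sec:ho-gsos-compositional}) consists of checking that $\Set$ is regular, that the polynomial syntax functor preserves reflexive coequalizers, and that $B^\oplus$ preserves monomorphisms because $\mypowfin$ preserves both injections and surjections. That is precisely your closing paragraph --- including the observation that $B^\oplus(\mS,\argument)$ preserves weak pullbacks, which is needed to identify the kernel-pair congruence of \Cref{th:main} with bisimilarity and which the paper leaves implicit --- so your last paragraph \emph{is} the paper's proof. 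Your main argument, the direct context induction adapted from \Cref{prop:skicong1}, is sound and the paper never carries it out: the two $\oplus$ cases are indeed immediate; labeled transitions in $\xCL^{\oplus}$ still arise only from the combinator rules of \Cref{fig:skirules}, so the uniformity-in-the-label argument (the only place requiring the transitive closure) survives; and bisimulation up to transitive closure remains sound for strong bisimilarity on nondeterministic transition systems, by the same chaining argument as in the deterministic case. The trade-off between the two routes is exactly the paper's rhetorical point: your induction is elementary and self-contained, needing none of the machinery of \Cref{sec:hogsos}, but it is the kind of laborious, language-specific case analysis that the abstract framework is designed to eliminate, whereas the paper's route reduces the proposition to verifying three structural conditions on the functors involved. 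One shared blemish, inherited from the paper's own proof of \Cref{prop:skicong1}: the single-hole context closure $\hat\sim$ is not literally a congruence (closing $\f(a_1,a_2)$ in two argument positions requires an interpolation step), so the justification that $\hat\sim^\star$ is a congruence should be phrased as a two-step replacement argument rather than as ``transitive closure of a congruence''; this affects neither your proof nor the paper's.
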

Rather than giving another proof by induction on the syntax, we will derive
this proposition from our abstract congruence result (\Cref{th:main}). This
highlights the advantage of the genericity achieved by working in a
category-theoretic framework.

\section{Higher-order abstract GSOS}
\label{sec:hogsos}
We present the main contribution of our paper, a theory of abstract GSOS for
higher-order languages that retains the key feature of Turi and Plotkin's first-order framework, namely that (under mild conditions) compositionality of specifications comes for free.

\subsection{Higher-order GSOS laws}\label{sec:ho-gsos-laws}
The results of the previous section, most notably \Cref{prop:yon1}, suggest a path towards modeling higher-order languages in an abstract, purely categorical fashion: Present their small-step operational semantics in terms of families
of morphisms
\begin{align}\label{eq:law-specialcase}
\rho_{X,Y}\c\Sigma(X\times B(X,Y))\to B(X,\Sigma^\star (X+Y)),
\end{align}
dinatural in $X\in \C$ and natural in $Y\in \C$, parametric
in a base category $\gcat$ and two functors $\Sigma \c \gcat
\to \gcat$ and $B\colon \C^\opp\times \C\to \C$ representing the syntax and
behaviour of the language. The initial $\Sigma$-algebra $\mu\Sigma$
should thus correspond to the terms of the language and their dynamics should be modelled
after a $B(\mu\Sigma,-)$-coalgebra structure
$\gamma \c \mu\Sigma \to B(\mu\Sigma,\mu\Sigma)$.

With the developments in \Cref{sec:lam} in mind, we will
actually work with a slightly more general format where $X$ is
required to be a \emph{pointed object}:
\begin{notation}
Given a fixed
object $\Pt$ of a category~$\gcat$, we let~$\Pt/\gcat$ denote the coslice
category of \emph{$\Pt$-pointed objects}. Its objects are pairs $(X,p_X)$ of an object $X\in\gcat$ and a morphism
$p_X\colon\Pt\to X$ of $\gcat$. A \emph{morphism} from~$(X,p_X)$ to $(Y,p_Y)$
is a morphism $h\colon X\to Y$ of $\C$ such that $h\comp p_X =
p_Y$.
We write \[j\c \Pt/\gcat \to \gcat\] for the forgetful functor given by
$(X, p_X) \mapsto X$ and $h\mapsto h$.
\end{notation}
We think of $\Pt$ as a set of variables, and of a $V$-pointed object $(X,p_X)$ as a
set $X$ of program terms in free variables from $\Pt$ with an embedding
$p_X\c\Pt\to X$ of the variables.
\begin{assumptions}
  \label{mainassumptions}
From now on, we fix the following data:
\begin{enumerate}
\item a category $\C$ with finite limits and colimits;
\item an object $V\in \C$ of variables;
\item two functors $\Sigma\colon \C\to\C$ and $B\colon \C^\opp\times \C\to \C$,
  where $\Sigma$ is of the form $\Sigma = \Pt + \Sigma'$ for some $\Sigma'\colon
  \C\to \C$ and has free algebras on every object, hence generates a free monad
  $\Sigmas$. In addition, we require the functor $B(\mu\Sigma, -)$ to admit a
  final coalgebra, where $\mu\Sigma = \Sigmas 0$.
\end{enumerate}
\end{assumptions}
\begin{definition}
  \label{def:hog}
  A \emph{$\Pt$-pointed higher-order GSOS law} of $\Sigma$ over
  $B$ is a family of
  morphisms
\begin{align}\label{eq:law}
\rho_{X,Y} \c \Sigma (jX \times B(jX,Y))\to B(jX, \Sigma^\star (jX+Y))
\end{align}
dinatural in $X\in\Pt/\gcat$ and natural in $Y\in \gcat$.
\end{definition}

\begin{remark}
More explicitly, dinaturality in $X\in\Pt/\gcat$ means that the hexagon \eqref{diag:dinat-rho} commutes for all objects $(X,p_X),(X',p_{X'})\in \Pt/\gcat$ and $Y\in \C$ and all morphisms $f\colon (X,p_X)\to (X',p_{X'}) $ in $\Pt/\gcat$. Similarly, naturality in $Y\in \C$ means that the rectangle \eqref{diag:nat-rho} commutes for all objects $(X,p_X)\in \Pt/\gcat$ and $Y,Y'\in \C$ and all morphisms $g\colon Y\to Y'$ in $\C$.
\end{remark}

Laws of the form \eqref{eq:law-specialcase} emerge from~\eqref{eq:law} 
by choosing $\Pt=0$, the initial object of $\gcat$. When running the semantics, 
both $X$ and $Y$ will be instantiated to the free algebra~$\mS$. Abstracting
from this choice ensures that program terms are used in a parametrically
polymorphic, uniform way.

Every object $X \in \gcat$ induces an endofunctor
$B(X,\argument) \c \gcat \to \gcat$. For instance, the transition
system $\gamma\c\skitermu \to \skitermu + \skitermu^{\skitermu}$
from \eqref{eq:skiutr} is a $B(\mS, \argument)$-coalgebra. The state
space~$\skitermu$ is the initial $\Sigma$-algebra for the
corresponding polynomial set functor $\Sigma$; the codomain is
$B(\mS,\mS)$. The definition of the map~$\gamma$ is inductive on
the structure of terms. It turns out to be an instance of a definition
by structural induction  in which we assign to a
$\Pt$-pointed higher-order GSOS law its canonical operational~model.
For technical reasons, we formulate the abstract definition of
$\gamma$ yet more generally, by parametrizing it with a
$\Sigma$-algebra $(A,a)$ --- the motivating instance is
obtained by instantiating $A$ with the initial algebra $\mS$.

\begin{rem}\label{rem:alg-pointed}
  For every $\Sigma $-algebra $(A,a)$, we regard $A$ as a $\Pt$-pointed
  object with point
  \[p_A = \bigl(\Pt\xra{\inl} \Pt+\Sigma' A = \Sigma  A \xra{a} A\bigr).\] Note that
  if $h\colon (A,a)\to (B,b)$ is a morphism of $\Sigma $-algebras,
  then $h$ is also a morphism of the corresponding $\Pt$-pointed
  objects.
\end{rem}

\begin{lemma}\label{lem:clubs}
Given a $\Pt$-pointed higher-order GSOS law $\rho$, every $\Sigma $-algebra $(A,a)$ induces a
{unique} morphism $a^\clubsuit\c\mS\to B(A,A)$ in $\C$ such that the following
diagram commutes.
\begin{equation}\label{diag:can-model}
\begin{tikzcd}[column sep=8ex, row sep=normal]
\Sigma(\mS) 
  \dar["\Sigma \brks{\iter a,\,a^\clubsuit}"']
  \ar[rrr,"\iota"] 
  &[-4ex] &[1ex] &[-2ex]
\mS 
  \dar["a^\clubsuit"]
  \\
\Sigma (A\times {B(A,A)})
  \rar["{\rho_{A,A}}"] 
&
B(A,\Sigma^\star(A+A))
  \rar["{B(\id,\Sigmas\nabla)}"] &
B(A,\Sigma^\star A)
  \rar["{B(\id, \hat a)}"] &
B(A,A)
\end{tikzcd}
\end{equation}
Here $(\mS,\ini)$ is the initial $\Sigma $-algebra, $\hat a\colon \Sigmas  A \to
A$ is the Eilenberg--Moore algebra corresponding to the $\Sigma$-algebra $(A,a)$, and we regard $A$ as $V$-pointed as in \Cref{rem:alg-pointed}.
\end{lemma}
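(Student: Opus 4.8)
The obstacle here is that the displayed equation defines $a^\clubsuit$ recursively: the morphism $a^\clubsuit$ occurs on both sides, bundled together with the already-determined map $\iter a$ inside the pairing $\brks{\iter a,\,a^\clubsuit}$. Hence we cannot read off $a^\clubsuit$ directly by the universal property of the initial algebra $\mS$. The plan is to apply the standard abstract-GSOS trick of \emph{folding the law into an algebra structure on the product} $A\times B(A,A)$: we equip this object with a $\Sigma$-algebra structure whose induced map out of $\mS$ is forced to be exactly $\brks{\iter a,\,a^\clubsuit}$, so that the desired equation becomes an instance of the homomorphism condition.

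Concretely, I would define a $\Sigma$-algebra structure $b = \brks{b_1,\,b_2}\colon \Sigma(A\times B(A,A))\to A\times B(A,A)$ by setting $b_1 = a\cdot \Sigma\fst$ and
\[
  b_2 = B(\id,\hat a)\cdot B(\id,\Sigmas\nabla)\cdot \rho_{A,A},
\]
where $\rho_{A,A}\colon \Sigma(A\times B(A,A))\to B(A,\Sigmas(A+A))$ is the component of the law at the $V$-pointed object $(A,p_A)$ of \Cref{rem:alg-pointed} (so that $jX = A$). By initiality of $(\mS,\ini)$ there is a unique $\Sigma$-algebra morphism $h = \iter b\colon (\mS,\ini)\to (A\times B(A,A),b)$. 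The first step is to check that $\fst$ is itself a $\Sigma$-algebra morphism from $(A\times B(A,A),b)$ to $(A,a)$, which holds precisely because $\fst\cdot b = b_1 = a\cdot\Sigma\fst$; uniqueness of $\iter a$ then forces $\fst\cdot h = \iter a$. Setting $a^\clubsuit := \snd\cdot h$, we obtain $h = \brks{\iter a,\,a^\clubsuit}$.

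Commutativity of \eqref{diag:can-model} is now read off by taking second components of the homomorphism equation $h\cdot\ini = b\cdot\Sigma h$: the left side gives $a^\clubsuit\cdot\ini$, while the right side gives $b_2\cdot\Sigma\brks{\iter a,\,a^\clubsuit} = B(\id,\hat a)\cdot B(\id,\Sigmas\nabla)\cdot\rho_{A,A}\cdot\Sigma\brks{\iter a,\,a^\clubsuit}$, exactly as required. For uniqueness, suppose $g\colon \mS\to B(A,A)$ satisfies the same equation. I would then verify that $\brks{\iter a,\,g}$ is a $\Sigma$-algebra morphism into $(A\times B(A,A),b)$: its first-component equation reduces to the homomorphism property of $\iter a$ (using $b_1 = a\cdot\Sigma\fst$ and $\fst\cdot\brks{\iter a,\,g} = \iter a$), and its second-component equation is the assumed defining equation for $g$. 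By initiality $\brks{\iter a,\,g} = h = \brks{\iter a,\,a^\clubsuit}$, and projecting yields $g = a^\clubsuit$.

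The argument is almost entirely formal, and the only point requiring care is the bookkeeping that forces the first component of $h$ to coincide with $\iter a$; everything else is routine diagram chasing with products and the initial-algebra universal property. It is worth noting that this existence-and-uniqueness result uses only the single component $\rho_{A,A}$ as a morphism of $\C$, and not the (di)naturality of $\rho$ — that structure will only become essential later, in establishing the compositionality theorem.
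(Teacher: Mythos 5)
Your proposal is correct and follows essentially the same route as the paper's own proof: the paper likewise equips $A\times B(A,A)$ with the $\Sigma$-algebra structure $\brks{a\cdot\Sigma\fst,\,B(\id,\hat a\comp\Sigmas\nabla)\comp\rho_{A,A}}$, obtains $\brks{\iter a,\,a^\clubsuit}$ as the unique morphism out of $\mS$ by initiality (identifying the first component as $\iter a$ by postcomposing with $\fst$), and derives both commutativity and uniqueness by projecting, exactly as you do. Your closing remark that only the single component $\rho_{A,A}$ is needed, and not (di)naturality, is also accurate — the paper defers dinaturality to the later key lemma.
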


\begin{proof}
\emph{Existence of $a^\clubsuit$.} By initiality of $\mS$, there exists a unique morphism $\langle w,a^\clubsuit\rangle$ in $\C$ making the diagram below commute:
\begin{equation}\label{diag:w-a-clubs}
\begin{tikzcd}[column sep=10ex, row sep=normal]
\Sigma(\mS) 
  \dar["\Sigma\brks{w,\,a^\clubsuit}"']
  \ar[rr,"\iota"] 
  & &[1ex]
\mS 
  \dar["\brks{w,\,a^\clubsuit}"]
  \\
\Sigma (A\times {B(A,A)})
  \rar["\brks{a\cdot\Sigma\fst,\,\rho_{A,A}}"] 
&
A\times B(A,\Sigma^\star(A+A))
  \rar["{\id\times B(\id,\hat{a}\comp \Sigmas\nabla)}"] &
A\times B(A,A)
\end{tikzcd}
\end{equation}
Postcomposing this diagram with $\fst\colon A\times B(A,A)\to A$ shows that $w\c \mS\to (A,a)$ is a $\Sigma$-algebra morphism; hence $w=\iter a$ and the diagram \eqref{diag:w-a-clubs} can be rewritten as
\begin{equation}\label{diag:w-a-clubs-2}
\begin{tikzcd}[column sep=10ex, row sep=normal]
\Sigma(\mS) 
  \dar["\Sigma\brks{\iter a,\,a^\clubsuit}"']
  \ar[rr,"\iota"] 
  & &
\mS 
  \dar["\brks{\iter a,\,a^\clubsuit}"]
  \\
\Sigma (A\times {B(A,A)})
  \rar["\brks{a\cdot\Sigma\fst,\,\rho_{A,A}}"] 
&
A\times B(A,\Sigma^\star(A+A))
  \rar["{\id\times B(\id,\hat{a}\comp \Sigmas\nabla)}"] &
A\times B(A,A)
\end{tikzcd}
\end{equation}
By postcomposing with $\snd\colon A\times B(A,A)\to B(A,A)$ we see that \eqref{diag:can-model} commutes.\\[.01ex] 

\noindent\emph{Uniqueness of $a^\clubsuit$.} Suppose that $a^\clubsuit$ is a morphism such that \eqref{diag:can-model} commutes. Then \eqref{diag:w-a-clubs} 
commutes with $w=\iter a$, so uniqueness of $a^\clubsuit$ is entailed by 
the uniqueness of $\brks{w,a^\clubsuit}$.
\end{proof}

\begin{definition}\label{def:operational-model}
The \emph{operational model} of a $V$-pointed higher-order GSOS law $\rho$ is given by the $B(\mS,-)$-coalgebra
\[\iota^\clubsuit\c\mu \Sigma \to B(\mS ,\mS ).\]
\end{definition}

\begin{example}\label{ex:a-clubsuit-ski}
\begin{enumerate}
\item Consider the higher-order GSOS law $\rho$ corresponding to $\xCL$ (see \Cref{ex:ski-to-ho,ex:ski-to-rho}). Then the operational model $\iota^\clubsuit$ is precisely that transition system $\gamma\colon\skitermu\to \skitermu + \skitermu^{\skitermu}$ of \eqref{eq:skiutr} induced by the rules in \Cref{fig:skirules}. Given a $\Sigma$-algebra $(A,a)$, the morphism $a^\clubsuit$ is obtained by interpreting all those transitions in the algebra $A$. For instance, since there is a transition $K\xto{t}K'(t)$, we have $a^\clubsuit(K)\in A^A$ given by $a^\clubsuit(K)(u)=(K')^A(u)$, where $(K')^A\c A\to A$ is the interpretation of the unary operation symbol $K'\in \Sigma$ in $A$.
\item More generally, the operational model $\gamma\colon \mS\to\mS+\mS^\mS$ of an $\HO$ specification, which runs programs according to the given inductive $\HO$ rules, coincides with the operational model $\ini^\clubsuit$ of its corresponding higher-order GSOS law (\Cref{prop:yon1}).
\end{enumerate}
\end{example}

\begin{rem}[First-order vs.\ higher-order abstract GSOS]\label{rem:gsos-to-ho-gsos}
Higher-order abstract GSOS is a conservative extension of first-order abstract GSOS (\Cref{sec:abstract-gsos}). In more detail, given endofunctors $\Sigma, B_0\colon \C\to \C$, every first-order GSOS law \[\rho^0_Y\c \Sigma(Y\times B_0Y)\to B_0\Sigmas Y
  \qquad(Y\in \C)\] of $\Sigma$ over $B_0$ can be turned into a $0$-pointed
  higher-order GSOS law
  \[\rho_{X,Y}\c \Sigma(X\times B(X,Y))\to B(X,\Sigmas(X+Y))\qquad
  (X,Y\in \C)\]
of $\Sigma$ over $B\colon \C^\opp \times \C\to \C$, where $B(X,Y)=B_0Y$, whose components are given by
  \[
   \rho_{X,Y} \;=\; (\begin{tikzcd}[column sep=2.5em]\Sigma(X\times B_0Y)  \ar{rr}{\Sigma(\inl\times B_0\inr)} && \Sigma((X+Y)\times B_0(X+Y))  \ar{r}{\rho^0_{X+Y}} &  B_0\Sigmas(X+Y)\end{tikzcd}).
\]
Let us check that $\rho$ is indeed a higher-order GSOS law. Naturality of $\rho_{X,-}$ is shown by the following
commutative diagram for $f\c Y\to Y'$. The left-hand part obviously
commutes, and the right-hand one commutes by naturality of $\rho^0$.
\[
\begin{tikzcd}[column sep=.5em, row sep=3em, scale cd=.85]
\Sigma(X\times B(X,Y)) = \Sigma(X\times B_0Y) \ar{d}[swap]{\Sigma(\id\times B_0f)} \ar{r}[yshift=0.5em]{\Sigma(\inl\times B_0\inr)} & \Sigma((X+Y)\times B_0(X+Y)) \ar{d}{\Sigma((\id+f)\times B_0(\id+f))} \ar{r}[yshift=.5em]{\rho^0_{X+Y}} &  B_0\Sigmas(X+Y)=B(X,\Sigmas(X+Y)) \ar{d}{B_0\Sigmas(\id+f)} \\
\Sigma(X\times B(X,Y')) = \Sigma(X\times B_0Y') \ar{r}[yshift=0.5em]{\Sigma(\inl\times B_0\inr)} & \Sigma((X+Y')\times B_0(X+Y')) \ar{r}[yshift=.5em]{\rho^0_{X+Y'}} &  B_0\Sigmas(X+Y')=B(X,\Sigmas(X+Y'))
\end{tikzcd}
\]
Since $B(X,Y)=B_0Y$ does not depend on its contravariant component, dinaturality
 of $\rho_{-,Y}$ is equivalent to naturality and is shown by the commutative diagram below for  $g\c X\to X'$:
\[
\begin{tikzcd}[column sep=.5em, row sep=3em, scale cd=.85]
\Sigma(X\times B(X,Y)) = \Sigma(X\times B_0Y) \ar{d}[swap]{\Sigma(g\times \id)} \ar{r}[yshift=0.5em]{\Sigma(\inl\times B_0\inr)} & \Sigma((X+Y)\times B_0(X+Y)) \ar{d}{\Sigma((g+\id)\times B_0(g+\id))} \ar{r}[yshift=.5em]{\rho^0_{X+Y}} &  B_0\Sigmas(X+Y)=B(X,\Sigmas(X+Y)) \ar{d}{B_0\Sigmas(g+\id)} \\
\Sigma(X'\times B(X',Y)) = \Sigma(X'\times B_0Y) \ar{r}[yshift=0.5em]{\Sigma(\inl\times B_0\inr)} & \Sigma((X'+Y)\times B_0(X'+Y)) \ar{r}[yshift=.5em]{\rho^0_{X'+Y}} &  B_0\Sigmas(X'+Y)=B(X',\Sigmas(X'+Y))
\end{tikzcd}
\]
The two laws $\rho^0$ and $\rho$ are semantically equivalent in the sense that their operational models
\[ \gamma\c \mS\to B_0(\mS)\qquad\text{and}\qquad \iota^\clubsuit\c \mS\to B_0(\mS)=B(\mS,\mS)  \]
coincide. To see this, recall from \Cref{sec:abstract-gsos} that the coalgebra structure $\gamma$ is uniquely determined by the following commutative diagram:
\begin{equation*}
\begin{tikzcd}[column sep=8ex, row sep=normal]
\Sigma(\mS) 
  \dar["\Sigma \brks{\id,\,\gamma}"']
  \ar[rrr,"\iota"] 
  &[-2ex] &[1ex] &[2ex]
\mS 
  \dar["\gamma"]
  \\
\Sigma (\mS\times B_0(\mS))
  \ar{rr}{\rho^0_{\mS}} 
&
&
B_0\Sigmas(\mS) \ar{r}{B_0\hat\ini}
&
B_0(\mS)
\end{tikzcd}
\end{equation*} 
Thus, we only need to show that $\iota^\clubsuit$ is such a $\gamma$, which follows from the commutative diagram below. The upper cell commutes by definition of $\iota^\clubsuit$, the cell involving $\rho^0_{\mS}$ commutes by naturality of $\rho^0$, and the remaining cells commute either trivially or by definition.
\[ 
\begin{tikzcd}[scale cd=.65, row sep=3em, column sep=1em]
\Sigma(\mS) \ar{rrrrr}{\iota} \ar{ddd}[swap]{\Sigma\langle \id,\,\iota^\clubsuit\rangle} \ar{dr}{\langle \iter \iota,\, \iota^\clubsuit\rangle} & & & & & \mS \ar{ddd}{\iota^\clubsuit} \ar{dl}[swap]{\iota^\clubsuit} \\
& \Sigma(\mS\times B(\mS,\mS)) \ar{d}{\Sigma(\inl\times B_0\inr)} \ar{r}{\rho_{\mS,\mS}} \ar[equals, bend right=2em]{ddl} & B(\mS,\Sigmas(\mS+\mS)) \ar{r}[yshift=.5em]{B(\id,\Sigmas\nabla)} \ar{dd}{B_0\Sigmas\nabla} & B(\mS,\Sigmas(\mS)) \ar{r}[yshift=.5em]{B(\id,\hat\ini)} & B(\mS,\mS) \ar[equals]{ddr} & \\
& \Sigma((\mS+\mS)\times B_0(\mS+\mS)) \ar{ur}[swap]{\rho^0_{\mS+\mS}} \ar{dl}{\Sigma(\nabla\times B_0\nabla)} & & & & \\
\Sigma(\mS\times B_0(\mS)) \ar{rr}{\rho^0_{\mS}} & & B_0\Sigmas(\mS) \ar{rrr}{B_0\hat\ini} & & & B_0(\mS) 
\end{tikzcd}
\]
\end{rem}

\begin{rem}
The construction of the operational model (\Cref{def:operational-model}) only uses the component $\rho_{\mS,\mS}$ of the given higher-order GSOS law, followed by a codiagonal that simply forgets which of the two copies of $\mS$ the individual variables of the output term come from. It would thus be tempting to generalize higher-order GSOS laws \eqref{eq:law} to dinatural transformations of type 
\begin{equation}\label{eq:ho-gsos-law-single-object}
\rho'_X\colon \Sigma(X\times B(X,X))\to B(X,\Sigmas X).
\end{equation}
Note that they are indeed more general: every higher-order GSOS law \eqref{eq:law} induces a dinatural transformation \eqref{eq:ho-gsos-law-single-object} by putting 
\[ \rho'_X = ( \begin{tikzcd}
\Sigma(X\times B(X,X)) \ar{r}{\rho_{X,X}} & B(X,\Sigmas(X+X)) \ar{r}{B(\id,\nabla)} & B(X,\Sigmas X)
\end{tikzcd} ). \] 
Unfortunately, this format appears to be too permissive to guarantee congruence. Specifically, the proof of \Cref{lem:law_comm} below, which is the key to our congruence result, crucially rests on the `two-variable' form $\rho_{X,Y}$ of higher-order GSOS laws and does not carry over to the generalized format. 
\end{rem}

\subsection{Compositionality}\label{sec:ho-gsos-compositional}
We now investigate when a higher-order GSOS law gives rise to a compositional semantics.
Recall from \Cref{sec:abstract-gsos} that in first-order abstract GSOS, compositionality comes for free and is an immediate consequence of $\mS$ extending to an initial bialgebra and $\nu B$ extending to a final bialgebra for a given GSOS law. We shall see in \Cref{sec:bialg} that the latter does not carry over to the higher-order setting. Therefore, we take a different route to compositionality, working in a framework of regular categories (\Cref{sec:categories}).

Assuming that the final $B(\mS ,\argument)$-coalgebra
\[
 \zeta \c
  \finalc
  \to
  B(\mS , \finalc)
\]
exists, we think of the unique coalgebra morphism
$\coiter(\iota^\clubsuit) \c \mS  \to Z$ from the operational model $\iota^\clubsuit\colon \mS\to B(\mS,\mS)$ (\Cref{def:operational-model}) to $(Z,\zeta)$ as the map assigning to each
program in $\mS $ its abstract behaviour. The ensuing notion
of \emph{behavioural equivalence} is then expressed categorically by
the kernel pair of $\coiter(\iota^\clubsuit)$, i.e.\ the pullback
\begin{equation}
  \label{eq:kernel}
\begin{tikzcd}[column sep=3em, row sep=normal]
  E
  \rar["p_1"]
  \dar["p_2"']
  \ar[dr, phantom , very near start, color=black]
  \pullbackangle{-45}
  &
\mS 
  \dar["\coiter \iota^\clubsuit"]\\
\mS  \rar["\coiter \iota^\clubsuit"] & \finalc
\end{tikzcd}
\end{equation}

\begin{definition}
The kernel pair $E$ is a \emph{congruence} if it forms a subalgebra of the product algebra $\mS\times \mS$; that is, $E$ can be equipped with a (necessarily unique) $\Sigma$-algebra structure $(E,e)$ such that $p_1,p_2\colon (E,e)\to (\mS,\ini)$ are $\Sigma$-algebra homomorphisms.
\end{definition}

\begin{rem}
  For $\C=\Set$ and $\Sigma$ a polynomial functor, the kernel $E$ is
  the equivalence relation on~$\mS $ defined by
  \[
    E = \{\, (s,t)\in \mS \times \mS \c (\coiter\iota^\clubsuit)(s)=(\coiter\iota^\clubsuit)(t)\, \}
  \]
  with the two projection maps $p_1(s,t)=s$ and $p_2(s,t)=t$, and $E$
  forms a congruence in the above categorical sense if and only if it
  forms a congruence in the usual algebraic sense as recalled in \Cref{sec:categories}, i.e.~it is compatible with the $\Sigma$-algebra
  structure of $\mS$.
\end{rem}
\noindent Our main compositionality result asserts that for a regular base category~$\C$ and under mild conditions on the functors $\Sigma$ and $B$, behavioural equivalence is always a congruence.
\begin{rem}\label{rem:reflexive}
  Recall that a parallel pair $f,g\c X \parto Y$ is \emph{reflexive}
  if there exists a common splitting, viz.~a morphism $s\c Y \to X$ such that
  $f\comp s = \id_Y = g \comp s$. A \emph{reflexive coequalizer} is a
  coequalizer of a reflexive pair. Preservation of reflexive
  coequalizers is a relatively mild condition for set functors. For instance, every polynomial set functor~$\Sigma$ and, more generally,
  every finitary set functor preserves reflexive coequalizers~\cite[Cor.~6.30]{AdamekEA11}.
\end{rem}
\begin{theorem}[Compositionality]
  \label{th:main}
  Suppose that \Cref{mainassumptions} hold, and
  let $\rho$ be a $\Pt$-pointed higher-order GSOS law of
  $\Sigma \c \gcat \to \gcat$ over
  $B\c \gcat^{\opp} \product \gcat \to \gcat$. Suppose that the
  category $\C$ is regular, that~$\Sigma $ preserves reflexive
  coequalizers, and that $B$ preserves monomorphisms. Then the kernel
  pair of $\coiter(\iota^\clubsuit) \c \mS \to \finalc$ is a
  congruence.
\end{theorem}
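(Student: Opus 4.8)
The plan is to reduce the congruence property to a single equation between two parallel morphisms out of $\Sigma E$, and then to establish that equation by a behavioural argument driven by the GSOS law. Write $b = \coiter(\iota^\clubsuit)\c \mS \to \finalc$, so that $(E,p_1,p_2)$ is its kernel pair. By the universal property of this pullback, the two morphisms $\iota\cdot\Sigma p_1$ and $\iota\cdot\Sigma p_2$ of type $\Sigma E\to \mS$ factor through a common $e\c \Sigma E\to E$ with $p_i\cdot e = \iota\cdot\Sigma p_i$ if and only if $b\cdot\iota\cdot\Sigma p_1 = b\cdot\iota\cdot\Sigma p_2$; and such a factorization is exactly a $\Sigma$-algebra structure on $E$ turning $p_1,p_2$ into $\Sigma$-algebra morphisms. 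Hence \emph{$E$ is a congruence if and only if $b\cdot\iota\cdot\Sigma p_1 = b\cdot\iota\cdot\Sigma p_2$}, and the whole proof comes down to this equation. Equivalently, factoring $b = m\cdot e$ through its regular image $m\c I\monoto \finalc$, the pair $p_1,p_2$ is reflexive (the diagonal $\mS\to E$ splits both) and $e$ is its coequalizer; since $\Sigma$ preserves reflexive coequalizers, $\Sigma e$ is the coequalizer of $\Sigma p_1,\Sigma p_2$, so the very same equation is what allows $e\cdot\iota$ to descend to an algebra structure on $I$, making $e$ a $\Sigma$-algebra morphism and $E=\ker(e)$ a congruence.

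To prove the equation I would first unfold the two behaviours using the defining diagram \eqref{diag:can-model} of the operational model (\Cref{lem:clubs} with $A=\mS$, $a=\iota$, so that $\iter\iota=\id$): for $i=1,2$,
\[
\iota^\clubsuit\cdot\iota\cdot\Sigma p_i \;=\; B(\id,\hat\iota\cdot\Sigmas\nabla)\cdot\rho_{\mS,\mS}\cdot\Sigma\langle p_i,\,\iota^\clubsuit\cdot p_i\rangle.
\]
Since $b\cdot p_1 = b\cdot p_2$ and $b$ is a $B(\mS,\argument)$-coalgebra morphism, applying $B(\mS,b)$ also yields $B(\mS,b)\cdot\iota^\clubsuit\cdot p_1 = B(\mS,b)\cdot\iota^\clubsuit\cdot p_2$. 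Thus the two data fed into $\rho_{\mS,\mS}$, namely $\langle p_i,\iota^\clubsuit p_i\rangle$, share the \emph{same} contravariant component $\mS$ while their covariant components agree after projecting to $\finalc$. The key Lemma \Cref{lem:law_comm} is meant to package exactly this: exploiting naturality of $\rho$ in its covariant variable $Y$ with $X=\mS$ held fixed, it shows that $\rho_{\mS,\mS}$ sends such $E$-related data to outputs in $B(\mS,\Sigmas(\mS+\mS))$ that become equal after postcomposition with $b$ (modulo the congruence closure of $E$). Threading this through $B(\id,\hat\iota\cdot\Sigmas\nabla)$ and $b$ delivers the desired equation.

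Because a conclusion term may contain an $E$-related variable in several positions, relatedness of outputs will in general hold only up to the \emph{congruence closure} $\langle E\rangle$ of $E$ — the abstract counterpart of the transitive/contextual closure forced in the proof of \Cref{prop:skicong1}. I would therefore actually prove $\langle E\rangle\seq E$: realise $\langle E\rangle$ as the regular image of $\langle \hat\iota\cdot\Sigmas p_1,\ \hat\iota\cdot\Sigmas p_2\rangle\c \Sigmas E\to \mS\times\mS$, which (using regularity together with $\Sigma$ preserving reflexive coequalizers) is the least $\Sigma$-congruence containing $E$, and then show that $b$ coequalizes its two projections, i.e.\ $b\cdot\hat\iota\cdot\Sigmas p_1 = b\cdot\hat\iota\cdot\Sigmas p_2$, by promoting the one-step compatibility above along the free-algebra structure. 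As $E\seq\langle E\rangle$ holds trivially, this yields $E=\langle E\rangle$, so $E$ is a congruence.

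The hard part is the contravariant occurrence of $\mS$ in $B(\mS,\argument)$: the inputs to higher-order behaviours are themselves full program terms, so one cannot hope to ``relate inputs by $E$''. The resolution — and the reason the two-variable law $\rho_{X,Y}$ is indispensable, in contrast to the single-variable variant \eqref{eq:ho-gsos-law-single-object} — is to keep $X=\mS$ literally fixed (this is precisely the uniformity of the format: rules never inspect their inputs) and to transport relatedness solely through the covariant argument $Y$ by naturality; this is the content that \Cref{lem:law_comm} must deliver, and it is where a single-variable law would fail. The remaining subtlety is bookkeeping for relation liftings: ensuring that the liftings of $E$ and $\langle E\rangle$ through $B(\mS,\argument)$ are honest subobjects is exactly where the hypothesis that $B$ preserves monomorphisms is used, while $\Sigma$ preserving reflexive coequalizers is what lets $\Sigma$ commute with the quotient $e$ and thereby transfer the algebra structure to the image.
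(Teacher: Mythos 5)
Your opening reduction is sound and agrees with the paper's own strategy (\Cref{rem:proof-strategy}): writing $b=\coit(\iota^\clubsuit)$, the kernel pair $E$ is a congruence iff $b\comp\iota\comp\Sigma p_1=b\comp\iota\comp\Sigma p_2$, equivalently iff $b$ coequalizes the free extensions $p_1^\star=\hat\iota\comp\Sigmas p_1$ and $p_2^\star$, and regularity together with preservation of reflexive coequalizers by $\Sigma$ then transfers the algebra structure to the quotient. The genuine gap is in how you propose to prove this equation. Your one-step analysis (naturality of $\rho$ in $Y$, instantiated at $b$) only yields that the $\rho$-outputs of $E$-related data become equal after pushing the covariant variables forward along $b$, i.e.\ the successors are related \emph{up to the congruence closure} $\langle E\rangle$ — as you yourself note. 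Closing the loop, i.e.\ deducing $\langle E\rangle\seq E$, cannot be done ``by promoting the one-step compatibility along the free-algebra structure''. A freeness argument would need $b\comp p_1^\star$ and $b\comp p_2^\star$ to be $\Sigma$-algebra morphisms into a common algebra, hence a $\Sigma$-algebra structure on $\finalc$ making $b$ an algebra morphism — and no such structure exists in general; this is precisely the failure of final bialgebras recorded in \Cref{ex:final-bialgebra}. The other conceivable promotion, showing that the equalizer of $b\comp p_1^\star$ and $b\comp p_2^\star$ is a subalgebra of $\Sigmas E$ containing the generators, requires exactly the statement that $\Sigma$-operations applied to $b$-identified arguments are again $b$-identified — which is the original goal, so the argument is circular. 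What is missing is the coinduction step: the only handle on $b$ is the finality of $(\finalc,\zeta)$, and your proposal never invokes finality at all.

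This is exactly what the paper's proof supplies. It forms the quotient $\iter(\iotaq)\c\mS\epito\mSq$ of the reflexive pair $p_1^\star,p_2^\star$ (using that $\Sigma$ preserves reflexive coequalizers), equips $\mSq$ with a coalgebra structure $\varsigma$ satisfying $\varsigma\comp\iter(\iotaq)=\iotaq^\clubsuit$ (\Cref{lem:mSq-coalg} — this is where \Cref{lem:law_comm}, the regular image factorization of $b$, and preservation of monos by $B$ are actually used, rather than for ``relation liftings'' as you suggest), and then applies finality of $(\finalc,\zeta)$ to the $B(\mS,\argument)$-coalgebra $B(\iter\iotaq,\id)\comp\varsigma$ on $\mSq$: the resulting morphism $m\c\mSq\to\finalc$ satisfies $b=m\comp\iter(\iotaq)$ by uniqueness of $\coit(\iota^\clubsuit)$, whence $b\comp p_1^\star=m\comp\iter(\iotaq)\comp p_1^\star=m\comp\iter(\iotaq)\comp p_2^\star=b\comp p_2^\star$, which is your key equation. (A relation-based coinduction — equipping $\langle E\rangle$ itself with a coalgebra structure and invoking finality — could conceivably serve as an alternative, but your sketch does not construct such a structure either.) Incidentally, \Cref{lem:law_comm} does not say what you ascribe to it: it is the dinaturality-driven identity $B(\id,\iter a)\comp\iota^\clubsuit=B(\iter a,\id)\comp a^\clubsuit$, whose role is to compare behaviour computed in a quotient algebra with behaviour computed in $\mS$, not to relate $E$-related inputs to $\langle E\rangle$-related outputs.
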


\noindent (Note that a morphism $(f,g)$ in
$\gcat^{\opp} \product \gcat$ is monic iff $g$ is monic in~$\gcat$ and
$f$ is epic in $\gcat$.) 

\begin{remark}\label{rem:proof-strategy}
We will show that the coequalizer $q\c \mS \epito Q$ of $p_1,p_2$ in $\C$ can be equipped with a
$\Sigma $-algebra structure $a\colon \Sigma  Q \to Q$ such that $q$
is a $\Sigma $-algebra morphism from~$\mS $ to $(Q,a)$, that is,
$q=\iter(a)$. This immediately implies the theorem: since $p_1,p_2$ is the kernel pair of $q$ in $\C$ and the forgetful functor from $\Alg(\Sigma)$ to $\C$ creates limits, in particular kernel pairs, there exists a unique $\Sigma$-algebra structure $e$ on $E$ such that $p_1,p_2\colon (E,e)\to (\mS,\ini)$ are $\Sigma$-algebra homomorphisms (and this makes $p_1,p_2$ the kernel pair of $q$ in $\Alg(\Sigma)$).
\end{remark}

We make use of the following two lemmas,
which are of independent interest, en route to proving
\Cref{th:main}. First, we establish a crucial connection between
$\iota^\clubsuit$ and $a^\clubsuit$ (for general $a\c\Sigma A\to A$),
showing that they can be unified by running the unique morphism
$\iter(a)\c\mS\to A$ at the covariant and the contravariant positions
of $B$ correspondingly. This critically relies on (di)naturality
of the given law $\rho$.
\begin{lemma}\label{lem:law_comm}
Let $(A,a)$ be a $\Sigma $-algebra. Then the following diagram commutes:
\begin{equation*}
\begin{tikzcd}[column sep=10em, row sep=normal]
\mS 
  \rar["\iota^\clubsuit"]
  \dar["a^\clubsuit"'] &
B(\mS ,\mS )
  \dar["{B(\id, \iter a)}"] \\
B(A,A)
  \rar["{B(\iter a,\id)}"] &
{B(\mS , A)}
\end{tikzcd}
\end{equation*}
\end{lemma}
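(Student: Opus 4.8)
The plan is to show that the two composites $B(\id,\iter a)\cdot\iota^\clubsuit$ and $B(\iter a,\id)\cdot a^\clubsuit$ --- call them $k,l\colon\mS\to B(\mS,A)$ --- coincide by recognising both as the unique solution of one and the same recursion over the initial algebra $\mS$. Throughout write $h:=\iter a\colon\mS\to A$; by \Cref{rem:alg-pointed}, $h$ is a morphism of $V$-pointed objects $(\mS,\iota)\to(A,a)$, so that both (di)naturality conditions on $\rho$ may be instantiated at $h$. First I would expand $k\cdot\iota$ and $l\cdot\iota$ using the defining diagram \eqref{diag:can-model} of \Cref{lem:clubs}, taking $(A,a)=(\mS,\iota)$ for $\iota^\clubsuit$ (so $\iter\iota=\id$ and $\hat a=\hat\iota$) and the given $(A,a)$ for $a^\clubsuit$. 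Writing $\nabla$ for the codiagonal on $\mS$, resp.\ $A$, this yields
\[ k\cdot\iota = B(\id,\, h\cdot\hat\iota\cdot\Sigmas\nabla)\cdot \rho_{\mS,\mS}\cdot \Sigma\langle\id,\iota^\clubsuit\rangle, \qquad l\cdot\iota = B(h,\, \hat a\cdot\Sigmas\nabla)\cdot \rho_{A,A}\cdot \Sigma\langle h, a^\clubsuit\rangle. \]

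The crux is to rewrite both right-hand sides into a common shape by transporting $\rho$ along $h$. For $l$, I would split $\Sigma\langle h,a^\clubsuit\rangle = \Sigma(h\times\id)\cdot\Sigma\langle\id,a^\clubsuit\rangle$ and commute $B(h,\id)$ past the $B(\id,-)$ factors; then \emph{dinaturality} of $\rho$ in $X$ at $h$ (the hexagon \eqref{diag:dinat-rho} with $X=\mS$, $X'=A$, $f=h$, $Y=A$) replaces $B(h,\id)\cdot\rho_{A,A}\cdot\Sigma(h\times\id)$ by $B(\id,\Sigmas(h+\id))\cdot\rho_{\mS,A}\cdot\Sigma(\id\times B(h,\id))$, while $\Sigma(\id\times B(h,\id))\cdot\Sigma\langle\id,a^\clubsuit\rangle = \Sigma\langle\id,l\rangle$. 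For $k$, I would invoke the Eilenberg--Moore identity $h\cdot\hat\iota = \hat a\cdot\Sigmas h$ (valid since $h$ is a $\Sigma$-algebra morphism), rewrite the codiagonals via $[h,h]=[h,\id_A]\cdot(\id+h)$, and then apply \emph{naturality} of $\rho$ in $Y$ at $h$ (the rectangle \eqref{diag:nat-rho} with $Y=\mS$, $Y'=A$, $g=h$) to turn $B(\id,\Sigmas(\id+h))\cdot\rho_{\mS,\mS}$ into $\rho_{\mS,A}\cdot\Sigma(\id\times B(\id,h))$, while $\Sigma(\id\times B(\id,h))\cdot\Sigma\langle\id,\iota^\clubsuit\rangle = \Sigma\langle\id,k\rangle$. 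In both cases the coefficient collapses to $\hat a\cdot\Sigmas[h,\id_A]$ (using $\Sigmas\nabla\cdot\Sigmas(h+\id)=\Sigmas[h,\id_A]$ and $\Sigmas\nabla\cdot\Sigmas h=\Sigmas[h,h]=\Sigmas[h,\id_A]\cdot\Sigmas(\id+h)$), so that
\[ k\cdot\iota = \Psi\cdot\Sigma\langle\id,k\rangle \qquad\text{and}\qquad l\cdot\iota = \Psi\cdot\Sigma\langle\id,l\rangle, \qquad \Psi := B(\id,\hat a\cdot\Sigmas[h,\id_A])\cdot\rho_{\mS,A}. \]

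Finally I would conclude by initiality, exactly as in the existence argument of \Cref{lem:clubs}: equip $\mS\times B(\mS,A)$ with the $\Sigma$-algebra structure $\langle\iota\cdot\Sigma\fst,\,\Psi\rangle$. Any $\Sigma$-algebra morphism from $\mS$ into this algebra has first component forced to be $\iter\iota=\id$ and second component a solution of $\xi\cdot\iota=\Psi\cdot\Sigma\langle\id,\xi\rangle$. Since both $\langle\id,k\rangle$ and $\langle\id,l\rangle$ are such algebra morphisms, initiality of $\mS$ gives $\langle\id,k\rangle=\langle\id,l\rangle$, whence $k=l$, which is exactly the commutativity asserted by the lemma.

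I expect the main obstacle to be the bookkeeping in the two rewriting steps: choosing the correct factorisations of the codiagonals and verifying that dinaturality (which moves the contravariant label position from $A$ back to $\mS$ along $h$) and naturality (which moves the covariant output position from $\mS$ up to $A$ along $h$) fit together so as to land on the very same $\Psi$. This dovetailing is precisely the point flagged in the paper's remark that the congruence result rests essentially on the two-variable form $\rho_{X,Y}$: a single-variable law of type \eqref{eq:ho-gsos-law-single-object} would not supply both the dinaturality and the naturality rewrites needed here.
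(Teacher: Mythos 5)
Your proposal is correct and follows essentially the same route as the paper's own proof: your $\Psi$-algebra $\langle\iota\comp\Sigma\fst,\Psi\rangle$ on $\mS\times B(\mS,A)$ is exactly the algebra $b$ used in the paper (since $\Sigmas\nabla\comp\Sigmas(\iter a+\id)=\Sigmas[\iter a,\id]$), and your morphisms $\langle\id,k\rangle$, $\langle\id,l\rangle$ are precisely the two composites \eqref{eq:law_comm2} and \eqref{eq:law_comm1} that the paper shows to be $\Sigma$-algebra morphisms into $(\mS\times B(\mS,A),b)$, using naturality in $Y$ for the $\iota^\clubsuit$ leg and dinaturality in $X$ for the $a^\clubsuit$ leg, before concluding by initiality. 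The only difference is presentational: the paper factors each composite into algebra morphisms and verifies the relevant squares diagrammatically, whereas you expand and rewrite $k\comp\iota$ and $l\comp\iota$ directly, but the underlying computations coincide.
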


\begin{example}[$\xCL$, cf.\ \Cref{ex:a-clubsuit-ski}]
The two legs of the diagram send the term $K\in \mS$ to the function $f\in A^\mS$ given by $f(t)=(K')^A((\iter(a))(t))$, equivalently $f(t)=(\iter(a))(K'(t))$ since $\iter(a)$ is a $\Sigma$-algebra morphism.
\end{example}

\begin{proof}[Proof of~\Cref{lem:law_comm}]
We strengthen the claim of the lemma a bit and show that the outside of the following diagram commutes:
\begin{equation*}
\begin{tikzcd}[column sep=-1em, row sep=3ex]
	\mS  & &[35ex] &[-4ex] \mS \times B(\mS,\mS )\\[3ex]
	\mS \times B(A,A) &   & & \mS \times B(\mS, A)  
	\arrow[from=2-1, to=2-4, "{\id \times B(\iter a, \id)}"]
	\arrow[from=1-1, to=2-1, "\brks{\id,\,a^\clubsuit}"']
	\arrow[bend left=6pt, from=1-1, to=2-4, "\iter b" {yshift=-5pt, xshift=22pt}]
    \arrow[from=1-4, to=2-4, "{\id\times B(\id ,\iter a)}"]
	\arrow[from=1-1, to=1-4, "\brks{\id,\,\iota^\clubsuit}"]
\end{tikzcd}
\end{equation*}
From this, the goal easily follows by applying the right projection $\snd$.
To prove commutativity of the diagram, we consider the $\Sigma$-algebra structure $b$ on $\mS\times B(\mS,A)$ given by the following composite:
\begin{equation*}
\begin{tikzcd}[column sep = 3.2em, row sep=4em]
  \Sigma (\mS\times B(\mS, A))
  \ar[r,"{\brks{\iota\comp\Sigma \fst,\,\rho_{\mS ,A}}}"]
& 
\mS\times B(\mS, \Sigmas (\mS+A))
  \ar[dl,"{\id\times B(\id, \Sigmas(\iter a + \id))}"']
\\
\mS \times B(\mS ,\Sigmas (A+A))  
  \ar[r,"{\id\times B(\id,\Sigmas\nabla)}"]
&
\mS\times B(\mS ,\Sigmas  A)
  \ar[r,"{\id\times B(\id,\hat a)}"]
&
\mS \times B(\mS ,A).
\end{tikzcd}
\end{equation*}
We will show that the composite morphisms
\begin{align}
\mS \xto{\brks{\id ,\,\iota^\clubsuit}}   &\;\mS \times B(\mS,\mS)  \xto{\id\times B(\id, \iter a)} \mS \times B(\mS ,A)\label{eq:law_comm2} \\
\mS \xto{\brks{\id ,\,a^\clubsuit}}       &\;\mS \times B(A,A)      \xto{\id\times B(\iter a, \id)} \mS \times B(\mS ,A),\label{eq:law_comm1}
\end{align}
are both $\Sigma $-algebra morphisms from $\mS$ to $(\mS\times B(\mS,A),b)$;
hence they are equal to $\iter b$ by initiality of $\mS$.

The morphism \eqref{eq:law_comm2} is a composition of $\Sigma $-algebra morphisms:
$\brks{\id,\,\iota^\clubsuit}$ is so by definition, and $\id\times B(\id,\iter a)$
is so by commutativity of the following diagram:
\begin{equation*}
\begin{tikzcd}[column sep = 8em]
\Sigma (\mS \times B(\mS ,\mS ))
  \ar[dd,swap,"{\brks{\iota\comp\Sigma\fst,\, \rho_{\mS ,\mS }}}"]
  \ar[r,"{\Sigma (\id\times B(\id,\iter a))}"]
&
\Sigma (\mS\times B(\mS, A))
  \ar[d,"{\brks{\iota\comp\Sigma \fst,\,\rho_{\mS ,A}}}"]
\\
&
\mS\times B(\mS, \Sigmas (\mS+A))
  \ar[d,"{\id\times B(\id, \Sigmas(\iter a + \id))}"]
\\
\mS\times B(\mS ,\Sigmas  (\mS +\mS ))
  \ar[d,swap,"{\id\times B(\id,\Sigmas\nabla)}"]
  \ar[r,"{\id\times B(\id, \Sigmas(\iter a+\iter a))}"]
  \ar[ur,near end,bend left=10pt,"{\id\times B(\id,\Sigmas(\id+\iter a))}"]
&
\mS \times B(\mS ,\Sigmas (A+A))  
  \ar[d,"{\id\times B(\id,\Sigmas\nabla)}"]
\\
\mS\times B(\mS,\Sigmas (\mS) )
  \ar[d,swap,"{\id\times B(\id,\hat \iota)}"]
  \ar[r,"{\id\times B(\id,\Sigmas(\iter a))}"]
&
\mS\times B(\mS ,\Sigmas  A)
  \ar[d,"{\id\times B(\id,\hat a)}"]
\\
\mS\times B(\mS,\mS )
  \ar[r,"{\id\times B(\id,\iter a)}"]
&
\mS \times B(\mS ,A)
\end{tikzcd}
\end{equation*}
The three upper cells commute by naturality of $\rho$ and by functoriality of $B$
in the second argument; the bottom cell commutes because $\iter (a)\c {\mS\to A}$ 
is a $\Sigma $-algebra morphism. 

That the morphism~\eqref{eq:law_comm1} is a $\Sigma $-algebra morphism
is shown from the following diagram:
\begin{equation*}
\begin{tikzcd}[column sep=2em, row sep=normal]
  \Sigma(\mS)
  \ar[dddd,"\iota"'] 
  \rar["\Sigma\brks{\id,\, a^\clubsuit}"]
  & 
  \Sigma (\mS \times{B(A,A)})
  \rar["\Sigma (\id\times{B(\iter a,\,\id)})"]
  \dar["\brks{\iota\comp\Sigma\fst,\,\Sigma (\iter a\times\id)}"']
  &[3em]
  \Sigma (\mS\times {\BmS A })
  \dar["\brks{\iota\comp\Sigma\fst,\,\rho_{\mS ,A}}"]
  \\
  & \mS\times \Sigma (A\times B(A,A))
  \dar["\id\times\rho_{A,A}"'] &
  \mS\times\BmS{\Sigma^\star (\mS +A)}
  \dar["{\id\times\BmSf{\Sigma^\star(\iter a+\id)}}"] \\
  &
  \mS \times B(A,\Sigma^\star (A+A))
  \dar["{\id\times B(\id,\Sigma^\star\nabla)}"']
  \rar["{\id\times B(\iter a,\,\id)}"] &
  \mS \times\BmS{\Sigma^\star (A+A)}
  \dar["{\id\times\BmSf{\Sigma^\star\nabla}}"] \\
  &
  \mS\times B(A,\Sigma^\star A)
  \rar["{\id\times B(\iter a,\,\id)}"]
  \dar["{\id\times B(\id,\hat a)}"'] &
  \mS \times \BmS{\Sigma^\star A}
  \dar["{\id\times\BmSf{\hat a}}"] \\
  \mS
  \rar["\brks{\id,\,a^\clubsuit}"]&
  \mS \times B(A,A)
  \rar["{\id\times B(\iter a,\,\id)}"] &
  \mS \times \BmS A
\end{tikzcd}
\end{equation*}
The left cell commutes by definition of~$a^\clubsuit$. The two lower right cells 
commute by functoriality of $B$, and the
right upper cell commutes by an instance of dinaturality for~$\rho$:
\begin{equation*}
\begin{tikzcd}[column sep=-2em,row sep=4ex, baseline = (B.south)]
  &%
  \Sigma(\mS\times B(\mS ,A))
  \rar["\rho_{\mS ,A}"]
  &[4em]
  B(\mS, \Sigma^\star (\mS +A))
  \ar{dr}{{B(\id,\Sigma^\star(\iter a+\id))}}
  \\
  \Sigma (\mS\times{B(A,A)})
  \ar[ru,"{\Sigma(\id\times{B(\iter a,\id)})}"]
  \ar[rd,"{\Sigma(\iter a\times\id)}"']
  & & &[-1.5ex]
  B(\mS ,\Sigma^\star(A+A))
  \\
  &
  |[alias = B]|
  \Sigma (A\times{B(A,A)})
  \rar["\rho_{A,A}"]
  &%
  B(A,\Sigma^\star (A+A))
  \ar[ur,"{B(\iter a,\id)}"'] \tag*{\qedhere}
\end{tikzcd}     
\end{equation*}
\noqed
\end{proof}
\takeout{
\begin{lemma}%
For every $\Sigma $-algebra $(A,a)$ the following diagram commutes:
\begin{equation*}
\begin{tikzcd}[column sep=10em, row sep=normal]
\mS 
  \rar["\iter\brks{\id,\,\iota^\clubsuit}"]
  \dar["\iter \brks{\iter a,\,a^\clubsuit}"'] &
\mS \times B(\mS ,\mS )
  \dar["{\iter a\times B(\mS ,\iter a)}"] \\
A\times B(A,A)
  \rar["{A\times B(\iter a,A)}"] &
A\times {B(\mS ,A)}
\end{tikzcd}
\end{equation*}
\end{lemma}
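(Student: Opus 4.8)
The plan is to prove a slightly stronger statement and then project. Observe that both legs of the square factor through $\mS\times B(\mS,A)$ once we pair the source map with $\id_{\mS}$; concretely, consider
\[
\mS\xto{\brks{\id,\iota^\clubsuit}}\mS\times B(\mS,\mS)\xto{\id\times B(\id,\iter a)}\mS\times B(\mS,A)
\]
and
\[
\mS\xto{\brks{\id,a^\clubsuit}}\mS\times B(A,A)\xto{\id\times B(\iter a,\id)}\mS\times B(\mS,A).
\]
Postcomposing either composite with $\snd$ recovers the corresponding leg of the square, so it suffices to prove that these two composites into $\mS\times B(\mS,A)$ coincide. I would do this by equipping $\mS\times B(\mS,A)$ with a $\Sigma$-algebra structure $b$ and showing that both composites are $\Sigma$-algebra morphisms out of the initial algebra $(\mS,\iota)$; initiality then forces both to equal $\iter b$, hence to be equal. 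The structure $b$ is the evident one suggested by \Cref{lem:clubs}: pair $\iota\comp\Sigma\fst$ with $\rho_{\mS,A}$, and then collapse the resulting output term in the covariant argument of $B$ along $\Sigmas(\iter a+\id)$, $\Sigmas\nabla$ and $\hat a$.

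For the first composite, the map $\brks{\id,\iota^\clubsuit}=\brks{\iter\iota,\iota^\clubsuit}$ is a $\Sigma$-algebra morphism directly by the construction of $\iota^\clubsuit$ in \Cref{lem:clubs} (instantiated at $(A,a)=(\mS,\iota)$, where $\iter\iota=\id$), and it remains to see that $\id\times B(\id,\iter a)$ is a $\Sigma$-algebra morphism into $(\mS\times B(\mS,A),b)$. This is a routine diagram whose cells commute by naturality of $\rho$ in the covariant argument $Y$ (the rectangle \eqref{diag:nat-rho} applied to $g=\iter a\colon\mS\to A$), by functoriality of $B$, and by the fact that $\iter a$ is a $\Sigma$-algebra morphism, i.e.\ $\hat a\comp\Sigmas(\iter a)=\iter a\comp\hat\iota$. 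Being a composite of $\Sigma$-algebra morphisms, the first composite equals $\iter b$.

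The second composite is where the real work lies. Here $\brks{\id,a^\clubsuit}$ is read off from the defining diagram \eqref{diag:can-model} of $a^\clubsuit$, and the task is to transport the postcomposed $\id\times B(\iter a,\id)$ across $\rho$. The crux is that one must convert an application of $\rho_{A,A}$ into one of $\rho_{\mS,A}$, that is, change the \emph{contravariant} slot of $B$ from $A$ back to $\mS$ along $\iter a$. This is exactly an instance of dinaturality of $\rho$ in $X$: the hexagon \eqref{diag:dinat-rho} for the morphism $\iter a\colon(\mS,p_{\mS})\to(A,p_A)$ of $V$-pointed objects (\Cref{rem:alg-pointed}), with the covariant argument fixed to $Y=A$, which yields
\[
B(\id,\Sigmas(\iter a+\id))\comp\rho_{\mS,A}\comp\Sigma(\id\times B(\iter a,\id))=B(\iter a,\id)\comp\rho_{A,A}\comp\Sigma(\iter a\times\id).
\]
The remaining cells commute by functoriality of $B$ in both arguments and by the defining equation of $a^\clubsuit$, so the second composite is again $\iter b$. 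I expect this dinaturality step to be the main obstacle, and it is precisely the place where the two-variable shape $\rho_{X,Y}$ is indispensable: the covariant slot is tamed by naturality, the contravariant slot by dinaturality, and collapsing the two variables into one (as in the alternative format $\rho'_X$) would destroy exactly this argument.
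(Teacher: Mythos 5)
Your proposal is correct and follows essentially the same route as the paper: reduce the square to its second components, equip $\mS\times B(\mS,A)$ with the algebra structure $b$ built from $\brks{\iota\comp\Sigma\fst,\,\rho_{\mS,A}}$ and the collapsing maps, and conclude by initiality that both composites equal $\iter b$, with naturality of $\rho$ in $Y$ handling the covariant leg and dinaturality at the pointed $\Sigma$-algebra morphism $\iter a$ handling the contravariant leg. The paper's own write-up of this (paired) statement merely packages the identical argument with an additional auxiliary algebra structure on $\mS\times B(A,A)$ (whose induced morphism out of $\mS$ is exactly your $\brks{\id,\,a^\clubsuit}$) and three appeals to initiality instead of two, so the difference is purely organizational.
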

\begin{proof}
  First, recall that $A\times {B(A,A)}$ and
  $\mS \times {B(\mS ,\mS )}$ are $\Sigma $-algebras with the
  structures~$\brks{\iter a,\,a^\clubsuit}$ and $\brks{\id,\,\iota^\clubsuit}$ obtained as
  in~\eqref{eq:iter}. In addition, we define $\Sigma $-algebra
  structures $\alpha$ and $\alpha'$ on $\mS \times B(A,A)$ and
  $\mS \times B(\mS ,A)$, respectively, as follows:
  \[
    \begin{tikzcd}[column sep = 60]
      \Sigma (\mS \times B(A,A))
      \ar{d}{\brks{\iota\comp \Sigma \fst,\, \Sigma (\iter a\times B(A,A))}}
      \ar[shiftarr = {xshift=-60}]{dddd}[swap]{\alpha}
      &
      \Sigma (\mS \times B(\mS ,A))
      \ar{d}[swap]{ \brks{ \iota\comp \Sigma \fst,\,\rho_{\mS ,A} }  }
      \ar[shiftarr = {xshift=70}]{dddd}{\alpha'}
      \\
      \mS  \times \Sigma (A\times B(A,A))
      \ar{d}{\mS \times \rho_{A,A}}
      &
      \mS  \times B(\mS ,\Sigmas (\mS +A))
      \ar{d}[swap]{\mS \times B(\mS ,\Sigmas (\iter a + A))}
      \\
      \mS  \times B(A,\Sigmas  (A+A))
      \ar{d}{\mS  \times B(A,\Sigmas \nabla) }
      &
      \mS  \times B(\mS ,\Sigmas (A+A))
      \ar{d}[swap]{\mS  \times B(\mS ,\Sigmas \nabla) }
      \\
      \mS  \times B(A,\Sigmas  A)
      \ar{d}{\mS  \times B(A,\hat a)}
      &
      \mS  \times B(\mS ,\Sigmas  A)
      \ar{d}[swap]{\mS  \times B(\mS ,\hat a)}
      \\
      \mS  \times B(A,A)
      &
      \mS  \times B(\mS ,A)
    \end{tikzcd}
  \]

  It is our task to prove that the outside of the diagram below commutes: 
  \[
    \begin{tikzcd}[column sep=50]
      \mS 
      \ar[shiftarr = {xshift=-60}]{dd}[swap]{\iter \brks{\iter a,\,a^\clubsuit}}
      \ar{r}{\iter\brks{\id,\,\iota^\clubsuit}}
      \ar{d}[swap]{\iter \alpha}
      \ar{rd}[near end]{\iter \alpha'}
      & 
      \mS \times B(\mS ,\mS )
      \ar[shiftarr = {xshift=70}]{dd}{\iter a\times B(\mS ,\iter a)}
      \ar[dashed]{d}{\mS  \times B(\mS ,\iter a)}
      \\
      \mS \times B(A,A)
      \ar[dashed]{d}[swap]{\iter a \times B(A,A)}
      \ar[dashed]{r}[swap]{\mS  \times B(\iter a, A)}
      &
      \mS \times B(\mS ,A)
      \ar{d}{\iter a \times B(\mS , A)}
      \\
      A\times B(A,A)
      \ar{r}{A\times B(\iter a,A)}
      & 
      A\times B(\mS ,A)
    \end{tikzcd}
  \]
  The lower rectangle and the right-hand part obviously commute. To
  see that the remaining three parts commutes, we will prove that the
  three dashed arrows are appropriate $\Sigma $-algebra morphisms; the
  desired parts then commute by the initiality of $\mS $.
  \begin{enumerate}
  \item To show that
    $\iter(a)\times B(A,A)\c \mS \times B(A,A)\to A\times B(A,A)$ is a
    morphism, we prove that the following diagram commutes:
    \[
      \begin{tikzcd}[column sep=8em]
        \Sigma (\mS \times B(A,A))
        \ar{d}{\brks{\iota\comp \Sigma \fst,\. \Sigma (\iter a\times B(A,A))}}
        \ar{r}{\Sigma (\iter a \times B(A,A))}
        \ar[shiftarr = {xshift=-70}]{dddd}[swap]{\alpha}
        &
        \Sigma (A\times B(A,A))
        \ar{dd}{ \brks{ a\comp \Sigma \fst,\,\rho_{A,A} }  }
        \ar[shiftarr = {xshift=60}]{dddd}{\brks{\iter a,\,a^\clubsuit}}
        \\
        \mS  \times \Sigma (A\times B(A,A))
        \ar{d}[swap]{\mS \times \rho_{A,A}}
        \\
        \mS  \times B(A,\Sigmas  (A+A))
        \ar{d}[swap]{\mS  \times  B(A,\Sigmas \nabla) }
        \ar{r}{\iter a\times B(A,\Sigmas (A+A))}
        &
        A \times B(A,\Sigmas (A+A))
        \ar{d}{A \times  B(A,\Sigmas \nabla) }
        \\
        \mS  \times B(A,\Sigmas  A)
        \ar{d}[swap]{\mS  \times B(A,\hat a)}
        \ar{r}{\iter a\times B(A,\Sigmas  A)}
        &
        A \times B(A,\Sigmas  A) \ar{d}{A \times B(A,\hat a)}
        \\
        \mS  \times B(A,A)
        \ar{r}{\iter a\times B(A,A)}
        &
        A \times B(A,A)
      \end{tikzcd}
    \]
    Indeed, the upper part commutes because $\iter(a)$ is a morphism
    of $\Sigma $-algebras, and the other parts trivially commute.

  \item Next, we show that
    $\mS \times B(\mS ,\iter(a))\c \mS \times B(\mS ,\mS ) \to
    \mS \times B(\mS ,A)$ is a $\Sigma$-algebra morphism, which amounts to
    commutativity of the following diagram:
    \[
      \begin{tikzcd}[column sep = 80]
        \Sigma (\mS \times B(\mS ,\mS ))
        \ar{dd}[swap]{\brks{\iota\comp \Sigma \fst,\, \rho_{\mS ,\mS } }}
        \ar{r}{ \Sigma (\mS \times B(\mS ,\iter a)) }
        \ar[shiftarr = {xshift=-70}]{dddd}[swap]{\brks{\id,\,\iota^\clubsuit}}
        &
        \Sigma (\mS \times B(\mS ,A))
        \ar{d}{ \brks{ \iota\comp \Sigma \fst,\,\rho_{\mS ,A} }  }
        \ar[shiftarr = {xshift=90}]{dddd}{\alpha'}
        \\
        &
        \mS  \times B(\mS ,\Sigmas (\mS +A))
        \ar{d}{\mS \times B(\mS , \Sigmas (\iter a + A))}
        \\
        \mS  \times B(\mS ,\Sigmas  (\mS +\mS ))
        \ar{d}[swap]{\mS  \times B(\mS ,\Sigmas \nabla) }
        \ar{r}[swap]{\mS \times B(\mS ,\Sigmas (\iter a+\iter a))}
        \ar{ur}[near end]{\mS \times B(\mS , \Sigmas (\mS +\iter a)) }
        &
        \mS  \times B(\mS ,\Sigmas (A+A))  \ar{d}{\mS  \times B(\mS ,\Sigmas \nabla) }
        \\
        \mS  \times B(\mS ,\Sigmas  \mS )
        \ar{d}[swap]{\mS  \times B(\mS ,\hat \iota)}
        \ar{r}{ \mS \times B(\mS ,\Sigmas  \iter a) }
        &
        \mS  \times B(\mS ,\Sigmas  A)
        \ar{d}{\mS  \times B(\mS ,\hat a)}
        \\
        \mS  \times B(\mS ,\mS )
        \ar{r}{ \mS \times B(\mS , \iter a)  }
        &
        \mS  \times B(\mS ,A)
      \end{tikzcd}
    \]
    The upper part commutes by naturality of $\rho_{\mS ,-}$ and the
    lower part because $\iter(a)\c \mS  \to A$ is a $\Sigma $-algebra
    morphism. The two remaining parts commute trivially.

  \item Finally, we prove that 
    $\mS \times B(\iter(a),A)\c \mS \times{B(A,A)}\to \mS \times
    B(\mS ,A)$ is a $\Sigma$-algebra morphism. The relevant diagram is
    \[
      \begin{tikzcd}[column sep=8em, row sep=normal]
        \Sigma (\mS \times{B(A,A)})
        \rar["\Sigma (\mS \times{B(\iter a,\,A)})"]
        \ar{d}{\brks{\iota\comp\Sigma \fst,\,\Sigma (\iter a\times  B(A,A))}}
        \ar[shiftarr = {xshift=-70}]{dddd}[swap]{\alpha}
        &
        \Sigma (\mS \times {B(\mS ,\,A)})
        \dar["\brks{\iota\comp\Sigma \fst,\,\rho_{\mS ,A}}"]
        \ar[shiftarr = {xshift=90}]{dddd}{\alpha'}
        \\
        \mS \times \Sigma (A\times B(A,A))
        \dar["\mS \times\rho_{A,A}"']
        &
        \mS \times B(\mS ,\Sigma^\star (\mS +A))
        \dar["{\mS \times B(\mS ,\Sigma^\star(\iter a+A))}"]
        \\
        \mS \times B(A,\Sigma^\star (A+A))
        \dar["{\mS \times B(A,\Sigma^\star\nabla)}"']
        \rar["{\mS \times B(\iter a,\,\Sigma^\star(A+A))}"]
        &
        \mS \times B(\mS ,\Sigma^\star (A+A))
        \dar["{\mS \times B(\mS ,\Sigma^\star\nabla)}"]
        \\
        \mS \times B(A,\Sigma^\star A)
        \rar["{\mS \times B(\iter a,\,\Sigma^\star A)}"]
        \dar["{\mS \times B(A,\hat a)}"']
        &
        \mS \times B(\mS ,\Sigma^\star A)
        \dar["{\mS \times B(\mS ,\hat a)}"]
        \\
        \mS \times B(A,A)
        \rar["{\mS \times B(\iter a,\,A)}"]
        &
        \mS \times B(\mS ,A)
      \end{tikzcd}
    \]
    The two lower parts commute trivially. For the upper rectangle we
    postcompose with the projections of the product in the lower
    right-hand corner: the left-hand component is easily seen to
    commute, and for the right-hand component we use the following
    instance of dinaturality for $\rho_{-,A}$:
    \[
      \begin{tikzcd}[column sep = -20, row sep = 15, baseline = (B.base)]
        &
        \Sigma (\mS \times B(\mS ,A)
        \ar{r}{\rho_{\mS ,A}}
        &[6em]
        B(\mS ,\Sigma^\star (\mS +A))
        \ar{dr}[near end]{B(\mS ,\Sigma^\star (\iter a+A))}
        \\
        \Sigma (\mS \times{B(A,A)})
        \ar{ru}[near start]{\Sigma (\mS \times{B(\iter a,A)})}
        \ar{rd}[near start,swap]{\Sigma (\iter a\times B(A,A))}
        & & &[-2em]
        B(\mS ,\Sigma^\star (A+A))
        \\
        &
        \Sigma (A\times{B(A,A)})
        \ar{r}{\rho_{A,A}}
        &[2em]
        |[alias = B]|
        B(\Sigma^\star (A+A),A)
        \ar{ur}[near end,swap]{B(\iter a,\Sigma^\star (A+A))}
      \end{tikzcd}
    \]
  \end{enumerate}
\end{proof}
}
Using the universal property of the pullback~\eqref{eq:kernel}, we obtain a morphism
${d\colon \mS \to E}$ such that $p_1 \comp d = \id$ and $p_2 \comp d = \id$. (This only needs the definition of the pullback, not the previous lemma.)
It follows that ${p_1^\star, p_2^\star\colon \Sigma^\star E \parto \mS}$ is a reflexive pair in~$\C$ with common section $\eta_E \comp d$, where~$\eta$ is the unit of the monad~$\Sigma^\star$ and $p_i^\star$ is the free extension of $p_i\colon E\to \mS$. By our assumptions, the coequalizer of $p_1^\star$ and $p_2^\star$ is
preserved by the functor~$\Sigma$. Hence, there exists a $\Sigma$-algebra structure
${\iotaq\c\Sigma(\mSq)\to\mSq}$, obtained using the universal property
of the coequalizer $\Sigma(\iter(\iotaq))$ from the diagram
\begin{equation}\label{eq:it-i-quot-coeq}
\begin{tikzcd}[column sep=5em, row sep=normal]
\Sigma\Sigma^\star E\dar["\iota_E"']
  \ar[r,shift right=.5ex,"\Sigma p_2^\star"']
  \ar[r,shift left=.5ex,"\Sigma p_1^\star"]
  &
  \Sigma(\mS)\rar["\Sigma (\iter\iotaq)"]\dar["\iota"]
  &
  \Sigma(\mSq) \ar[dashed]{d}{\iotaq}
  \\
  \Sigma^\star E
  \ar[r,shift right=.5ex,"p_2^\star"']
  \ar[r,shift left=.5ex,"p_1^\star"] &
  \mS\rar["\iter\iotaq"] &  \mSq.
\end{tikzcd}
\end{equation}
Here $\iota_E$ denotes the $\Sigma$-algebra structure of the free algebra $\Sigmas E$, 
and we already denote the coequalizer of $p_1^\star$ and
$p_2^\star$ by $\iter(\iotaq)$, as commutation of the right-hand side
identifies it as the unique $\Sigma$-algebra morphism induced by $\iotaq$.
\begin{lemma}\label{lem:mSq-coalg}
Under the conditions of \Cref{th:main}, there exists a coalgebra structure
  $\varsigma\c\mSq\to B(\mSq,\mSq)$ making the triangle below commute,
  where $\iotaq^\clubsuit=(\iotaq)^\clubsuit$:
  \[
    \begin{tikzcd}[row sep = 2ex]
      & \mS \ar{dl}[swap]{\iter \iotaq} \ar{dr}{\iotaq^\clubsuit} & \\
      \mSq \ar{rr}{\varsigma} & & B(\mSq,\mSq) 
    \end{tikzcd}
  \]
\end{lemma}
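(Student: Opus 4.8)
The plan is to show that $\iotaq^\clubsuit\colon \mS\to B(\mSq,\mSq)$ coequalizes the pair $p_1^\star,p_2^\star\colon \Sigma^\star E\parto \mS$. Since $\iter\iotaq$ is by construction (diagram~\eqref{eq:it-i-quot-coeq}) the coequalizer of this pair, the universal property then yields a unique $\varsigma\colon \mSq\to B(\mSq,\mSq)$ with $\varsigma\cdot \iter\iotaq=\iotaq^\clubsuit$, which is exactly the asserted factorisation. To establish the coequalizing property I would reduce it to the level of \emph{variables}, i.e.\ to the pair $p_1,p_2\colon E\to \mS$, by exploiting that $\Sigma^\star E$ is the free $\Sigma$-algebra on $E$. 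Concretely, recall from the proof of \Cref{lem:clubs} (applied to $(A,a)=(\mSq,\iotaq)$) that $\brks{\iter\iotaq,\,\iotaq^\clubsuit}\colon \mS\to \mSq\times B(\mSq,\mSq)$ is a $\Sigma$-algebra morphism for a suitable $\Sigma$-algebra structure on the codomain. As each $p_i^\star$ is a $\Sigma$-algebra morphism $(\Sigma^\star E,\iota_E)\to(\mS,\iota)$, the two composites $\brks{\iter\iotaq,\iotaq^\clubsuit}\cdot p_i^\star$ are $\Sigma$-algebra morphisms out of the free algebra $\Sigma^\star E$, and by its universal property they coincide as soon as they agree after precomposition with $\eta_E$. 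Since $p_i^\star\cdot\eta_E=p_i$, this reduces the whole statement to the two equations $\iter\iotaq\cdot p_1=\iter\iotaq\cdot p_2$ and $\iotaq^\clubsuit\cdot p_1=\iotaq^\clubsuit\cdot p_2$; reading off the second component of the resulting identity then gives $\iotaq^\clubsuit\cdot p_1^\star=\iotaq^\clubsuit\cdot p_2^\star$, as required.

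The first equation $\iter\iotaq\cdot p_1=\iter\iotaq\cdot p_2$ is immediate, because $\iter\iotaq$ coequalizes $p_1^\star,p_2^\star$ and $p_i=p_i^\star\cdot\eta_E$. The real content is the second equation $\iotaq^\clubsuit\cdot p_1=\iotaq^\clubsuit\cdot p_2$, which I expect to be the crux. Here I would first pass to the contravariant slot: as $\iter\iotaq$ is a coequalizer, hence a regular epi, the morphism $(\iter\iotaq,\id)$ is monic in $\gcat^{\opp}\product\gcat$, so $B(\iter\iotaq,\id)$ is a monomorphism since $B$ preserves monos. It therefore suffices to prove $B(\iter\iotaq,\id)\cdot\iotaq^\clubsuit\cdot p_1=B(\iter\iotaq,\id)\cdot\iotaq^\clubsuit\cdot p_2$. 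By \Cref{lem:law_comm} (instantiated at $(A,a)=(\mSq,\iotaq)$) we have $B(\iter\iotaq,\id)\cdot\iotaq^\clubsuit=B(\id,\iter\iotaq)\cdot\iota^\clubsuit$, so the goal becomes
\[
B(\id,\iter\iotaq)\cdot\iota^\clubsuit\cdot p_1=B(\id,\iter\iotaq)\cdot\iota^\clubsuit\cdot p_2 .
\]

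To prove this last identity I would use regularity of $\gcat$ to image-factorise $\coiter(\iota^\clubsuit)=m\cdot e$, with $e\colon \mS\epito I$ the coequalizer of its kernel pair $E$ (i.e.\ of $p_1,p_2$) and $m\colon I\monoto \finalc$ monic. As $\iter\iotaq$ coequalizes $p_1,p_2$, it factors as $\iter\iotaq=k\cdot e$, so it is enough to show $B(\id,e)\cdot\iota^\clubsuit\cdot p_1=B(\id,e)\cdot\iota^\clubsuit\cdot p_2$ and postcompose with $B(\id,k)$. Finality of $(\finalc,\zeta)$ gives $B(\id,\coiter\iota^\clubsuit)\cdot\iota^\clubsuit=\zeta\cdot\coiter\iota^\clubsuit$; substituting $\coiter\iota^\clubsuit=m\cdot e$ and precomposing with $p_i$ yields $B(\id,m)\cdot B(\id,e)\cdot\iota^\clubsuit\cdot p_i=\zeta\cdot\coiter\iota^\clubsuit\cdot p_i$. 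Because $E$ is the kernel pair of $\coiter\iota^\clubsuit$ we have $\coiter\iota^\clubsuit\cdot p_1=\coiter\iota^\clubsuit\cdot p_2$, so the two right-hand sides coincide; since $(\id,m)$ is monic and $B$ preserves monos, $B(\id,m)$ is a monomorphism and may be cancelled, giving the desired equality.

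The main obstacle, and the reason the argument must proceed in this indirect fashion, is the danger of circularity: one cannot simply invoke $\coiter\iota^\clubsuit\cdot p_1^\star=\coiter\iota^\clubsuit\cdot p_2^\star$, since that assertion is essentially the congruence property we are ultimately trying to establish (it would require $\coiter\iota^\clubsuit$ to be compatible with the operations). The purpose of the free-algebra reduction is precisely to confine the use of behavioural equivalence to the variable-level pair $p_1,p_2$, where it holds by the very definition of $E$ as a kernel pair, while the monicity of $B(\iter\iotaq,\id)$ and $B(\id,m)$ — both consequences of $B$ preserving monomorphisms in a regular category — transports this single-step information through the behaviour bifunctor without ever assuming compatibility of $\coiter\iota^\clubsuit$ with the algebraic structure.
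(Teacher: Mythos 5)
Your proposal is correct and follows essentially the same route as the paper's own proof: reduce the coequalizing property of $\iotaq^\clubsuit$ to the generator-level pair $p_1,p_2$ via the $\Sigma$-algebra morphism $\brks{\iter\iotaq,\,\iotaq^\clubsuit}$ and freeness of $\Sigma^\star E$, transport the problem to the covariant slot along the monomorphism $B(\iter\iotaq,\id)$ using \Cref{lem:law_comm}, and finally discharge it through the regular image factorization $\coiter(\iota^\clubsuit)=m\comp e$ by cancelling the mono $B(\id,m)$ and invoking the kernel-pair equation together with the coalgebra-morphism property of $\coiter(\iota^\clubsuit)$. Each of these steps matches the paper's argument (your $e\colon \mS\epito I$ is the paper's $q\colon \mS\epito Q$), so there is nothing to add.
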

\begin{proof}
By definition of $\iter(\iotaq)$ as a coequalizer of $p_1^\star$ and $p_2^\star$,
it suffices to show that $\iotaq^\clubsuit$ also coequalizes $p_1^\star$ and $p_2^\star$,
which we strengthen to $\brks{\iter\iotaq,\, \iotaq^\clubsuit}\comp p_1^\star=\brks{\iter\iotaq, \iotaq^\clubsuit}\comp p_2^\star$. Since $\brks{\iter\iotaq,\, \iotaq^\clubsuit}\comp p_1^\star$ and $\brks{\iter\iotaq,\, \iotaq^\clubsuit}\comp p_2^\star$
are $\Sigma$-algebra morphisms, see diagram \eqref{diag:w-a-clubs-2}, whose domain is the free $\Sigma$-algebra $\Sigma^\star E$,
it suffices to show that the desired equation holds when precomposed with $\eta_E\colon E \to \Sigma^\star E$.
Thus, it remains to show that $\brks{\iter\iotaq,\, \iotaq^\clubsuit}\comp p_1=\brks{\iter\iotaq,\, \iotaq^\clubsuit}\comp p_2$,
which, in turn, reduces to $\iotaq^\clubsuit\comp p_1 = \iotaq^\clubsuit\comp p_2$.
We have
\begin{flalign*}
&&& \iotaq^\clubsuit\comp p_1=\iotaq^\clubsuit\comp p_2\\
&&\iff\; & B(\iter\iotaq,\id)\comp \iotaq^\clubsuit\comp p_1= B(\iter\iotaq,\id)\comp \iotaq^\clubsuit\comp p_2&\text{$B(\iter\iotaq,\id)$ is mono}\\
&&\iff\; & \BmSf{\iter\iotaq}\comp \iota^\clubsuit\comp p_1= \BmSf{\iter\iotaq}\comp\iota^\clubsuit\comp p_2.& \text{\Cref{lem:law_comm}}
\end{flalign*}
Let us denote the coequalizer of $p_1, p_2$ by $q\colon \mS \to
Q$. Since~$ \iter(\iotaq)$ coequalizes $p_1$ and $p_2$, it factorizes
through $q$. It thus suffices to show that
\begin{displaymath}
  \BmSf{q}\comp\iota^\clubsuit\comp p_1
  =
  \BmSf{q}\comp\iota^\clubsuit\comp p_2.
\end{displaymath}
By regularity of the base category~$\gcat$, the unique morphism
$m\colon Q \to Z$ such that $\coit(\iota^\clubsuit) = m \cdot q$ is monic. Since $\BmSf{\argument}$ 
preserves monomorphisms, it suffices to show that
\begin{displaymath}
  \BmSf{\coit\iota^\clubsuit} \comp\iota^\clubsuit\comp p_1
  =
  \BmSf{\coit\iota^\clubsuit} \comp\iota^\clubsuit\comp p_2.
\end{displaymath}
Note that
$\BmSf{\coit\iota^\clubsuit} \comp\iota^\clubsuit = \zeta\comp
\coit\iota^\clubsuit$ since $\coit(\iota^\clubsuit)$ is a coalgebra
morphism from $(\mS, \iota^\clubsuit)$ to $(Z, \zeta)$.  Hence the
above equation follows from
$\coit\iota^\clubsuit\comp p_1 = \coit\iota^\clubsuit\comp p_2$, which
holds by~\eqref{eq:kernel}.
\end{proof}

\noindent These preparations in hand, we can proceed with the proof of the main result.%
\begin{proof}[Proof of~\Cref{th:main}]
Using the coalgebra $\varsigma\c\mSq\to B(\mSq,\mSq)$ from \Cref{lem:mSq-coalg}
together with \Cref{lem:law_comm}, the upper
rectangular cell of the following diagram commutes:
\begin{equation}\label{eq:cong-argument}
\begin{tikzcd}[column sep=2em, row sep=4ex]
\mS  
  \rar[rr,"\iota^\clubsuit"]
  \dar["\iter\iotaq"']
  \ar[dr, "\iotaq^\clubsuit", near end]
  \ar[shiftarr = {xshift=-30}]{dd}[swap]{\coit\iota^\clubsuit} 
& &[6ex]
B(\mS, \mS)
  \dar["{B(\id, \iter\iotaq)}"]
  \ar[shiftarr = {xshift=45}]{dd}{B(\id,\coit\iota^\clubsuit)}
\\
\mSq
  \ar[r,"\varsigma"]
  \ar{d}[swap]{m}
& B(\mSq,\mSq)
  \rar["{B(\iter\iotaq,\id)}"] 
&
B(\mS,  \mSq)
  \ar{d}{B(\id, m)}
\\
  Z
  \ar[rr,"\zeta"] 
& &
  B(\mS, Z)
\end{tikzcd}
\end{equation}
By finality of $(Z,\zeta)$, we also have a morphism $m$ such
that the lower rectangular cell commutes. Therefore,
$\coit(\iota^\clubsuit) = m \comp \iter\iotaq$ by uniqueness of
$\coit(\iota^\clubsuit)$. From this we derive the desired result as
follows. First, we obtain a $\Sigma^\star$-algebra structure
$e\c\Sigma^\star E\to E$ such that $p_1\comp e = p_1^\star$ and
$p_2\comp e = p_2^\star$ by the universal property of~$E$ as the
pullback~\eqref{eq:kernel}, using that
\[\coit(\iota^\clubsuit)\comp p_1^\star = m\comp(\iter\iotaq)\comp
p_1^\star= m\comp(\iter\iotaq)\comp p_2^\star =
\coit(\iota^\clubsuit)\comp p_2^\star\]
by~\eqref{eq:it-i-quot-coeq}. This entails that the pair $p_1,p_2$
has the same coequalizer as $p_1^\star,p_2^\star$, i.e.\ $\iter(\iotaq)$. Indeed,
$\iter(\iotaq)$ coequalizes $p_1 = p_1^\star\comp\eta_E$ and $p_2 = p_2^\star\comp\eta_E$
by definition, and every morphism that coequalizes $p_1$ and $p_2$ must coequalize
$p_1\comp e = p_1^\star$ and $p_2\comp e = p_2^\star$, and hence factor uniquely through $\iter(\iotaq)$. 
Since $\iter(\iotaq)$ is the coequalizer of $p_1,p_2$ in $\C$, and moreover $\iter(\iotaq)$ is a $\Sigma$-algebra morphism by~\eqref{eq:it-i-quot-coeq}, we can use \Cref{rem:proof-strategy} to conclude the proof.
\end{proof}
\begin{example}\label{exa:xcl}
  The $\xCL$ calculus (\Cref{sec:unary-ski}) satisfies the
  assumptions of \Cref{th:main}: $\Set$ is a regular category, every
  polynomial functor $\Sigma$ preserves reflexive coequalizers (see
  \Cref{rem:reflexive}), and the behaviour functor $B(X,Y)=Y+Y^X$
  maps surjections to injections in the contravariant argument and
  preserves injections in the covariant one. Consequently,
  compositionality of $\xCL$ (\Cref{prop:skicong1}) is
  an instance of \Cref{th:main}. More generally, the compositionality of $\HO$ specifications (\Cref{prop:cong-ho}) follows from \Cref{th:main}.
\end{example} 
\begin{example}
  The nondeterministic $\xCL$ calculus (\Cref{sec:nd-ski})
  is handled analogously; just observe that the finite power set
  functor $\mypowfin$ preserves both surjections and injections. Thus,
  compositionality (\Cref{prop:skicong2}) again follows from \Cref{th:main}.
\end{example}

\subsection{Higher-order bialgebras}\label{sec:bialg}
We conclude this section with a bialgebraic perspective on higher-order GSOS laws. First, we isolate the underlying higher-order notion of coalgebra:

\begin{definition}[Higher-Order Coalgebra] A \emph{higher-order coalgebra} for a mixed variance bifunctor $B\colon \C^\opp\times \C\to \C$ is a pair $(C,c)$ of an object $C\in \C$ and a morphism $c\colon C\to B(C,C)$. A \emph{morphism} from $(C,c)$ to a higher-order coalgebra $(C',c')$ is a morphism $h\colon C\to C'$ of $\C$ such that the following diagram commutes:
\begin{equation*}
\begin{tikzcd}[column sep=5ex, row sep=normal]
C
  \ar[rr,"c"]
  \dar["h"'] & &[4ex]
B(C,C)
  \dar["{B(\id,h)}"] \\
C'\ar[r,"c'"] 
& 
B(C',C')
  \rar["{B(h,\id)}"] 
& 
B(C,C')
\end{tikzcd}
\end{equation*}
\end{definition}

\begin{proposition}\label{prop:rho-bialg-cat}
  Higher-order coalgebras for $B$ and their morphisms form a category.%
\end{proposition}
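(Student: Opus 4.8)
The plan is to verify the three defining data of a category: that the identity $\C$-morphisms qualify as higher-order coalgebra morphisms, that the class of such morphisms is closed under composition, and that the associativity and unit laws hold. Since a higher-order coalgebra morphism is by definition nothing but a $\C$-morphism subject to a single commutativity constraint, and since composition and identities will be taken to be those of $\C$, associativity and the unit laws are inherited from $\C$ for free. The only genuine content is therefore the closure of the morphism class under identities and composition.

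For identities, I would observe that the morphism condition for $h = \id_C \colon (C,c) \to (C,c)$ reads $B(\id_C,\id_C)\cdot c = B(\id_C,\id_C)\cdot c\cdot \id_C$, which holds trivially because $B$ preserves identities, so that $B(\id_C,\id_C) = \id_{B(C,C)}$. For composition, given morphisms $h\colon (C,c)\to (C',c')$ and $k\colon (C',c')\to (C'',c'')$, the task is to show that $k\cdot h$ satisfies
\[
B(\id_C, k\cdot h)\cdot c \;=\; B(k\cdot h, \id_{C''})\cdot c''\cdot (k\cdot h).
\]
The argument is a diagram chase starting from the left-hand side. First, covariant functoriality of $B$ in its second argument splits $B(\id_C, k\cdot h)$ as $B(\id_C,k)\cdot B(\id_C,h)$, and the morphism condition for $h$ rewrites $B(\id_C,h)\cdot c$ as $B(h,\id_{C'})\cdot c'\cdot h$. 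The resulting composite $B(\id_C,k)\cdot B(h,\id_{C'})$ can be reassembled via the bifunctor interchange law as $B(h,k)$ and then redecomposed as $B(h,\id_{C''})\cdot B(\id_{C'},k)$. Applying the morphism condition for $k$ to the factor $B(\id_{C'},k)\cdot c'$, and finally collapsing $B(h,\id_{C''})\cdot B(k,\id_{C''})$ into $B(k\cdot h,\id_{C''})$ by contravariant functoriality of $B$ in its first argument, yields precisely the right-hand side.

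The one point demanding care — the main obstacle, such as it is — is the bookkeeping of variance: every rewriting step must respect that $B$ is contravariant in its first argument and covariant in its second, and the decomposition and recomposition of $B(h,k)$ rest on the interchange identities $B(f,g) = B(\id,g)\cdot B(f,\id) = B(f,\id)\cdot B(\id,g)$ for the mixed-variance bifunctor. Once the variance is tracked correctly the chase is purely mechanical, and no hypotheses beyond plain functoriality of $B$ are required; in particular none of the standing \Cref{mainassumptions} on regularity or preservation properties are needed here.
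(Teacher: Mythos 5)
Your proof is correct and takes essentially the same approach as the paper's: the paper establishes closure under composition via a single commutative diagram whose cells are precisely your three steps --- the morphism condition for the first map, the bifunctor interchange law $B(\id,k)\cdot B(h,\id)=B(h,\id)\cdot B(\id,k)$, and the morphism condition for the second map, followed by collapsing the contravariant composite $B(h,\id)\cdot B(k,\id)=B(k\cdot h,\id)$. The only cosmetic difference is that you run the chase equationally while the paper presents it diagrammatically, and you spell out the identity case and the inheritance of the unit/associativity laws from $\C$, which the paper leaves implicit.
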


\begin{proof}
Clearly $\id_C$ is a morphism from $(C,c)$ to $(C,c)$. Moreover, the composite $h\comp g$ of two morphisms $g\colon (C,c)\to (C',c')$ and $h\colon (C',c')\to (C'',c'')$ of higher-order coalgebras is again a morphism of higher-order coalgebras by the commutative diagram below:
\begin{equation*}
\begin{tikzcd}[column sep=6ex, row sep=normal, baseline = (B.base)]
C
  \ar[rrr,"c"]
  \dar["g"'] & &[4ex] &[4ex]
B(C,C)
  \dar["{B(\id,g)}"] \\
C'
  \ar[rr,"c'"]
  \dar["h"'] & &
B(C',C')
 \ar[r, "{B(g,\id)}"]
 \dar["{B(\id,h)}"'] &
B(C,C')
  \dar["{B(\id,h)}"] \\
C''
  \ar[r,"c''"] 
& 
B(C'',C'')
  \ar[r,"{B(h,\id)}"] 
&
B(C',C'')
  \ar[r,"{B(g,\id)}"] 
&|[alias=B]|
B(C,C'')\tag*{\qedhere}
\end{tikzcd}
\end{equation*}
\noqed
\end{proof}
A higher-order bialgebra for a higher-order GSOS law $\rho$ combines an algebra structure with a higher-order coalgebra structure compatible with $\rho$:

\begin{definition}[Higher-Order Bialgebra]
Given a $\Pt$-pointed higher-order GSOS law $\rho$, a \emph{$\rho$-bialgebra} is a triple $(A,a,c)$ such that $(A,a)$ is a $\Sigma$-algebra and $(A,c)$ is a higher-order $B$-coalgebra making the following diagram commute:
\begin{equation*}
\begin{tikzcd}[column sep=6ex, row sep=normal]
\Sigma A
  \rar["a"]
  \dar["\Sigma\brks{\id,c}"'] &
 A
  \rar["c"] &[5ex]
B( A, A)\\
\Sigma(A\times B(A, A))
  \rar[r,"\rho_{A,A}"]  
&
B(A,\Sigma^\star( A+ A))
  \rar["{B(\id,\Sigmas\nabla)}"]
&
B(A,\Sigma^\star A)
  \uar["{B(\id,\hat a)}"']
\end{tikzcd}
\end{equation*}
A \emph{morphism} from $(A,a,c)$ to a $\rho$-bialgebra $(A',a',c')$ is a morphism $h\colon A\to A'$ of $\C$ that is both a morphism of $\Sigma$-algebras and of higher-order $B$-coalgebras.
\end{definition}
An \emph{initial} (\emph{final}) $\rho$-bialgebra is simply an initial (final)
object of the category of $\rho$-bialgebras and their morphisms. Similar to the first-order case, the initial $\rho$-bialgebra extends the initial $\Sigma$-algebra:
\begin{proposition}\label{prop:initial-bialgebra}
The triple $(\mS,\iota,\iota^\clubsuit)$ is the initial $\rho$-bialgebra. 
\end{proposition}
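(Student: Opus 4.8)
The plan is to verify, first, that $(\mS,\iota,\iota^\clubsuit)$ is a $\rho$-bialgebra, and then to establish its initiality by showing that for every $\rho$-bialgebra $(A,a,c)$ the unique $\Sigma$-algebra morphism $\iter a \colon \mS \to A$ is automatically a morphism of higher-order coalgebras.

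For the first point, I would apply \Cref{lem:clubs} to the initial algebra $(\mS,\iota)$ itself. Since $\iter\iota = \id_{\mS}$ by initiality, the defining diagram \eqref{diag:can-model} for $\iota^\clubsuit$ has left leg $\Sigma\brks{\id,\iota^\clubsuit}$ and thus coincides verbatim with the $\rho$-bialgebra square from the definition of higher-order bialgebra. Hence $(\mS,\iota,\iota^\clubsuit)$ is a $\rho$-bialgebra.

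For initiality, uniqueness is immediate: a $\rho$-bialgebra morphism $h\colon (\mS,\iota,\iota^\clubsuit)\to (A,a,c)$ is in particular a $\Sigma$-algebra morphism out of the initial algebra, so $h$ is forced to be $\iter a$, and there is at most one such. It remains to show existence, i.e.\ that $h = \iter a$ really is a morphism of higher-order $B$-coalgebras:
\[ B(\iter a,\id)\comp c\comp \iter a = B(\id,\iter a)\comp \iota^\clubsuit. \]
The key step, and the main technical content, is to identify $c\comp\iter a$ with the morphism $a^\clubsuit$ from \Cref{lem:clubs}. I would prove $a^\clubsuit = c\comp \iter a$ by verifying that $c\comp\iter a$ satisfies the equation characterizing $a^\clubsuit$ uniquely, namely diagram \eqref{diag:can-model}. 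Using that $\iter a$ is a $\Sigma$-algebra morphism, $(c\comp\iter a)\comp\iota = c\comp a\comp \Sigma(\iter a)$; then applying the defining bialgebra law of $(A,a,c)$ to rewrite $c\comp a$, and folding $\Sigma\brks{\id,c}\comp \Sigma(\iter a)$ into $\Sigma\brks{\iter a,\, c\comp\iter a}$, yields exactly the lower-right leg of \eqref{diag:can-model} with $a^\clubsuit$ replaced by $c\comp\iter a$. By the uniqueness clause of \Cref{lem:clubs}, this gives $a^\clubsuit = c\comp\iter a$.

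Finally, I would invoke \Cref{lem:law_comm}, which states $B(\iter a,\id)\comp a^\clubsuit = B(\id,\iter a)\comp \iota^\clubsuit$. Substituting $a^\clubsuit = c\comp\iter a$ turns the left-hand side into $B(\iter a,\id)\comp c\comp\iter a$, which is precisely the coalgebra-morphism condition required of $\iter a$. Thus $\iter a$ is the unique $\rho$-bialgebra morphism, and $(\mS,\iota,\iota^\clubsuit)$ is initial. I expect the only nontrivial step to be the identification $a^\clubsuit = c\comp\iter a$ through the uniqueness in \Cref{lem:clubs}; everything else is bookkeeping with the universal properties of $\mS$ and of $a^\clubsuit$ already in place.
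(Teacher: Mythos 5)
Your proposal is correct and follows essentially the same route as the paper's proof: establish the bialgebra structure from the defining diagram of $\iota^\clubsuit$ with $\iter\iota=\id$, reduce existence of the morphism to the identity $c\comp\iter a = a^\clubsuit$ via the uniqueness clause of \Cref{lem:clubs} (using that $\iter a$ is an algebra morphism and the bialgebra law of $(A,a,c)$), conclude with \Cref{lem:law_comm}, and get uniqueness from initiality of $\mS$ as a $\Sigma$-algebra. The only difference is presentational — you argue equationally where the paper chases a pasted diagram.
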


\begin{proof}
It follows directly by  
definition of~$\iota^\clubsuit$ in \eqref{diag:can-model}, and by observing that $\iter(\ini) = \id$,
that $(\mS,\iota,\iota^\clubsuit)$ is a $\rho$-bialgebra. To prove initiality, suppose that $(A,a,c)$ is a $\rho$-bialgebra. We show that $\iter /a)\c\mS\to A$
is the unique $\rho$-bialgebra morphism from $(\mS,\iota,\iota^\clubsuit)$ to $(A,a,c)$. To show that $\iter(a)$ is a $\rho$-bialgebra morphism, we need to verify that the diagram
\begin{equation*}
\begin{tikzcd}[column sep=10.2ex, row sep=normal]
\mS
  \ar[rr,"\iota^\clubsuit"]
  \dar["\iter a"']
  \ar[dr,"a^\clubsuit"] & &
B(\mS,\mS)
  \dar["{B(\id,\iter a)}"] 
\\
A\ar[r,"c"] 
& 
B(A,A)
  \rar["{B(\iter a,\id)}"]
& 
B(\mS,A)
\end{tikzcd}
\end{equation*}
commutes. The quadrangular cell commutes by~\Cref{lem:law_comm}, and we are left to
show that $c\comp (\iter a) = a^\clubsuit$.
This follows from the fact that $c\comp (\iter a)$ satisfies the
characteristic property of $a^\clubsuit$ given by \eqref{diag:can-model}. Indeed, the diagram
\begin{equation*}
\begin{tikzcd}[column sep=4ex, row sep=normal, scale cd=.8]
\Sigma(\mS)
  \dar["\Sigma(\iter a)"']
  \ar[rrr,"\iota"] 
  \ar[shiftarr = {xshift=-45}, swap]{dd}{\Sigma\brks{\iter a,\,c\comp\iter a}}
& &[2ex] &[1ex]
\mS
  \dar["\iter a"]
  \ar[shiftarr = {xshift=28}]{dd}{c\comp \iter a}
\\
\Sigma A
  \dar["\Sigma\brks{\id,\,c}"']
  \ar[rrr,"a"] 
& & &
A
  \dar["c"]
\\
\Sigma(A\times {B(A,A)})
  \rar["{\rho_{A,A}}"] 
&
B(A,\Sigma^\star A)
  \rar["{B(\id,\Sigmas\nabla)}"] 
&
B(A,\Sigma^\star (A+A))
  \rar["{B(\id,\hat{a})}"] 
&
B(A,A)
\end{tikzcd}
\end{equation*}
commutes: the top cell commutes by definition of $\iter (a)$, and the bottom one 
commutes by the assumption that $(A,a,c)$ is a $\rho$-bialgebra.

Uniqueness of the bialgebra morphism $\iter(a)\c\mS\to A$ is by initiality of $\mS$ as a $\Sigma$-algebra, since every bialgebra morphism is, by definition, in particular a $\Sigma$-algebra morphism.
\end{proof}
In first-order abstract GSOS, a final bialgebra for a GSOS law can be derived from the final coalgebra $\nu B$. In the higher-order setting, the intended semantic domain for a higher-order GSOS law $\rho$ is $\nu B(\mS,-)$, the final coalgebra for the endofunctor $B(\mS,-)$. However, this object generally does not extend to a final $\rho$-bialgebra. In fact, a final $\rho$-bialgebra usually fails to exist, even for simple `deterministic' behaviour bifunctors:
\begin{example}\label{ex:final-bialgebra}
Consider the bifunctor $B(X,Y)=2^X\c\Set^\opp\times \Set \to \Set$, the empty
signature $\Sigma=\emptyset$, and the unique ($\emptyset$-pointed) higher-order GSOS law $\rho$ of $\Sigma$ over $B$. A $\rho$-bialgebra is just a map $z\c Z\to 2^Z$, and a morphism from a $\rho$-bialgebra $(W,w)$ to $(Z,z)$ is a map $h\c W\to Z$ making the diagram
\begin{equation}\label{eq:bialg-morph-ex}  
\begin{tikzcd}
W\ar{d}[swap]{h} \ar{rr}{w} & & 2^W \ar[equals]{d}\\
Z \ar{r}{z} & 2^Z \ar{r}{2^h} & 2^W
\end{tikzcd}
\end{equation}
commute. We claim that no final $\rho$-bialgebra exists, despite the endofunctor $B(\mS,-)=2^{\mu\Sigma}\cong 1$ having a final coalgebra. Suppose for a contradiction that $(Z,z)$ is a final $\rho$-bialgebra. Choose an arbitrary $\rho$-bialgebra $(W,w)$ such that $\under{W}>\under{Z}$ and $w\c W\to 2^W$ is injective. Then no map $h\c W\to Z$ makes \eqref{eq:bialg-morph-ex} commute, since $w$ is injective but $h$ is not. 
\end{example}
On the positive side, we have an algebra structure
$\iotaq\c\Sigma(\mSq)\to\mSq$ by \Cref{th:main} and a
coalgebra structure $\varsigma\c \mSq\to B(\mSq,\mSq)$ by \Cref{lem:mSq-coalg}, and these combine to a $\rho$-bialgebra:
\begin{proposition}\label{prop:sim-bialg}
  In the setting of \Cref{th:main}, the triple $(\mSq,\iotaq,\varsigma)$ is a $\rho$-bialgebra.
\end{proposition}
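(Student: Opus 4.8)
The plan is to verify the single compatibility condition in the definition of a $\rho$-bialgebra, since the two structural components are already in hand: $(\mSq,\iotaq)$ is a $\Sigma$-algebra by \Cref{th:main} and $(\mSq,\varsigma)$ is a higher-order $B$-coalgebra by \Cref{lem:mSq-coalg}. Concretely, I need to show that the two composites $\Sigma(\mSq)\to B(\mSq,\mSq)$, namely $\varsigma\cdot\iotaq$ on the one hand and $B(\id,\widehat{\iotaq})\cdot B(\id,\Sigmas\nabla)\cdot\rho_{\mSq,\mSq}\cdot\Sigma\langle\id,\varsigma\rangle$ on the other, coincide, where $\widehat{\iotaq}$ is the Eilenberg--Moore algebra associated with $(\mSq,\iotaq)$.

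The key idea is to transport this identity from $\Sigma(\mSq)$ back to $\Sigma(\mS)$. Recall from \eqref{eq:it-i-quot-coeq} that $\iter\iotaq\c\mS\to\mSq$ is the coequalizer of the reflexive pair $p_1^\star,p_2^\star$, and that $\Sigma$ preserves this coequalizer; hence $\Sigma(\iter\iotaq)\c\Sigma(\mS)\to\Sigma(\mSq)$ is itself a coequalizer, and in particular an epimorphism. It therefore suffices to prove that both legs of the pentagon agree after precomposition with $\Sigma(\iter\iotaq)$.

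For the top-then-right leg I would use that $\iter\iotaq$ is a $\Sigma$-algebra morphism, so $\iotaq\cdot\Sigma(\iter\iotaq)=\iter\iotaq\cdot\iota$, together with the identity $\varsigma\cdot\iter\iotaq=\iotaq^\clubsuit$ from \Cref{lem:mSq-coalg}; this shows the leg precomposed with $\Sigma(\iter\iotaq)$ equals $\iotaq^\clubsuit\cdot\iota$. For the other leg, functoriality gives $\Sigma\langle\id,\varsigma\rangle\cdot\Sigma(\iter\iotaq)=\Sigma\langle\iter\iotaq,\varsigma\cdot\iter\iotaq\rangle=\Sigma\langle\iter\iotaq,\iotaq^\clubsuit\rangle$, again invoking \Cref{lem:mSq-coalg}. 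The whole leg then becomes exactly the lower path of the defining diagram \eqref{diag:can-model} of $\iotaq^\clubsuit=(\iotaq)^\clubsuit$ instantiated at the $\Sigma$-algebra $(\mSq,\iotaq)$, so by \Cref{lem:clubs} it too equals $\iotaq^\clubsuit\cdot\iota$. Both legs thus agree after precomposition with the epimorphism $\Sigma(\iter\iotaq)$, and cancelling it yields the bialgebra law.

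I do not expect a genuine obstacle, as all the substantial work resides in \Cref{lem:clubs}, \Cref{lem:mSq-coalg}, and \Cref{th:main}. The one point that requires care is the reduction step: because the bialgebra identity lives over $\Sigma(\mSq)$ rather than over $\mSq$, I must invoke preservation of the reflexive coequalizer to conclude that $\Sigma(\iter\iotaq)$, and not merely $\iter\iotaq$, is epic. Once this is noted, the proof is a short diagram chase combining the defining property of $\iotaq^\clubsuit$ with the triangle of \Cref{lem:mSq-coalg}.
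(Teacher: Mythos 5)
Your proof is correct and takes essentially the same route as the paper: the paper's argument is exactly this diagram chase, pasting the defining square of $\iotaq^\clubsuit$ (from \Cref{lem:clubs}), the triangle $\varsigma\comp\iter\iotaq=\iotaq^\clubsuit$ of \Cref{lem:mSq-coalg}, and the algebra-morphism square for $\iter\iotaq$, then cancelling the epimorphism $\Sigma(\iter\iotaq)$, whose epicness is justified, as in your proposal, by $\Sigma$ preserving the reflexive coequalizer.
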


\begin{proof}
The outside of the diagram
\begin{equation*}
\begin{tikzcd}[column sep=5ex, row sep=normal, scale cd=.75]
\Sigma(\mS)
  \dar["\Sigma(\iter\iotaq)"']
  \ar[rr,"\iota"] 
  \ar[shiftarr = {xshift=-60}, swap]{dd}{\Sigma\brks{\iter\iotaq,\,\iotaq^\clubsuit}}
& &[3ex] 
\mS
  \dar["\iter\iotaq"]
  \ar[shiftarr = {xshift=40}]{dd}{\iotaq^\clubsuit}
\\
\Sigma(\mSq)
  \dar["\Sigma\brks{\id,\,\varsigma}"']
  \ar[rr,"\iotaq"] 
& & 
\mSq
  \dar["\varsigma"]
\\
\Sigma(\mSq\times {B(\mSq,\mSq)})
  \rar["{\rho_{\mSq,\mSq}}"]
&
B(\mSq,\Sigmas(\mSq+\mSq))
  \rar["{B(\id,\widehat\iotaq\comp\Sigmas\nabla)}"] 
&
B(\mSq,\mSq)
\end{tikzcd}
\end{equation*}
commutes by definition of $\iotaq^\clubsuit$. The side cells commute by~\Cref{lem:mSq-coalg},
and the top middle cell commutes by definition of $\iter(\iotaq)$. Note that $\Sigma(\iter\iotaq)$ is a coequalizer,
since $\iter(\iotaq)$ is a reflexive coequalizer and $\Sigma$ preserves it. Hence $\Sigma(\iter\iotaq)$ is 
epic, and therefore the bottom middle cell commutes, which is the $\rho$-bialgebra
law in question.
\end{proof}
The $\rho$-bialgebra  $(\mSq,\iotaq,\varsigma)$ can thus be regarded as a suitable candidate for a denotational domain in the higher-order setting, despite not being characterized by a universal property.

In conclusion, the above results firmly indicate that bialgebras remain a
meaningful concept in higher-order abstract GSOS, and a potentially useful tool
for deriving congruence results. For instance, in~\citet{UrbatTsampasEtAl23} we
have established a general congruence result with respect to \emph{weak}
(bi)similarity on operational models of higher-order GSOS laws, where the notion
of \emph{lax} higher-order bialgebra figures prominently. Additionally, recent
work~\cite{goncharov2026higherorderbialgebraicdenotationalsemantics} reveals
that it is possible to construct a denotational $\rho$-bialgebra in a manner
that parallels the first-order case.

\section[The Lambda-calculus]{The
  \texorpdfstring{$\boldsymbol{\lambda}$}{$\lambda$}-calculus}
\label{sec:lam}
We now depart from combinatory calculus and move to languages with
variable binding, starting with the all-important (untyped) $\lambda$-calculus.
The $\lambda$-calculus comes in various flavours, such as
\emph{call-by-name} or \emph{call-by-value}, and the respective
operational semantics can be formulated in either big-step or
small-step style. For the purposes of our work, we are going to give a
categorical treatment of the small-step call-by-name and the small-step
call-by-value $\lambda$-calculus. We start with the former, whose operational semantics is
presented in \Cref{fig:lambda}. Here, $p,p',q$ range over possibly open $\lambda$-terms and $[q/x]$ denotes capture-avoiding substitution of the term $q$ for the variable $x$.
\begin{figure}[t]
  \[
    \begin{array}{l@{\qquad}l@{\qquad}l}
      \inference[\texttt{app1}]{\goes{p}{p\pr}} {\goes{p \app q}{p\pr \app q}}
      &
        \inference[\texttt{app2}]{}{\goes{(\lambda x.p) \app q}{p[q/x]}}
    \end{array}
  \]
  \caption{Small-step operational semantics of the call-by-name $\lambda$-calculus.}
  \label{fig:lambda}
\end{figure}

The operational semantics of the call-by-name
$\lambda$-calculus induces a deterministic transition relation $\to$ on the set
of (possibly open) $\lambda$-terms modulo $\alpha$-equivalence. Note that every $\lambda$-term $t$ either \emph{reduces} ($t\to t'$ for some~$t'$) or is in \emph{weak head normal form}, that is, $t$ is a $\lambda$-abstraction $\lambda x.t'$ or of the form $x \app s_{1} \app s_{2} \app \cdots \app
s_{k}$ for a variable $x$ and terms $s_1,\ldots,s_k$ ($k\geq 0$). In particular, a closed term either reduces or is a $\lambda$-abstraction. As usual, we let application associate to the left: $t_1\app t_2\app t_3 \app \cdots \app t_n$ means $(\cdots((t_1\app t_2)\app t_3) \cdots) \app t_n$.

On the side of program equivalences, $\lambda$-calculus semantics can
be roughly divided into three kinds: \emph{applicative
  bisimilarity}~\cite{Abramsky:lazylambda}, \emph{normal form
  bisimilarity}~\cite{DBLP:conf/lics/Lassen05} and \emph{environmental
  bisimilarity}~\cite{DBLP:conf/lics/SangiorgiKS07}. We are looking to
give a coalgebraic account of \emph{strong} versions of applicative bisimilarity, see \Cref{def:strong-app} and \Cref{prop:bisim-vs-appbisim}.

\subsection{The presheaf approach to higher-order languages}
\label{sec:prelims}

\citet{DBLP:conf/lics/FiorePT99} propose the
presheaf category $\vcat$ as a setting for algebraic signatures with
variable binding, such as the $\lambda$-calculus and the
$\pi$-calculus. We review some of the core ideas from their work as
well as follow-up work by \citet{DBLP:conf/lics/FioreT01}.

Let $\fset$ be the category of finite cardinals, a skeleton of the
category of finite sets. The objects of $\fset$ are the
sets $n=\{0,\dots,n-1\}\;(n \in \Nat)$, and morphisms $n\to m$ are
functions. The category $\fset$ has a canonical coproduct structure
\begin{equation}
  \label{coproduct}
  n \xrightarrow{\oname{old}_{n}} n + 1 \xleftarrow{\oname{new}_{n}} 1
\end{equation}
where $\oname{old}_{n}(i)=i$ and $\oname{new}_{n}(0)=n$.
Notice the appropriate naming of the coproduct injections: The idea is
that each object $n \in \fset$ is an untyped context of $n$ free
variables, while morphisms $n \to m$ are variable
\emph{renamings}. When extending a context along
$\oname{old}_{n}(i)=i$, we understand the pre-existing elements
of~$n$ as the ``old'' variables, and the added element
$\oname{new}_{n}(0)$ as the ``new'' variable. The coproduct structure
of $\fset$ gives rise to three fundamental operations on contexts, namely
\emph{exchanging}, \emph{weakening} and \emph{contraction}:
\begin{equation}
  \label{eq:ops}
  \begin{aligned}
    \oname{s} &= [\oname{new}_{1},\oname{old}_{1}] \c 2 \to 2, \\
    \oname{w} &= \oname{old}_{0} \c 0 \to 1, \\
    \oname{c} &= [\id_{1},\id_{1}] \c 2 \to 1.
  \end{aligned}
\end{equation}
We think of a presheaf $X\in \vcat$ as a collection of terms: elements of $X(n)$ are ``$X$-terms'' with
free variables from the set $n=\{0,\ldots,n-1\}$, and for each $r\c n\to m$ the map $X(r)\c X(n)\to X(m)$ sends a term $t\in X(n)$ to the term $X(r)(t)\in X(m)$ obtained by renaming the free variables of $t$ according to $r$.

\begin{example}\label{ex:presheaves}
\begin{enumerate}
\item\label{ex:presheaves-v} The simplest example is the presheaf
  $V\in \vcat$ of variables, defined by
  \begin{equation*}
    V(n) = n \qquad\text{and}\qquad V(r) = r.
  \end{equation*}
  Thus, a $V$-term at stage $n$ is simply a choice of a variable $i \in
  n$.

\item\label{ex:presheaves-sigma} For every algebraic signature
  $\Sigma$, the presheaf $\Sigmas\in \vcat$ of $\Sigma$-terms is given
  by the domain restriction of the free monad on $\Sigma$ to
  $\fset$. Thus $\Sigmas(n)$ is the set of $\Sigma$-terms in variables
  from $n$.
\item\label{ex:presheaves-lambda} The presheaf $\Lambda\in \vcat$ of
  $\lambda$-terms is given by%
  \begin{align*}
    \Lambda(n) &= \text{$\lambda$-terms modulo
      $\alpha$-equivalence with free variables from $n$,}
    \\
    \Lambda(r)(t) & = t[r(0)/0,\ldots,r(n-1)/n-1]\qquad
    \text{for $r\colon n\to m$},
  \end{align*}
where $t[-/-]$ denotes simultaneous substitution in the term $t$. 
\end{enumerate}
\end{example}
The process of substituting terms for variables can be treated at the abstract level of presheaves as follows. For every presheaf $Y \in \vcat$, there is a functor
\[  - \mathbin{\mon} Y \c \vcat \to \vcat \]
given by
\begin{equation}\label{eq:tensor}
 (X \mathbin{\mon} Y)(m) = \int^{n \in \fset} X(n) \product (Y(m))^{n} = \bigl(\coprod_{n\in \fset} X(n)\times (Y(m))^n\bigr)/\approx, 
\end{equation}
where $\approx$ is the equivalence relation generated by all pairs
\[
  (x, y_0,\ldots, y_{n-1}) \approx (x', y_0',\ldots, y_{k-1}')
\]
such that $(x, y_0,\ldots, y_{n-1})\in X(n)\times Y(m)^n$, $(x', y_0',\ldots, y_{k-1}')\in X(k)\times Y(m)^k$ and there
exists $r\c n\to k$ satisfying $x'=X(r)(x)$ and $y_{i}=y'_{r(i)}$ for
$i=0,\ldots, n-1$.  An equivalence class in $(X \mathbin{\mon} Y)(m)$
can be thought of as a term $x\in X(n)$ with $n$ free variables,
together with~$n$ terms $y_0,\ldots,y_{n-1}\in Y(m)$ to be substituted
for them. The above equivalence relation then says that the choice of a term $x
\in X(n)$ and a tuple $y_{0},\dots,y_{n-1}\in Y(m)^n$ should be compatible with renamings:
for instance, if $X = Y = \Lambda$ and $s\in \Lambda(1)\seq \Lambda(2)$, then
for all $t,u \in \Lambda(0)$, the renaming $r\colon 1\to 2$ with $r(0)=1$ witnesses that
$(s \app s,t) \approx (s \app s,u,t)$.

Varying~$Y$,
one obtains the \emph{substitution tensor}
\[\argument \mon \argument \c \vcat \product \vcat \to
\vcat,\] %
which makes $\vcat$ into a (non-symmetric) closed monoidal category with
unit~$V$, the presheaf of variables.  Monoids in $(\vcat, \mon, V)$ can be seen as collections of terms equipped with a
substitution structure.  Closure of $(\vcat, \mon, V)$ is witnessed by the fact that for every $Y\in \vcat$ the functor
$- \mathbin{\mon} Y \c \vcat \to \vcat$ has a right adjoint\footnote[1]{\citet{DBLP:conf/lics/FiorePT99} denote the right adjoint by $\langle Y,-\rangle$; we employ the double bracket notation $\llangle Y,-\rrangle$ instead for distinction with morphisms into products.} given by
\begin{equation*}
  \llangle Y, \argument \rrangle \c \vcat \to \vcat, \qquad \llangle Y,W \rrangle(n) = \int_{m \in \fset} [(Y(m))^{n}, W(m)] = \NT(Y^{n} , W).
\end{equation*}
An element of $\llangle Y,W \rrangle(n)$, viz.\ a natural family
of maps $Y(m)^n\to W(m)$ ($m\in \fset$), is thought of as describing the substitution of $Y$-terms in $m$ variables for the $n$ variables of some fixed ambient term, resulting in  a $W$-term in $m$ variables.

\subsection{Syntax}

Variable binding is captured by the \emph{context extension} endofunctor \[\delta \c \vcat \to \vcat\]
defined on objects by
\begin{equation*}
  \delta X (n) = X(n + 1) \qquad\text{and}\qquad \delta X (h) = X(h + \id_{1})
\end{equation*}
and on morphisms $h \c X \to Y$ by
\begin{equation*}
  (\delta h)_{n} =  \big(X(n + 1) \xra{h_{n + 1}} Y(n + 1)\big).
\end{equation*}
Informally, the elements of $\delta X(n)$ are terms arising by binding the last variable in
an $X$-term with $n+1$ free variables.
The operations $\oname{s},\oname{w},\oname{c}$ on contexts, see \eqref{eq:ops},
give rise to natural transformations
\[
  \oname{swap} \c \delta^{2} \to \delta^{2},
  \qquad
  \oname{up} \c \Id \to \delta \qquad\text{and}\qquad
  \oname{contract} \c \delta^{2} \to \delta
\]
in $\vcat$, which correspond respectively to
the actions of swapping the two ``newest'' variables in a term, weakening and
contraction. Their components are defined by 
\begin{equation}
  \label{eq:ops-2}
  \begin{aligned}
    \oname{swap}_{X,n}
    &=  \big(\,X(n + 2) \xra{X(\id_{n} + \oname{s})} X(n + 2)\,\big), \\
    \oname{up}_{X,n}
    &= \big(\,X(n) \xra{X(\id_{n} + \oname{w})} X(n + 1)\,\big), \\
    \oname{contract}_{X,n}
    &=  \big(\,X(n + 2) \xra{X(\id_{n} + \oname{c})} X(n + 1)\,\big).
  \end{aligned}
\end{equation}

The presheaf $V$ of variables
(\Cref{ex:presheaves}\ref{ex:presheaves-v}) and the endofunctor
$\delta$ are the two main constructs that enable the categorical
modeling of syntax with variable binding. For example, the binding
signature of the $\lambda$-calculus corresponds to the endofunctor
\begin{equation}\label{eq:syn}
  \Sigma \c \vcat \to \vcat,\qquad \Sigma X = V + \delta X + X \product X.
\end{equation}
This is analogous to algebraic signatures determining (polynomial)
endofunctors on $\set$. For~$\Sigma$ as in~\eqref{eq:syn}, the
forgetful functor $\alg{\Sigma} \to \vcat$ has a left adjoint that
takes a presheaf $X \in \vcat$ to the free $\Sigma$-algebra
$\Sigmas X$. In particular, the initial algebra $\mS$ is given by the presheaf
$\Lambda$ of $\lambda$-terms; more precisely:
\begin{proposition}\label{prop:lambda-initial-algebra}
The initial algebra for $\Sigma$ is given by
\[ V+\delta\Lambda+\Lambda\times\Lambda  \xto{[\var,\lambda.(-),\appp]} \Lambda \]
where $\var\colon V\to \Lambda$ is the inclusion of variables, $\lambda.(-)$ sends $t\in \delta\Lambda(n)=\Lambda(n+1)$ to $\lambda n+1.t$, and $\appp$ sends $(t,s)\in \Lambda(n)\times \Lambda(n)$ to $t\, s$.
\end{proposition}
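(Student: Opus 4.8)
The plan is to verify the universal property of the initial algebra directly, via the principle of structural recursion on $\lambda$-terms. Let $(A,a)$ be an arbitrary $\Sigma$-algebra, and decompose its structure map as $a=[v,\ell,c]$ with components $v\colon V\to A$, $\ell\colon \delta A\to A$, and $c\colon A\times A\to A$ corresponding to the three summands of $\Sigma A = V+\delta A+A\times A$. The task is to produce a \emph{unique} $\Sigma$-algebra morphism $h\colon \Lambda\to A$, i.e.\ a natural transformation whose components $h_n\colon \Lambda(n)\to A(n)$ commute with the respective operations.

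First I would define the family $(h_n)_{n\in\fset}$ by recursion on the structure of a $\lambda$-term $t\in\Lambda(n)$, following the three constructors of the grammar:
\[
h_n(i) = v_n(i), \qquad h_n(t\app s) = c_n(h_n(t),h_n(s)), \qquad h_n(\lambda.\,t') = \ell_n(h_{n+1}(t')),
\]
where in the last clause $t'\in\delta\Lambda(n)=\Lambda(n+1)$ is the body of the abstraction. Because we work over the skeleton $\fset$ and terms are taken modulo $\alpha$-equivalence, binding the ``newest'' variable is canonical, so this recursion is well defined on $\alpha$-equivalence classes without any choice of bound-variable names; the abstraction clause simply descends one stage in the presheaf.

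Next I would check that $h=(h_n)_n$ is genuinely a morphism of presheaves, i.e.\ natural in $n$: for every renaming $r\colon n\to m$ the square relating $\Lambda(r)$ and $A(r)$ commutes. This I would prove by induction on $t\in\Lambda(n)$, using naturality of $v$, $\ell$, and $c$ at each step. The only delicate case is abstraction, where one must match the action of $\delta$ on morphisms, $\delta X(r)=X(r+\id_1)$, against the recursive descent to stage $n+1$; here naturality of $\ell\colon \delta A\to A$ supplies exactly the required compatibility. That $h$ is an algebra homomorphism --- that is, $h\cdot\var = v$, $h\cdot(\lambda.(-)) = \ell\cdot\delta h$, and $h\cdot\appp = c\cdot(h\times h)$ --- is then immediate from the three defining clauses. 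Finally, uniqueness follows because any homomorphism $g\colon \Lambda\to A$ must satisfy the very same three recursive equations, so $g_n=h_n$ by induction on terms, each $\lambda$-term being uniquely a variable, an application, or an abstraction (unique readability).

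The main obstacle I anticipate is precisely the naturality verification in the abstraction case, and, underlying it, the justification that structural recursion over $\alpha$-equivalence classes is legitimate. These are the points where the presheaf-theoretic machinery of \cite{DBLP:conf/lics/FiorePT99} --- the context-extension functor $\delta$ together with the choice of $\fset$ as a skeleton, so that variable binding becomes canonical --- does the real work; everything else reduces to bookkeeping with the coproduct structure $n \xrightarrow{\oname{old}_n} n+1 \xleftarrow{\oname{new}_n} 1$ and the naturality of the three operations of $A$. Alternatively, one could sidestep the explicit recursion by observing that $\Sigma$ is a finitary (indeed polynomial) endofunctor on the presheaf topos $\vcat$, hence admits an initial algebra computed as the colimit of its initial chain, and then identifying that colimit stagewise with the inductively generated sets of $\lambda$-terms; but the direct recursion argument is more transparent and makes the correspondence with the syntax explicit.
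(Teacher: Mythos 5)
Your proof is correct, and it differs from the paper in an important structural sense: the paper offers no proof of this proposition at all. It is stated as a recollection of a known fact, immediately after the remark that for $\Sigma X = V+\delta X+X\times X$ the forgetful functor $\alg{\Sigma}\to\vcat$ has a left adjoint; the identification of the initial algebra $\mu\Sigma$ with the presheaf $\Lambda$ of $\lambda$-terms is imported from \cite{DBLP:conf/lics/FiorePT99}. Your hands-on verification of the universal property is therefore the more self-contained route, and it is sound: the three recursive clauses determine each $h_n$, naturality follows by induction on terms (the abstraction case resting, as you say, on naturality of $\ell\colon\delta A\to A$ matched against $\delta X(r)=X(r+\id_1)$, together with the fact that $\Lambda(r)$ renames under a binder by $r+\id_1$), the homomorphism equations are the clauses themselves, and uniqueness follows from unique readability. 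The two points you flag are indeed where the real content lies: well-definedness on $\alpha$-classes holds because $\lambda.(-)\colon\Lambda(n+1)\to\Lambda(n)$ is injective on $\alpha$-classes and the images of the three constructors are disjoint and cover $\Lambda(n)$ (equivalently, pass to de Bruijn representatives); and the recursion is simultaneous in $n$ (the abstraction clause calls $h_{n+1}$), which is legitimate as recursion on term depth, i.e.\ over the initial algebra of the corresponding endofunctor on $\Set^{\Nat}$ before the renaming actions are imposed. Your closing alternative --- the initial-chain colimit of the finitary functor $\Sigma$, identified stagewise with the inductively generated syntax --- is essentially the argument the paper implicitly appeals to via its citation; your direct recursion buys explicitness of the mediating morphism and independence from the cited machinery, at the cost of the bookkeeping you describe.
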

The substitution tensor $\mon$ gives rise to the expected substitution structure on $\lambda$-terms: 
\removeThmBraces
\begin{proposition}
  \label{prop:sub}
  The presheaf $\Lambda = \mS$ of
  $\lambda$-terms admits the structure of a monoid $(\Lambda,
  \mu , \eta)$ in $(\vcat, \mon, V)$ whose unit $\eta\c V\to \Lambda$ is the inclusion of variables and whose multiplication $\mu\colon \Lambda\bullet \Lambda \to \Lambda$ is the uncurried natural transformation $\bar \mu\c  \Lambda\to \llangle \Lambda,\Lambda \rrangle$ given by 
\[ \bar\mu_n \c \Lambda(n)\to \llangle \Lambda,\Lambda \rrangle(n)=\NT(\Lambda^n,\Lambda),\qquad  t\mapsto \lambda \vec{u}\in \Lambda(m)^n. \,t[\vec{u}]. \]
Here, $t[\vec{u}]$ denotes the simultaneous substitution $t[u_0/0,\ldots, u_{n-1}/{n-1}]$. 
\end{proposition}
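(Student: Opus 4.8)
The plan is to obtain $\mu$ and verify the monoid laws by exploiting the closed structure of $(\vcat, \mon, V)$ together with the initiality of $\Lambda = \mS$ as a $\Sigma$-algebra. Since $\argument \mon \Lambda \dashv \llangle \Lambda, \argument\rrangle$, giving a multiplication $\mu \c \Lambda \mon \Lambda \to \Lambda$ is equivalent to giving its transpose $\bar\mu \c \Lambda \to \llangle \Lambda, \Lambda\rrangle$, and the formula in the statement prescribes exactly this transpose at stage $n$ as $t \mapsto (\vec u \mapsto t[\vec u])$. I would define $\bar\mu$ not by hand but by \emph{initiality}: equip $\llangle\Lambda,\Lambda\rrangle$ with a $\Sigma$-algebra structure $b \c \Sigma\llangle\Lambda,\Lambda\rrangle \to \llangle\Lambda,\Lambda\rrangle$ and set $\bar\mu = \iter(b)$. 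The three summands of $\Sigma X = V + \delta X + X \times X$ encode the three recursion clauses of capture-avoiding substitution: the $V$-summand uses the identification $\llangle\Lambda,\Lambda\rrangle(n) = \NT(\Lambda^n, \Lambda)$ to select the $x$-th projection (so that $\var(x)[\vec u] = u_x$); the product summand uses that $\NT(\Lambda^n, \argument)$ preserves products, together with the application morphism $\appp$ of \Cref{prop:lambda-initial-algebra}, to implement $(t\app s)[\vec u] = (t[\vec u])\app(s[\vec u])$; and the $\delta$-summand handles binders by precomposing with a canonical strength-type transformation $\delta\llangle\Lambda,\Lambda\rrangle \to \llangle\Lambda,\delta\Lambda\rrangle$ that weakens the substitutes and feeds the freshly bound variable to itself, followed by $\llangle\Lambda, \lambda.(-)\rrangle$. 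The unit is $\eta = \var \c V \to \Lambda$, which is simultaneously the $V$-component of the initial-algebra structure $\iota$.

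With $\mu$ and $\eta$ in place, I would check the two unit laws and associativity by reducing each to an equality of morphisms out of the initial algebra and invoking uniqueness of $\iter$. The right unit law, $\mu \circ (\id_\Lambda \mon \eta)$ equals the right unitor, amounts to $t[x_0,\dots,x_{n-1}] = t$; both sides transpose to $\Sigma$-algebra endomorphisms of $\Lambda$, hence each equals $\iter(\iota) = \id_\Lambda$ by initiality. The left unit law, $\mu \circ (\eta \mon \id_\Lambda)$ equals the left unitor, is immediate from the $V$-clause of $b$, since it unfolds to $\var(x)[\vec u] = u_x$. Associativity, $\mu \circ (\mu \mon \id_\Lambda) = \mu \circ (\id_\Lambda \mon \mu) \circ \alpha$, is the substitution lemma $t[\vec u][\vec w] = t[\,\vec u[\vec w]\,]$; I would phrase both composites as $\Sigma$-algebra morphisms with domain $\Lambda$ (after transposing along the adjunction) and conclude they coincide because they are induced by one and the same $\Sigma$-algebra structure, again by uniqueness of the induced morphism.

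The hard part will be the $\delta$-clause and, correspondingly, the binder case of associativity. Binding forces one to track how substitution interacts with context extension: substituting under a $\lambda$ must weaken every substitute by one variable and leave the newly bound variable fixed, which at the abstract level is precisely the content of the pointed strength of $\delta$ relative to $\mon$ and its compatibility with the internal hom $\llangle \argument, \argument\rrangle$. Verifying that these coherences make the unit and associativity diagrams commute in the $\delta$-summand is where all the genuine work resides; the other summands are essentially bookkeeping. The cleanest route, which I would ultimately take, is to invoke the general result of \citet{DBLP:conf/lics/FiorePT99}: for a syntax endofunctor of the form $V + \Sigma'$ with $\Sigma'$ pointed strong for the substitution tensor, the initial algebra carries a canonical monoid structure and is in fact the initial $\Sigma$-monoid. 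Instantiating this with $\Sigma' X = \delta X + X \times X$ yields $(\Lambda, \mu, \eta)$ together with all monoid laws, with the substitution formula of the statement recovered as the concrete description of the induced multiplication.
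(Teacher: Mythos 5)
Your proposal is correct. Note that the paper gives no proof of \Cref{prop:sub} at all: the proposition is stated as a known fact, implicitly deferring to the theory of \citet{DBLP:conf/lics/FiorePT99} that is reviewed in the surrounding section --- which is precisely the result you invoke in your final paragraph (the initial algebra of $V+\Sigma'$, for $\Sigma'$ pointed strong with respect to $\mon$, carries a canonical monoid structure and is the initial $\Sigma$-monoid). Your preceding initiality construction is a faithful unfolding of that theorem: equipping $\llangle\Lambda,\Lambda\rrangle$ with a $\Sigma$-algebra structure whose $V$-, product- and $\delta$-components encode the three recursion clauses of capture-avoiding substitution (the $\delta$-component built from the pointed strength, exactly the device the paper itself uses later when defining $\rho_1$ in \eqref{eq:rhodelta}), and then deriving the unit and associativity laws by uniqueness of $\Sigma$-algebra morphisms out of $\mu\Sigma$. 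This extra detail is not redundant either: it is what identifies the abstractly induced multiplication with the concrete formula of the statement, since $\iter(b)$ satisfies the defining clauses of simultaneous substitution and hence equals $t\mapsto t[\vec u]$ by structural induction. The only places where your sketch is loose --- verifying that the unit-law and associativity comparisons really are $\Sigma$-algebra morphisms, which in the binder case rests on the coherence of the pointed strength --- are exactly the computations discharged by the cited theorem, so the fallback closes them.
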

\resetCurThmBraces

\subsection{Behaviour}
To capture the $\lambda$-calculus in the abstract categorical setting of higher-order GSOS laws developed in \Cref{sec:hogsos}, we consider the behaviour bifunctor 
\begin{equation}
  \label{def:beh}
  B \c (\vcat)^{\opp} \product \vcat \to \vcat,\qquad B(X,Y) = \llangle X,Y \rrangle \product (Y + Y^{X} + 1),
\end{equation}
where $Y^X$ denotes the exponential object in the topos $\vcat$.

\begin{remark}\label{N:exp}
  Exponentials in presheaf categories have a simple explicit description~\cite[Sec.~I.6]{MacLaneMoerdijk92}. The exponential $Y^X$ in $\vcat$
  and its evaluation morphism $\ev\c Y^X \times X \to Y$ are, respectively, given by
  \[
    Y^X(n) = \NT((\argument)^n \times X, Y)
    \qquad
    \text{and}
    \qquad
    \ev_n(f, x) = f_n(\id_n, x) \in Y(n)
  \]
  for every natural transformation $f\c (\argument)^n \times X \to Y$ and $x
  \in X(n)$. In the following we put \[f(x) \,:=\,\ev_n(f,x).\]
\end{remark}
Our intended operational model of the $\lambda$-calculus is a $B(\Lambda,-)$-coalgebra
\begin{equation}\label{eq:lambda-op-model}
\langle \gamma_{1},\gamma_{2}\rangle \c \Lambda \to
\llangle \Lambda,\Lambda \rrangle \product (\Lambda + \Lambda^{\Lambda} + 1)
\end{equation}
on the presheaf of $\lambda$-terms. For each term $t\in \Lambda(n)$, the natural transformation $\gamma_1(t)\c {\Lambda^n\to \Lambda}$ exposes the simultaneous substitution structure, that is, $\gamma_1(t)$ is equal to $\bar \mu(t)$ from \Cref{prop:sub}. Similarly,
 $\gamma_{2}(t)$ is an element of the coproduct
$\Lambda(n) + \Lambda^{\Lambda}(n) + 1$, representing either a reduction step, a
$\lambda$-abstraction seen as a function on terms, or that $t$ is stuck. To apply the higher-order abstract GSOS framework, let us first note that one of its key assumptions holds:
\begin{lemma}\label{lem:final-coalgebra}
  For every $X\in \vcat$ the functor $B(X,-)$ has a final coalgebra.
\end{lemma}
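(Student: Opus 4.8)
The plan is to show that the endofunctor $B(X,-)\c\vcat\to\vcat$ is \emph{accessible}, and then to invoke the standard fact that an accessible endofunctor on a locally presentable category admits a final coalgebra. The first observation is that $\vcat = \Set^{\fset}$ is locally finitely presentable, being a presheaf category on the small category $\fset$; in particular it is complete and cocomplete, and any endofunctor preserving $\kappa$-filtered colimits for some regular cardinal $\kappa$ behaves well with respect to the transfinite constructions used for fixed points.

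The core of the argument is to verify accessibility of $B(X,-)$ by decomposing it into its constituents. The functor is the binary product of $\llangle X,-\rrangle$ with the coproduct $(-)+(-)^X+1$. Here $\llangle X,-\rrangle$ is, by construction, right adjoint to $-\mon X$, and $(-)^X$ is right adjoint to $-\times X$ since $\vcat$ is cartesian closed. A right adjoint between locally presentable categories is always accessible, so each of these two functors preserves $\kappa$-filtered colimits for some regular cardinal. The identity functor and the constant functor at $1$ are trivially accessible. Since in a presheaf category limits and colimits are computed pointwise in $\Set$, finite products commute with filtered (hence $\kappa$-filtered) colimits and coproducts commute with all colimits; accessibility is therefore preserved under the finite products and coproducts occurring in $B$. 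Choosing a single regular cardinal $\kappa$ that works simultaneously for all constituents, I would conclude that $B(X,-)$ is $\kappa$-accessible.

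With accessibility in hand, the final coalgebra follows from general principles: the forgetful functor $\Coalg(B(X,-))\to\vcat$ creates colimits, so $\Coalg(B(X,-))$ is cocomplete, and the category of coalgebras of an accessible functor is itself accessible; a cocomplete accessible category is locally presentable, hence complete, and therefore has a terminal object, which is exactly the final $B(X,-)$-coalgebra. (One could equivalently cite directly the result that accessible endofunctors on locally presentable categories have final coalgebras.)

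The point requiring care, which I expect to be the main obstacle, is that $(-)^X$ and $\llangle X,-\rrangle$ are \emph{not} finitary: for an infinite object $X$ they fail to preserve ordinary ($\omega$-)filtered colimits, just as $(-)^X$ fails to do so on $\Set$. Hence no finitary argument applies, and it is essential that accessibility only demands preservation of $\kappa$-filtered colimits for a \emph{sufficiently large} $\kappa$. The justification that these function-space functors are nevertheless $\kappa$-accessible rests on the general fact that right adjoints between locally presentable categories are accessible, rather than on any explicit colimit computation.
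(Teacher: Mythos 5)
Your proof is correct, but it follows a genuinely different route from the paper's. The paper argues more elementarily: since $\llangle X,-\rrangle$ and $(-)^X$ are right adjoints, they preserve \emph{all} limits; limits of $\omega^\opp$-chains commute with finite products and with coproducts in $\Set$, hence pointwise in $\vcat$; therefore $B(X,-)$ preserves limits of $\omega^\opp$-chains, and the dual of Ad\'amek's classical theorem exhibits the final coalgebra concretely as the limit of the terminal chain $1\leftarrow B(X,1)\leftarrow B(X,B(X,1))\leftarrow\cdots$. You instead route through the theory of locally presentable and accessible categories: right adjoints between locally presentable categories are accessible, accessibility is stable under the finite products and coproducts occurring in $B$, and accessible endofunctors on locally presentable categories admit final coalgebras (via local presentability of $\Coalg(B(X,-))$). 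Both arguments are sound, and both hinge on the same structural observation that the two function-space functors are right adjoints; the difference lies in which consequence of adjointness is exploited. The paper's approach buys concreteness: the terminal sequence converges after $\omega$ steps, giving an explicit description of the semantic domain. Your approach buys robustness: it needs no limit-preservation at all, so it applies verbatim to behaviours composed with finitary effect monads, e.g.\ $B^\oplus(X,Y)=\mypowfin(Y+Y^X)$ from \Cref{sec:nd-ski}, where $\mypowfin$ destroys preservation of $\omega^\opp$-limits and the terminal chain need not converge at $\omega$. Your closing caveat is also exactly right: $(-)^X$ and $\llangle X,-\rrangle$ are not finitary for infinite $X$, so one genuinely needs $\kappa$-accessibility for large $\kappa$ (or, as in the paper, limit preservation instead); this is precisely why neither proof can proceed by a naive finitarity argument.
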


\begin{proof}
  The functor $B(X,-)$ preserves limits of $\omega^\opp$-chains: the
  right adjoints $\llangle X,- \rrangle$ and $(-)^X$ preserve all limits, and
  limits of $\omega^\opp$-chains commute with products and coproducts
  in $\Set$ and thus in $\vcat$ (using that limits and colimits in
  presheaf categories are formed pointwise). Therefore, dually
  to the classic result by \citet{adamek74}, the functor $B(X,-)$ has a final coalgebra
  computed as the limit of the final $\omega^\opp$-chain
  \[
    1\leftarrow B(X,1) \leftarrow B(X,B(X,1)) \leftarrow B(X,B(X,B(X,1)))  \leftarrow \cdots \tag*{$\qed$}
  \]
\def\qed{}
\end{proof}
Since $B(X,-)$ preserves pullbacks, and hence weakly preserves pullbacks, behavioural
equivalence \eqref{eq:kernel} on
$B(X,-)$-coalgebras coincides with coalgebraic
bisimilarity (see~\cite{DBLP:journals/tcs/Rutten00}). Recall that a
\emph{bisimulation} between $B(X,-)$-coalgebras $W \to B(X,W)$ and
$Z \to B(X,Z)$ is a relation between $W$ and $Z$, i.e.\ a sub-presheaf $R\seq W\times Z$, that can be equipped
with a coalgebra structure $r\colon R\to B(X,R)$ such that the projections $p_1\colon R\to W$ and $p_2\colon R\to Z$ are $B(X,-)$-coalgebra
morphisms. 
\[
\begin{tikzcd}[column sep=5em]
W \ar{d}[swap]{w} & \ar{l}[swap]{p_1} R \ar{r}{p_2} \ar[dashed]{d}{r} & Z \ar{d}{z} \\
B(X,W) & B(X,R) \ar{l}[swap]{B(X,p_1)} \ar{r}{B(X,p_2)} & B(X,Z)
\end{tikzcd}
\]
The next proposition gives an elementary
characterization of bisimulations.

\begin{proposition}
  \label{prop:bisim}
  Given $X\in \vcat$ and two $B(X,\argument)$-coalgebras 
\[\langle c_{1},c_{2}\rangle \c W \to \llangle X,W\rrangle \times (W+W^X+1)\quad\text{and}\quad \langle d_{1},d_{2}\rangle \c Z \to \llangle X,Z\rrangle\times(Z+Z^X+1),\] a family of relations
  $R(n)\seq W(n)\times Z(n)$, $n\in \fset$, is a bisimulation if and
  only if for all $n\in \fset$ and $(w,z)\in R(n)$ the following conditions hold (omitting 
  subscripts of components of the natural transformations
  $c_i$, $d_i$):
  \begin{enumerate}
  \item ($W(r)(w), Z(r)(z)) \in R(m)$ for all $r \c n \to m$;
  \item\label{bisim:2} $(c_{1}(w)(\vec{u})
    , d_{1}(z)(\vec{u}))\in R(m)$ for all $m \in \fset$ and $\vec{u} \in X(m)^{n}$;
  \item\label{bisim:3} $c_{2}(w) = w\pr \in W(n) \implies d_{2}(z) = z\pr\in Z(n) \wedge (w\pr, z\pr)\in R(n)$;
  \item $c_{2}(w) = f \in W^{X}(n) \implies d_{2}(z) = g\in Z^X(n) \wedge \forall e \in
    X(n).\,(f(e), g(e))\in R(n)$;
  \item $c_{2}(w) = \ast \implies d_{2}(z) = \ast$;
  \item $d_{2}(z) = z\pr \in Z(n) \implies c_{2}(w) = w\pr\in W(n) \wedge (w\pr,z\pr)\in R(n)$;
  \item $d_{2}(z) = g \in Z^{X}(n) \implies c_{2}(w) = f\in W^X(n) \wedge
    \forall e \in X(n).~(f(e), g(e))\in R(n)$;
  \item \label{bisim:8} $d_{2}(z) = \ast \implies c_{2}(w) = \ast$.
  \end{enumerate}
\end{proposition}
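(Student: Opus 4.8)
The plan is to prove both directions of the stated equivalence by connecting the abstract definition of a bisimulation---a sub-presheaf $R\seq W\times Z$ carrying a coalgebra structure $r=\langle r_1,r_2\rangle\c R\to \llangle X,R\rrangle\times(R+R^X+1)$ whose two projections are $B(X,\argument)$-coalgebra morphisms---to the componentwise conditions (1)--(8). Throughout I would rely on the explicit descriptions of $\llangle X,\argument\rrangle$ from \eqref{eq:tensor} and of the exponential from \Cref{N:exp}, together with the fact that, since $R\hookrightarrow W\times Z$ is a subobject in a presheaf category, a morphism factors through $R$ precisely when it does so pointwise at every stage.

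For the forward direction I would assume $R$ is a bisimulation and read off the conditions. Closure of $R$ under the presheaf action is exactly condition (1). Writing out the coalgebra-morphism equations $B(X,p_1)\comp r=\langle c_1,c_2\rangle\comp p_1$ and $B(X,p_2)\comp r=\langle d_1,d_2\rangle\comp p_2$ componentwise at stage $n$ yields the rest: the $\llangle X,\argument\rrangle$-component gives condition (2), while the $(\argument+(\argument)^X+1)$-component forces $c_2(w)$ and $d_2(z)$ into corresponding summands---coproducts in $\Set$, hence pointwise in $\vcat$, being disjoint and preserved by $p_i+p_i^X+\id$---producing conditions (3)--(8); in the functional case the element $r_2(w,z)\in R^X(n)$ witnesses $(f(e),g(e))\in R(n)$ through its diagonal evaluations.

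The substance lies in the backward direction, where I must \emph{construct} $r$ from (1)--(8) and check naturality. Condition (1) makes $R$ a sub-presheaf. I would set $r_1(w,z)=\langle c_1(w),d_1(z)\rangle$, which lands in $\llangle X,R\rrangle(n)=\NT(X^n,R)$ exactly because condition (2) asserts that this pairing factors through $R$ at \emph{all} stages $m$. The map $r_2$ is defined by the evident case distinction on the summand containing $c_2(w)$: conditions (3)--(8) guarantee that $d_2(z)$ sits in the matching summand and that the paired data again lies in $R$, in $R^X$, or equals $\ast$. Naturality of $r$ then follows from naturality of $c$ and $d$ together with the fact that renaming preserves the coproduct injections, so the case distinction is stable under the presheaf action; and $p_1,p_2$ are coalgebra morphisms by construction.

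The main obstacle is the higher-order (functional) case of $r_2$: to obtain a genuine element of $R^X(n)=\NT((\argument)^n\times X,R)$ I need the pair $\langle f,g\rangle$ of behaviour functions, with $f=c_2(w)$ and $g=d_2(z)$, to factor through $R\hookrightarrow W\times Z$ \emph{as a natural transformation}, i.e.\ $(f_m(\phi,x),g_m(\phi,x))\in R(m)$ for every $\phi\c n\to m$ and $x\in X(m)$---whereas condition (4) only supplies the diagonal memberships $(f(e),g(e))\in R(n)$ at stage $n$. I would bridge this gap by combining naturality of the coalgebras with condition (1). For each $\phi\c n\to m$ the renamed pair $(W(\phi)(w),Z(\phi)(z))$ again lies in $R(m)$ by (1), and naturality of $c_2,d_2$ identifies its behaviour functions with $W^X(\phi)(f)$ and $Z^X(\phi)(g)$; a short computation from \Cref{N:exp} shows that the diagonal evaluation of $W^X(\phi)(f)$ at $x\in X(m)$ equals $f_m(\phi,x)$. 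Applying condition (4) to the renamed pair at stage $m$ then gives $(f_m(\phi,x),g_m(\phi,x))\in R(m)$ for all $\phi$ and $x$, which is exactly the factorization required. This interplay of naturality and sub-presheaf closure is the crux of the argument, and the symmetric conditions (6)--(8) are handled by the same reasoning with the roles of $W$ and $Z$ exchanged.
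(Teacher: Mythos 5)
Your proof is correct and takes essentially the same route as the paper's: both directions unfold the definition of bisimulation, construct $r_1$ by pairing $c_1$ and $d_1$ (well-typed by condition (2)), build $r_2$ by case distinction on the coproduct summands, and resolve the exponential case by applying condition (4) at \emph{all} stages to renamed pairs, using condition (1) and naturality of $c_2,d_2$. The paper merely packages your ``crux'' argument more categorically---splitting $R$ by extensivity into sub-presheaves $R_0+R_1+R_2$ and currying a natural transformation $R_1\times X\to R$, and deriving naturality of $r$ from joint monicity of $B(X,p_1)$, $B(X,p_2)$ rather than by direct inspection of the case distinction---but the mathematical content coincides with yours.
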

\noindent Before proving the proposition, let us elaborate on the conditions (1)--(8). Condition (1) states that $B(X,-)$-bisimulations are compatible with
the renaming of free variables: given a renaming $r \c n \to m$, the renamed
terms $W(r)(w)$ and $Z(r)(z)$ are related by $R(m)$. Similarly, condition (2)
states that $B(X,-)$-bisimulations are compatible with substitutions: given a
substitution $\vec{u} \in X(m)^{n}$, the resulting terms $c_{1}(w)(\vec{u})$ and
$d_{1}(z)(\vec{u})$ are related by $R(m)$. Conditions (6)--(8) are symmetric to
(3)--(5); in fact, since $B(X,-)$-coalgebras are
deterministic transition systems, conditions (6)--(8) are implied by (3)--(5) and hence could be dropped. We opted to state (6)--(8) explicitly, as
these conditions become relevant in nondeterministic extensions of the
$\lambda$-calculus.

\begin{proof}[Proof of~\Cref{prop:bisim}]
For the $\Longrightarrow$ direction, suppose that $R$ is a bisimulation, i.e.\ there exists a coalgebra structure $\langle r_1,r_2\rangle$ on $R$ making the diagram below commute, where $p_1,p_2$ are the projections: 
\[
\begin{tikzcd}[column sep=45, row sep=3em, scale cd=.85]
W \ar{d}[swap]{\langle c_1,c_2\rangle} & R \ar{l}[swap]{p_1} \ar{r}{p_2} \ar{d}{\langle r_1,r_2\rangle} & Z \ar{d}{\langle d_1,d_2\rangle}  \\
\llangle X,W\rrangle\times (W+W^X+1) &  \ar{l}[swap,yshift=.5em]{\llangle X,p_1\rrangle\times (p_1+p_1^X+1) } \llangle X,R\rrangle\times(R+R^X+1) \ar{r}[yshift=.5em]{\llangle X,p_2\rrangle\times (p_2+p_2^X+1) }  & \llangle X,Z\rrangle \times (Z+Z^X+1)
\end{tikzcd}
\]
Then (1) holds because $R$ is a sub-presheaf of $W\times Z$, and (2), (3) and (5) are immediate from the above diagram. Concerning (4), let  $w \mathbin{R(n)} z$ and suppose that $c_2(w)=f\in W^X(n)$. Then the above diagram implies that $r_2(w,z)=h\in R^X(n)$ and $d_2(z)=g\in Z^X(n)$. Moreover, for all $e\in X(n)$, using infix notation for the binary relation $R(n)$, we have that
\[ f(e) = \ev(f,e) = p_1(\ev(h,e)) ~~\mathbin{R(n)}~~ p_2(\ev(h,e)) = \ev(g,e) = g(e) \]
where the second and the penultimate equality follow via naturality of $\ev$.

For the $\Longleftarrow$ direction, suppose that $R(n)\seq W(n)\times Z(n)$, $n\in\fset$, is a family of relations satisfying (1)--(8) for all $w\mathbin{R(n)} z$. Condition (1) asserts that $R$ is a sub-presheaf of $W\times Z$; thus it remains to define a coalgebra structure $\langle r_1,r_2\rangle$ on $R$ making the diagram above commute. It suffices to define the components 
\[\langle r_{1,n},r_{2,n}\rangle\c  R(n)\to \llangle X,R\rrangle(n)\times (R(n)+ R^X(n)+1)\] 
and prove that the diagram commutes pointwise at every $n\in \fset$;
the naturality of $r_1,r_2$ then follows since the two lower
horizontal maps in the diagram are jointly monomorphic. Indeed, this
is easy to see using the following diagram, where $f\colon n \to m$ is
a morphism of $\fset$ and we abbreviate $F = B(X,-)$,
$r = \langle r_1, r_2\rangle$ and $c = \langle c_1, c_2\rangle$:
\[
  \begin{tikzcd}
    W(n)
    \ar{ddd}[swap]{Wf}
    \ar{rrr}{c_n}
    &&&
    FW(n)
    \ar{ddd}{FW(f)}
    \\
    &
    R(n)
    \ar{d}[swap]{R(f)}
    \ar{lu}{p_{1,n}}
    \ar{r}{r_n}
    &
    FR(n)
    \ar{d}{FR(f)}
    \ar{ru}[swap]{Fp_{1,n}}
    \\
    &
    R(m)
    \ar{ld}[swap]{p_{1,m}}
    \ar{r}{r_m}
    &
    FR(m)
    \ar{rd}{Fp_{1,m}}
    \\
    W(m)
    \ar{rrr}{c_m}
    &&&
    FW(m)
  \end{tikzcd}
\]
Since its outside and the upper, lower, right and left-hand parts
commute due to the naturality of $c$ and $p_1$ so does the desired
inner square when postcomposed by $Fp_{1,m}$. A similar diagram using
$(Z, d = \langle d_1,d_2\rangle)$ and $p_2$ in lieu of $(W, c)$ and $p_{1}$
shows that the desired square commutes when postcomposed by $Fp_{2,m}$. So
since $Fp_{1,m}, Fp_{2,m}$ is jointly monic, the desired square commutes.

We define 
\[r_{1,n}\c R(n)\to \llangle X,R\rrangle(n)=\NT(X^n,R)\qquad\text{by}\qquad
 r_{1,n}(w,z) = \langle c_{1,n}(w), d_{1,n}(z)\rangle.\]
Condition (2) shows that this map is well-typed and that it makes the first component of the diagram commute. 

 To define $r_{2,n}$, using extensivity of the presheaf topos $\vcat$ we express $R$ as a coproduct  $R=R_0+R_1+R_2$ of the sub-presheaves given by
\begin{align*}
R_0(n) & = \{\, (w,z)\in R(n) : c_2(w)\in W(n),\, d_2(z)\in Z(n) \,\}, \\
R_1(n) & = \{\, (w,z)\in R(n) : c_2(w)\in W^X(n),\, d_2(z)\in Z^X(n) \,\}, \\
R_2(n) & = \{\, (w,z)\in R(n) : c_2(w)=\ast,\, d_2(z)=\ast \,\}.
\end{align*}
Thus, it suffices to define $r_{2,n}\c R_0(n)+R_1(n)+R_2(n)\to R(n)+R^X(n)+1$ separately for each summand of its domain. Given $(w,z)\in R_0(n)$, we put
\[ r_{2,n}(w,z) = (c_2(w),d_2(z))\in R(n). \]
By condition~(3), this is well-typed and makes the second component of the diagram (with domain $R$ restricted to $R_0$) commute. Similarly, for $(w,z)\in R_2(n)$ we put 
\[ r_{2,n}(w,z)=\ast; \]
the second component of the diagram (with domain $R$ restricted to $R_2$) then commutes by condition~(5). Finally, for $(w,z)\in R_1(n)$ we put
\[ r_{2,n}(w,z) = \curry\, h (w,z) \in R^X(n)\]
where $h\colon R_1\times X\to R$ is the natural transformation whose component at $m\in \fset$ is given by
\[ h_m((w',z'),e) = (c_2(w')(e), d_2(z')(e)) \in R(m).  \]
Condition~(4) asserts that $h_m$ is well-typed and that the second
component of the diagram (with domain $R$ restricted to $R_1$)
commutes.
\end{proof}

\subsection{Semantics}

As explained above, in our intended operational model
$\langle \gamma_{1},\gamma_{2}\rangle \c \Lambda \to B(\Lambda,\Lambda)$ for the $\lambda$-calculus, the component $\gamma_1$ should be the transpose of the monoid multiplication
$\mu \c \Lambda \bullet \Lambda \to \Lambda$ from \Cref{prop:sub} under
the adjunction
$\argument \mon \Lambda \dashv \llangle \Lambda , \argument
\rrangle$. As an interesting technical subtlety, for this model to be induced by a
higher-order GSOS law $\rho=(\rho_{X,Y})$ of some sort, the argument $X$ is required to be 
equipped with a \emph{point} $\var \c V \to X$. The importance
of points for defining substitution was first identified by
\cite{DBLP:conf/lics/FiorePT99} (see also \cite{DBLP:conf/lics/Fiore08}) and is
worth recalling from its original source.

Fiore et al.~argued that, given an endofunctor $F \c \vcat \to \vcat$,
in order to define a substitution structure
$F^{\star}V \bullet F^{\star}V \to F^{\star}V$ on the free $F$-algebra
over $V$, it is necessary for $F$ to be \emph{tensorially strong}, in
that there is a natural transformation
$\strength_{X,Y} \c FX \bullet Y \to F(X \bullet Y)$ satisfying the expected
coherence laws~\cite[§3]{DBLP:conf/lics/FiorePT99}. For the
special case of~$F$ being the context extension endofunctor $\delta$,
this
 requires the presheaf $Y$
to be equipped with a \emph{point} $\var \c V \to Y$: the strength map
$\strength_{X,Y} \c \delta X \bullet Y \to \delta(X \bullet Y)$ is
given at $m\in\fset$ by%
\[
  [t \in X(n+1),\vec{u} \in Y(m)^{n}] \quad\xmapsto{\strength_{X,Y}}\quad
  [t,(\oname{up}_{Y,m}(\vec{u}),\var_{m+1}(\oname{new}_{m}))\in Y(m+1)^{n+1}];
\]
where $[-]$ are equivalence classes of the equivalence relation $\approx$ appearing in the definition \eqref{eq:tensor} of $\bullet$. Intuitively, given
a substitution of length $n$ on a term with $n+1$ free variables, a
fresh variable in $Y$ should be used to (sensibly) produce a
substitution of length $n+1$.  This situation is relevant in the
context of higher-order GSOS laws of binding signatures over $B(X,Y)$
where, e.g. in the case of the $\lambda$-calculus, one is asked to
define a map of the form (factoring out the unnecessary parts)
\begin{equation}
  \label{eq:rhodelta}
  \rho_{1} \c \delta\llangle X,Y \rrangle \to \llangle X,\delta Y \rrangle.
\end{equation}
Writing $\overline{\strength}$ for the transpose of $\strength$ under $\argument
\mon X \dashv \llangle X , \argument \rrangle$, we obtain $\rho_{1}$ as the composite
\begin{equation*}
  \begin{tikzcd}
    \delta\llangle X,Y \rrangle
    \arrow[rr,"\overline{\strength}_{\llangle X,Y \rrangle,X}"]
    & & \llangle X, \delta(\llangle X,Y \rrangle \bullet X)\rrangle
    \arrow[rr,"\llangle X {,} \delta (\varepsilon) \rrangle"]
    & & \llangle X, \delta Y \rrangle,
  \end{tikzcd}
\end{equation*}
where $\varepsilon\c \llangle X,Y \rrangle \bullet X\to Y$ is the evaluation morphism for the hom-object  $\llangle X,
Y\rrangle$. In elementary terms, the map $\rho_1$ takes a natural transformation $f\colon X^{n+1}\to Y$ to the natural transformation $\rho_1(f)\c X^n\to \delta Y$ given by
\begin{equation}
  \label{eq:substhelp}
  \vec{u}
  \in X(m)^{n} \quad \mapsto \quad f_{m+1}(\oname{up}_{X,m}(\vec{u}),\var_{m+1}(\oname{new}_m)) \in Y(m+1).
\end{equation}
Thus $\rho_{1}$ represents capture-avoiding simultaneous substitution in which the freshest
variable is bound, hence it should not be substituted. 

At the same time, a higher-order GSOS law for the $\lambda$-calculus
needs to turn a $\lambda$-abstraction into a function on
potentially open terms precisely by only substituting the bound
variable. This implies that we need a natural transformation of the form
\begin{equation}
  \label{eq:rholam2}
  \rho_{2} \c \delta\llangle X,Y \rrangle \to Y^{X}.
\end{equation}
Again, we make use of the point $\var\c V \to X$ to produce $\rho_{2}$:
\begin{equation*}
  \begin{tikzcd}
    \delta\llangle X,Y \rrangle
    \arrow[r,"\cong"]
    & \llangle X,Y^{X} \rrangle
    \arrow[rr,"\llangle \var {,} Y^{X} \rrangle"]
    & & \llangle V, Y^{X}\rrangle
    \arrow[r,"\cong"]
    & Y^{X}.
  \end{tikzcd}
\end{equation*}
Here, the first isomorphism is given at $n\in \fset$ by
\[ \delta\llangle X,Y\rrangle(n) = \NT(X^{n+1},Y) \cong \NT(X^n,Y^X) = \llangle
  X,Y^X\rrangle (n). \]
Thus, in elementary terms, the adjoint transpose $\overline{\rho_{2}}$ of
$\rho_{2}$ acts as follows:
\[ \overline{\rho_2}(f) = \lambda e \in X(n).\,f_{n}(\var_n(0),\ldots, \var_n(n-1),e)\qquad\text{for $f\c X^{n+1}\to Y$.} \]
With these preparations at hand, we are now ready to define the small-step operational semantics of the
call-by-name $\lambda$-calculus in terms of a
$V$-pointed higher-order GSOS law of the syntax endofunctor $\Sigma X = V+\delta X + X \product X$ over the behaviour bifunctor
$B(X,Y)=\llangle X,Y\rrangle \times (Y+Y^X+1)$. A law of this type is given by a family of presheaf maps
\[
\begin{tikzcd}
  V+\delta(X\times \llangle X,Y\rrangle \product (Y + Y^{X} + 1) ) + (X\times \llangle X,Y \rrangle \product (Y + Y^{X} + 1))^2 \ar{d}{\rho_{X,Y}}  \\
 \llangle X,\Sigmas(X+Y)  \rrangle \product (\Sigmas(X+Y) + (\Sigmas(X+Y))^{X} + 1)
\end{tikzcd}
\]
dinatural in $(X,\var_X)\in V/\vcat$ and natural in $Y\in \vcat$. We let $\rho_{X,Y,n}$ denote the component of $\rho_{X,Y}$ at $n\in \fset$. For the definition of $\rho$ we set up some notation:
\begin{notation}
  We write
  \[\lambda.(-)\c \delta\Sigmas\to \Sigmas \qquad\text{and}\qquad
    \appp\c \Sigmas\times \Sigmas\to \Sigmas \] for the natural
  transformations whose components come from the $\Sigma$-algebra structure
  on free $\Sigma$-algebras; here $\appp$ denotes application. In the following we will consider free
  algebras of the form $\Sigmas(X+Y)$. For simplicity, we usually keep inclusion maps implicit: Given $t_1,t_2\in X(n)$ and
  $t_1'\in Y(n)$ we write $t_1\app t_2$
  for $\appp([\eta\comp \inl(t_1)], [\eta\comp \inl(t_2)])$, and similarly
  $t_1\app t_1'$ for $\appp([\eta\comp\inl(t_1)], [\eta\comp \inr(t_1')])$
  etc., where $\inl$ and $\inr$ are the coproduct injections and $\eta\c \Id\to\Sigmas$ is the unit of the free monad $\Sigmas$.
\end{notation}
\begin{notation}
  Let \[\pi\colon V\to \llangle X,\Sigmas(X+Y)\rrangle \] be the adjoint
  transpose of
  \[
    V\bullet X \xto{\cong} X \xto{\inl} X+Y \xto{\eta}
    \Sigmas(X+Y).
  \]
  Thus, for $v\in V(n)=n$, the natural transformation
  $\pi(v)(n)\colon X^n\to \Sigmas(X+Y)$ is the $v$-th projection
  $X^n\to X$ followed by $\eta\comp \inl$. Further, recall 
  that $j \c \Pt/\vcat \to \vcat$ denotes the forgetful functor. 
\end{notation}

\begin{definition}[$V$-pointed higher-order GSOS law for the call-by-name
  $\lambda$-calculus]
  \label{def:lamgsos}
  \[ \rho^{\cn}_{X,Y}  \c \Sigma(jX \times B(jX,Y))\to  B(jX, \Sigma^\star (jX+Y)) \]  
  is given by 
  \begin{align*}
   & \rho^{\cn}_{X,Y,n}(tr) = \texttt{case}~tr~\texttt{of} &&& \\
    & v \in V(n)
    & \mapsto \quad
    & \pi(v),*
    & \\
    & \mathsf{\lambda}.(t, f,\_)
    & \mapsto
      \quad
    & \llangle X, \lambda.(-)\comp \eta \comp \inr \rrangle (\rho_{1}(f)),
      (\eta \comp \inr)^{X}(\rho_{2}(f))
    & \\
    & (t_{1}, g, t_{1}\pr) \app (t_{2}, h,\_)
    & \mapsto
      \quad
    & \lambda m,\,\vec{u} \in X(m)^{n}.\, ( g_{m}(\vec{u}) \app h_{m}(\vec{u}) ),t_{1}\pr \app t_{2}
    & \\
    & (t_{1}, g, k) \app (t_{2}, h,\_)
    & \mapsto
      \quad
    & \lambda m,\,\vec{u} \in X(m)^{n}.\, (g_{m}(\vec{u}) \app h_{m}(\vec{u})),\eta \comp \inr \comp k(t_{2})
    & \\
    & (t_{1}, g, *) \app (t_{2}, h,\_)
    & \mapsto
      \quad
    & \lambda m,\,\vec{u} \in X(m)^{n}.\, (g_{m}(\vec{u}) \app h_{m}(\vec{u})),* &
\end{align*}
where $t\in \delta X(n)$, $f\in \delta\llangle X,Y\rrangle(n)$, $g,h\in \llangle X,Y\rrangle(n)$, $k\in Y^X(n)$, $t_1, t_2\in X(n)$
and $t_1'\in Y(n)$. 
\end{definition}

\begin{rem}
We have omitted brackets around the pairs on the right. The last three clauses refer to the application operator $p\app q= \appp(p,q)$ and could also be written as, e.g.,
\[\appp((t_{1}, g, t_{1}\pr), (t_{2}, h,\_))
     \quad\mapsto\quad
     \lambda m,\,\vec{u} \in X(m)^{n}.\,  \appp( g_{m}(\vec{u}) , h_{m}(\vec{u})) , \appp(t_{1}\pr, t_{2}).  \] 
\end{rem}
In \Cref{fig:cn} the definition of $\rho^\cn$ is rephrased in the
style of inference rules. Here, $[\ldots]$ corresponds to the first 
component of $B(jX,Y)$, and $\to$, $\xto{t}$, $\nrightarrow$ correspond to the 
slots in the second component of $B(jX,Y)$. For instance, the rule \texttt{lam} expresses that for every $t\in \delta X(n)$, $f\in \delta\llangle X,Y\rrangle$ and $e\in X(n)$, putting $\vec{u} = (\var_n(0),\ldots, \var_n(n-1),e)$ and $t'=f(\vec{u})$, the second component of $\rho_{X,Y,n}(\lambda.(t,f,\_))$ lies in $Y^X(n)$ and satisfies $\rho_{X,Y,n}(\lambda.(t,f,\_))(e)=t'$. This matches precisely the corresponding clause of \Cref{def:lamgsos}.

 \begin{figure}[t]
   \[
     \begin{array}{l@{\qquad}l}
       \inference[\texttt{var}]{}{v \nrightarrow}
       & \inference[\texttt{lam}]{
       \vec{u} = (\var_n(0),\ldots, \var_n(n-1),e\in X(n))
       & t[\vec{u}] = t'}{\goesv{\lambda.t}{t\pr}{e}}
     \end{array}
   \]
   \[
     \begin{array}{l@{\qquad}l@{\qquad}l}
       \inference[\texttt{app1}]{\goes{t_{1}}{t_{1}\pr}}{\goes{t_{1} \app t_{2}}{t_{1}\pr \app t_{2}}}
       & \inference[\texttt{app2}]{t_{1} \xrightarrow{t_{2}} t_{1}\pr}{\goes{t_{1} \app t_{2}}{t_{1}\pr}}
       &  \inference[\texttt{app3}]{t_{1}\nrightarrow}{t_{1}\app t_{2} \nrightarrow}
     \end{array}
   \]
  
   \[
     \begin{array}{l@{\qquad}l}
       \inference[\texttt{varSub}]{}{v[\vec{u}] = \vec{u}(v)}
       &
         \inference[\texttt{lamSub}]{\vec{w} =
         (\oname{up}_{X,m}(\vec{u}),\var_{m+1}(\oname{new}_m)) & t[\vec{w}]         
         =t^{\prime\prime}}{(\lambda. t)[\vec{u}]=
         \lambda. t^{\prime\prime}}
     \end{array}
   \]
  
   \[
     \inference[\texttt{appSub}]{t_{1}[\vec{u}]=
       t_{1}\pr & t_{2}[\vec{u}]=t_{2}\pr}{(t_{1} \app
       t_{2})[\vec{u}]=t_{1}\pr \app t_{2}\pr}
   \]
   \caption{Law $\rho^{\cn}$ in the form of inference rules.}
   \label{fig:cn}
 \end{figure}

\begin{rem}\label{rem:op-model-lambda}
Instantiating \Cref{def:operational-model}, the operational model of the higher-order GSOS law $\rho^\cn$ is the $B(\Lambda,-)$-coalgebra 
\begin{equation}\label{eq:op-model-lambda}
  \iota^\clubsuit=\langle \gamma_{1},\gamma_{2}\rangle \c \Lambda \to
  \llangle \Lambda,\Lambda \rrangle \times (\Lambda + \Lambda^\Lambda+1),
\end{equation}
that is uniquely determined by the following commutative diagram:
\[
\begin{tikzcd}[column sep=63, scale cd=.85]
V+\delta\Lambda+\Lambda^2 \ar{ddd}{ \iota=[\var,\lambda.(-),\appp]} \ar{r}[yshift=.5em]{\id+\delta\langle \id,\gamma_1,\gamma_2\rangle + \langle \id,\gamma_1,\gamma_2\rangle^2} & V+\delta(\Lambda\times\llangle \Lambda,\Lambda\rrangle \times(\Lambda+\Lambda^\Lambda+1)) + ( \Lambda\times\llangle\Lambda,\Lambda\rrangle \times (\Lambda+\Lambda^\Lambda+1))^2  \ar{d}{\rho_{\Lambda,\Lambda}^{\cn}} \\
&  \llangle \Lambda,\Sigmas(\Lambda+\Lambda)\rrangle \times (\Sigmas(\Lambda+\Lambda) + (\Sigmas(\Lambda+\Lambda))^\Lambda +1)  \ar{d}{\llangle \id,\Sigmas\nabla\rrangle \times (\Sigmas\nabla + (\Sigmas\nabla)^\Lambda +\id) }  \\
&  \llangle \Lambda,\Sigmas\Lambda\rrangle \times (\Sigmas\Lambda + (\Sigmas\Lambda)^\Lambda +1) \ar{d}{ \llangle \id,\hat\iota\rrangle \times (\hat\iota + {\hat\iota}^\Lambda +\id) } \\
\Lambda \ar{r}{ \langle\gamma_1,\gamma_2\rangle} &  \llangle \Lambda,\Lambda\rrangle \times (\Lambda+ \Lambda^\Lambda +1) 
 \end{tikzcd}
\]
\end{rem}
The following two propositions assert that the coalgebra $\langle \gamma_1,\gamma_2\rangle$
coincides with the intended operational model described in
\eqref{eq:lambda-op-model}, that is, its first component
exposes the substitution structure of $\lambda$-terms and its second
component corresponds to the transition system $\to$ on $\lambda$-terms
derived from the operational semantics in \Cref{fig:lambda}.

\begin{proposition}\label{prop:gamma1}
For every $m,n\in\fset$, $t\in \Lambda(n)$ and $\vec{u}\in \Lambda(m)^n$, we have
\[ \gamma_1(t)(\vec{u}) = t[\vec{u}]. \]
\end{proposition}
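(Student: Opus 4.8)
The plan is to prove the identity by structural induction on $t\in\Lambda=\mu\Sigma$, using initiality of $\Lambda$ as a $\Sigma$-algebra: every $t\in\Lambda(n)$ is of the form $\var(v)$ for a variable $v\in n$, or $\lambda.s$ for some $s\in\delta\Lambda(n)=\Lambda(n+1)$, or $s_1\app s_2$ for $s_1,s_2\in\Lambda(n)$. First I would read off from the defining diagram of the operational model (\Cref{rem:op-model-lambda}) how the component $\gamma_1$ acts on each shape of term. Composing the first component of $\rho^{\cn}_{\Lambda,\Lambda}$ (\Cref{def:lamgsos}) with the two collapses $\llangle\id,\Sigmas\nabla\rrangle$ and $\llangle\id,\hat\iota\rrangle$ yields exactly the three substitution rules \texttt{varSub}, \texttt{lamSub} and \texttt{appSub} of \Cref{fig:cn}, now read as recursive equations satisfied by $\gamma_1$.

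The two non-binding cases are then routine. For a variable, the clause $v\mapsto\pi(v)$ together with $\Sigmas\nabla$ and the term evaluation $\hat\iota$ gives $\gamma_1(\var(v))(\vec u)=\vec u(v)$, which is $\var(v)[\vec u]$. For an application, the corresponding clause of \Cref{def:lamgsos} produces $\vec u\mapsto g_m(\vec u)\app h_m(\vec u)$ with $g=\gamma_1(s_1)$ and $h=\gamma_1(s_2)$, so that $\gamma_1(s_1\app s_2)(\vec u)=\gamma_1(s_1)(\vec u)\app\gamma_1(s_2)(\vec u)$; applying the induction hypothesis to $s_1$ and $s_2$ and the defining clause $(s_1\app s_2)[\vec u]=s_1[\vec u]\app s_2[\vec u]$ of simultaneous substitution closes this case.

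The crux is the abstraction case. Here I would unfold $\rho_1$ through its elementary description \eqref{eq:substhelp}: applied to the component $f\in\delta\llangle\Lambda,\Lambda\rrangle(n)$ arising from $\gamma_1(s)$, it sends $\vec u\in\Lambda(m)^n$ to $f_{m+1}(\oname{up}_{\Lambda,m}(\vec u),\var_{m+1}(\oname{new}_m))$, after which the outer $\lambda.(-)$ rebinds the freshest variable. Tracking this through the two collapses gives $\gamma_1(\lambda.s)(\vec u)=\lambda.\bigl(\gamma_1(s)(\vec w)\bigr)$ with $\vec w=(\oname{up}_{\Lambda,m}(\vec u),\var_{m+1}(\oname{new}_m))$, matching \texttt{lamSub}; the induction hypothesis $\gamma_1(s)(\vec w)=s[\vec w]$ and the clause $(\lambda.s)[\vec u]=\lambda.(s[\vec w])$ finish the argument. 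I expect this step to be the main obstacle, since it requires checking that the strength-based $\rho_1$ --- which weakens the substitution via $\oname{up}$ and inserts the fresh variable $\var(\oname{new})$ --- really implements the capture-avoiding convention baked into the renaming action of $\Lambda$, and hence into $s[\vec w]$. An equivalent and slightly slicker packaging is to prove $\gamma_1=\bar\mu$ directly: both natural transformations satisfy the same three recursive equations over the initial algebra $\Lambda$, hence coincide by initiality, and the claim then follows from the formula $\bar\mu(t)(\vec u)=t[\vec u]$ of \Cref{prop:sub}.
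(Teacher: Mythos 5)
Your proposal is correct and follows essentially the same route as the paper's proof: a structural induction on $t$ with the three cases variable, abstraction, and application, resolving each via the corresponding clause of $\rho^{\cn}$ (through $\pi$ for variables, the elementary description \eqref{eq:substhelp} of $\rho_1$ for abstractions) together with the inductive hypothesis and the defining clauses of simultaneous substitution. The alternative packaging via $\gamma_1=\bar\mu$ and initiality is a cosmetic reformulation of the same argument, not a different method.
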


\begin{proof}
We proceed by induction on the structure of $t$.
  \begin{itemize}
  \item For $t=v \in V(n)$,  
\[\gamma_1(v)(\vec{u})=\pi(v)(\vec{u}) = u_v = v[\vec{u}];\]
the first equality follows from the definition of $\gamma_1$
(\Cref{rem:op-model-lambda}), the second one from the definition of $\pi$, and
the third one from the definition of substitution.
  \item For $t = \lambda x.t\pr$ (where $x=n$ and $t'\in \Lambda(n+1)$),  
  \[\gamma_{1}(t)(\vec{u}) =
    \lambda m.(\gamma_{1}(t\pr)(\oname{up}_{\Lambda,m}(\vec{u}),m)) =  \lambda m.(t\pr[\oname{up}_{\Lambda,m}(\vec{u}),m]) = t[\vec{u}]; \]
  the first equality uses the definition of $\gamma_1$ in terms of $\rho_{1}$
  \eqref{eq:substhelp}, the second one follows by
induction and the third one by the definition of substitution for the case of
$\lambda$-abstraction.
  \item For $t = t_{1} \app t_{2}$,
\[ \gamma_{1}(t)(\vec{u}) = \gamma_{1}(t_{1})(\vec{u}) \app
    \gamma_{1}(t_{2})(\vec{u}) = t_{1}[\vec{u}] \app t_{2}[\vec{u}] = t[\vec{u}]; \]
 the first equality follows from the definition of $\gamma_1$, the second one by induction, and the third one by the definition of substitution.\hfill$\qed$\par\addvspace{6pt}
  \end{itemize}\def\qed{}
\end{proof}
\begin{proposition}\label{prop:gamma2}
For every $n\in\fset$ and $t\in \Lambda(n)$ the following statements hold:
\begin{enumerate}
\item If $\gamma_2(t) \in \Lambda(n)$ then $t\to \gamma_2(t)$;
\item If $\gamma_2(t)\in \Lambda^\Lambda(n)$ then $t=\lambda x.t'$ for some $t'$, and $\gamma_2(t)(e)=t'[e/x]$ for every $e\in \Lambda(n)$;
\item If $\gamma_2(t)=\ast$ then $t$ is stuck, i.e.\ $t=x\app s_1\app\cdots\app s_k$ for $x\in V(n)$, $k\geq 0$ and $s_1,\ldots, s_k\in \Lambda(n)$.
\end{enumerate}
\end{proposition}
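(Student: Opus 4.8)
The plan is to proceed by structural induction on $t\in\Lambda(n)$, reading off the value of $\gamma_2(t)$ from the operational model in each case. Concretely, \Cref{rem:op-model-lambda} presents $\langle\gamma_1,\gamma_2\rangle=\iota^\clubsuit$ as the unique coalgebra satisfying a commuting diagram; unfolding it shows that for a term $t=\iota(w)$ with $w\in\Sigma\Lambda=V+\delta\Lambda+\Lambda\times\Lambda$, the value $\gamma_2(t)$ is obtained by feeding $\Sigma\langle\id,\gamma_1,\gamma_2\rangle(w)$ into the matching clause of $\rho^\cn$ from \Cref{def:lamgsos} and postcomposing with the collapse $B(\id,\hat\iota\comp\Sigmas\nabla)$. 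Thus each syntactic form of $t$ selects exactly one clause of $\rho^\cn$, which directly determines in which summand of $\Lambda+\Lambda^\Lambda+1$ the element $\gamma_2(t)$ lies; the three claims (1)--(3) precisely track these three summands, so matching the clause to the summand is the heart of the argument.

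For $t=v\in V(n)$ the \texttt{var} clause yields $\gamma_2(v)=*$, so only claim~(3) applies, and it holds trivially since $v$ is stuck with $k=0$. For $t=\lambda x.t'$ with $t'\in\Lambda(n+1)$, the \texttt{lam} clause places $\gamma_2(t)$ in $\Lambda^\Lambda(n)$, so claim~(2) applies, and it remains to identify $\gamma_2(t)(e)$. Unwinding the definition of $\rho_2$ (cf.\ the elementary description following \eqref{eq:rholam2} and the clause for $\lambda.(t,f,\_)$), with $f=\gamma_1(t')$ playing the role of the behaviour of the body, we obtain $\gamma_2(\lambda x.t')(e)=\gamma_1(t')(\var_n(0),\ldots,\var_n(n-1),e)$. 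Applying \Cref{prop:gamma1} with the substitution $\vec{u}=(\var_n(0),\ldots,\var_n(n-1),e)$ gives $\gamma_1(t')(\vec u)=t'[\vec u]$, and since the first $n$ entries of $\vec u$ substitute each free variable for itself, this equals $t'[e/x]$, as required.

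The application case $t=t_1\app t_2$ splits into three sub-cases according to the value of $\gamma_2(t_1)$, which by the induction hypothesis for $t_1$ determines which of the three application clauses of $\rho^\cn$ is selected. If $\gamma_2(t_1)\in\Lambda(n)$, then by induction $t_1\to\gamma_2(t_1)$, the first application clause gives $\gamma_2(t)=\gamma_2(t_1)\app t_2$, and rule \texttt{app1} of \Cref{fig:lambda} yields $t\to\gamma_2(t)$, establishing claim~(1). If $\gamma_2(t_1)\in\Lambda^\Lambda(n)$, the induction hypothesis (claim~(2)) gives $t_1=\lambda x.s$ with $\gamma_2(t_1)(e)=s[e/x]$; the second clause gives $\gamma_2(t)=\gamma_2(t_1)(t_2)=s[t_2/x]$, and rule \texttt{app2} yields $(\lambda x.s)\app t_2\to s[t_2/x]$, again establishing claim~(1). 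Finally, if $\gamma_2(t_1)=*$, the induction hypothesis (claim~(3)) gives $t_1=x\,s_1\app\cdots\app s_k$, the third clause gives $\gamma_2(t)=*$, and $t=x\,s_1\app\cdots\app s_k\app t_2$ is again stuck, establishing claim~(3).

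I expect the main obstacle to be the abstraction case, specifically the bookkeeping needed to justify $\gamma_2(\lambda x.t')(e)=t'[e/x]$: one must unwind the composite defining $\rho_2$ through the point $\var\c V\to X$ and the isomorphism $\delta\llangle X,Y\rrangle\cong\llangle X,Y^X\rrangle$, instantiate it at $X=Y=\Lambda$, and then invoke \Cref{prop:gamma1} together with the fact that the leading identity substitution acts trivially on free variables. A secondary subtlety is tracking how the operational-model diagram of \Cref{rem:op-model-lambda} routes $\langle\id,\gamma_1,\gamma_2\rangle$ into the immediate subterms, so that the recursion on $\gamma_2$ genuinely matches the inference rules of \Cref{fig:cn}; once this correspondence is made explicit, the remaining verifications are routine.
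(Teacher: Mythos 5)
Your proposal is correct and follows essentially the same route as the paper's own proof: structural induction on $t$, reading off $\gamma_2$ from the clauses of $\rho^{\cn}$, using \Cref{prop:gamma1} to identify $\gamma_2(\lambda x.t')(e)$ with $t'[e/x]$, and handling application by dispatching on the behaviour of $t_1$. The only (immaterial) difference is that you split the application case directly on which summand $\gamma_2(t_1)$ inhabits, whereas the paper splits on the syntactic form of $t_1$ and then invokes the induction hypothesis to determine that summand; your organization is, if anything, slightly more uniform since it matches the clauses of $\rho^{\cn}$ one-to-one.
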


\begin{remark}
Note that partial converses of the above statements are implied:
\begin{enumerate}
\item If $t$ reduces, then $\gamma_2(t)\in \Lambda(n)$.
\item If $t=\lambda x.t'$, then $\gamma_2(t)\in \Lambda^\Lambda(n)$.
\item If $t$ is stuck, then $\gamma_2(t)=\ast$
\end{enumerate} 
For instance, if $t$ reduces, then it can neither hold that $\gamma_2(t)\in \Lambda^\Lambda(n)$ or $\gamma_2(t)=\ast$ by \Cref{prop:gamma2}(2),(3), and therefore $\gamma_2(t)\in \Lambda(n)$. Similarly for the other cases.
\end{remark}

\begin{proof}
We proceed by induction on the structure of $t$:
   \begin{itemize}
  \item For $t=v \in V(n)$, 
we have $\gamma_2(t)=\ast$ by the definition of $\gamma_2$ (\Cref{rem:op-model-lambda}); hence case (3) applies, and $t$ is stuck as claimed.
\item For $t = \lambda x.t\pr$, we have
  $\gamma_2(t)\in \Lambda^\Lambda(n)$ using the definition of
  $\rho^{\cn}$. Then case (2) applies, and for every $e\in \Lambda(n)$ we have
\[ \gamma_2(t)(e) = \rho_2(\gamma_1(t'))(e) = \gamma_1(t')(0,\ldots,n-1,e) = t'[e/x] \]
as claimed, where the first two equalities use the definition of $\gamma_2$ and $\rho_2$, respectively, and the third one follows from \Cref{prop:gamma1}.
  \item For $t = t_{1} \app t_{2}$, we distinguish three cases:
\begin{itemize}
\item If $t_1=x\in V(n)$, then $\gamma_2(t)=\ast$ and $t=x\app t_2$. Hence case (3) applies, and $t$ is stuck as claimed.
\item If $t_1=\lambda x.t_1'$, we have $t\to t':= t_1'[t_2/x]$. Then case (1) applies, and 
\[ \gamma_2(t) = \gamma_2(t_1)(t_2) = t_1'[t_2/x] = t' \]
as claimed, where the first equality uses the definition of $\gamma_2$ and the second one follows by induction.
\item If $t_1=t_{1,1}\app t_{1,2}$, then either case (1) or (3) applies to $t_1$. If case (1) applies to $t_1$, i.e.\ $\gamma_2(t_1)\in \Lambda(n)$, we know by induction that $t_1\to \gamma_2(t_1)$. By definition of $\to$ and $\gamma_2$, this implies $\gamma_2(t)\in \Lambda(n)$ and
\[ t \to \gamma_2(t_1)\app t_2 = \gamma_2(t), \]
proving (1) for $t$. If case (3) applies to $t_1$, i.e.\ $\gamma_2(t_1)=\ast$, we know by induction that $t_1=x\app s_1\app\cdots\app s_k$ for some $x\in V(n)$ and $s_1,\ldots, s_k\in \Lambda(n)$. Then also $\gamma_2(t)=\ast$ and $t=x\app s_1\app\cdots\app s_k \app t_2$, proving (3) for $t$.  \hfill$\qed$\par\addvspace{6pt}
\end{itemize}
  \end{itemize}\def\qed{}
\end{proof}
Let ${\sim^{\Lambda}} \seq \Lambda \times \Lambda$ be the bisimilarity
relation on the coalgebra \eqref{eq:op-model-lambda}.
It turns out that $\sim^\Lambda$ matches the open extension of strong applicative bisimilarity, cf.\ \cite{Abramsky:lazylambda}.
\begin{definition}\label{def:strong-app}
\emph{Strong applicative bisimilarity} is the greatest relation $\sim^\ap_0 \,\subseteq\, \Lambda(0) \times \Lambda(0)$ such that for 
  $t_{1} \sim^\ap_0 t_{2}$ the following conditions hold:
  \begin{align*}
    t_{1} \to t_{1}\pr
    & \;\implies\; \exists 
    t_{2}\pr.~t_{2} \to t_{2}\pr \wedge t_{1}\pr \sim^\ap_0 t_{2}\pr;
    \tag*{(A1)} \\
    t_{1} = \lambda x.t_1'
    &\;\implies\; \exists t_2'.\,t_{2} = \lambda x.t_2' \wedge \forall e \in
    \Lambda(0).~t_1'[e/x]\sim^\ap_0t_2'[e/x];
    \tag*{(A2)} \\
    t_{2} \to t_{2}\pr
    &\;\implies\; \exists t_{1}\pr.~t_{1} \to t_{1}\pr \wedge t_{1}\pr \sim^\ap_0 t_{2}\pr;
    \tag*{(A3)} \\
    t_{2} = \lambda x.t_2'
    &\;\implies\; \exists t_1'.\,t_{1} = \lambda x.t_1' \wedge \forall e \in
    \Lambda(0).~t_1'[e/x]\sim^\ap_0t_2'[e/x].\tag*{(A4)}
  \end{align*}
The \emph{open extension} of strong applicative bisimilarity is the relation $\sim^{\ap}\,\seq \Lambda\times \Lambda$ where $\sim^{\ap}_n\,\seq \Lambda(n)\times \Lambda(n)$ for $n>0$ is given by
\[ t_1 \sim^{\ap}_n t_2 \qquad \text{iff}\qquad t_1[\vec{u}] \sim^{\ap}_0 t_2[\vec{u}]\quad \text{for every $\vec{u}\in \Lambda(0)^n$}.\]
\end{definition}

\begin{proposition}\label{prop:bisim-vs-appbisim}
Coalgebraic bisimilarity coincides with the open extension of strong applicative bisimilarity:
\[ {\sim^\Lambda}\; =\, {\sim^\ap}. \]
\end{proposition}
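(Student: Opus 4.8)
The plan is to prove the two inclusions ${\sim^\Lambda}\subseteq{\sim^\ap}$ and ${\sim^\ap}\subseteq{\sim^\Lambda}$ separately, throughout exploiting the elementary characterization of $B(\Lambda,-)$-bisimulations from \Cref{prop:bisim} together with the concrete descriptions of the operational model: $\gamma_1(t)(\vec u)=t[\vec u]$ (\Cref{prop:gamma1}) and the reduction/abstraction/stuck trichotomy of $\gamma_2$ (\Cref{prop:gamma2}). The crucial structural fact used repeatedly is that a \emph{closed} term is never stuck: by \Cref{prop:gamma2} it either reduces or is a $\lambda$-abstraction.

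For ${\sim^\Lambda}\subseteq{\sim^\ap}$, I would first show that the restriction ${\sim^\Lambda_0}$ to closed terms is a strong applicative bisimulation, i.e.\ satisfies (A1)--(A4). This is a direct translation: on $\Lambda(0)$ the stuck alternative of $\gamma_2$ is vacuous, so conditions (3) and (4) of \Cref{prop:bisim} become precisely (A1) and (A2) once we rewrite $\gamma_2(t)=t'$ as $t\to t'$ and $\gamma_2(\lambda x.s)(e)=s[e/x]$ via \Cref{prop:gamma2}; conditions (6) and (7) yield (A3) and (A4) symmetrically. Hence ${\sim^\Lambda_0}\subseteq{\sim^\ap_0}$. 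To pass to open terms I would invoke the substitution-closure of bisimulations: condition (2) of \Cref{prop:bisim} with $m=0$ gives, for $t_1\sim^\Lambda_n t_2$ and any $\vec u\in\Lambda(0)^n$, that $t_1[\vec u]=\gamma_1(t_1)(\vec u)$ and $t_2[\vec u]=\gamma_1(t_2)(\vec u)$ are again related by $\sim^\Lambda_0$, hence by $\sim^\ap_0$. By \Cref{def:strong-app} this is exactly $t_1\sim^\ap_n t_2$.

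For ${\sim^\ap}\subseteq{\sim^\Lambda}$, I would verify that the family $(\sim^\ap_n)_{n\in\fset}$ is a $B(\Lambda,-)$-bisimulation by checking conditions (1)--(8) of \Cref{prop:bisim}. Conditions (1) and (2) are pure substitution bookkeeping: closure under renaming follows because $(t_i[r])[\vec v]=t_i[\vec v\circ r]$, and closure under substitution by $\vec u\in\Lambda(m)^n$ follows from the substitution (associativity) lemma $(t_i[\vec u])[\vec v]=t_i[\vec u[\vec v]]$, which reduces the required relation at stage $m$ to the defining relation at stage $0$. The behavioral conditions (3)--(5) (and their mirror images (6)--(8), which are in fact redundant by determinism) are the heart of the argument: I must show that $\sim^\ap_n$-related open terms $t_1,t_2$ fall into the \emph{same} case of the trichotomy and have matching one-step data. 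Ruling out a mismatch with an abstraction is immediate, since substituting any closing $\vec u$ turns an abstraction into a non-reducing term while a reducing term stays reducing (reduction is preserved by substitution), contradicting (A1)/(A2) for the closed instances.

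The main obstacle is the \emph{stuck} case: an open term $y\,s_1\cdots s_k$ headed by a free variable reduces to nothing, yet a closing substitution can turn it into a reducing term, because the closed term substituted for $y$ is itself either reducing or an abstraction. Thus the shape of $t$ is not preserved under closing substitutions, and one cannot read off the open behavior from a single closed instance. The resolution rests on the \emph{strong} (step-synchronized) nature of $\sim^\ap_0$: I would probe with families of closing substitutions that instantiate the head variable by closed terms of prescribed reduction length (for instance $I$ versus a term reducing a controlled number of times to $I$) and compare the exact number of steps to a weak head normal form. Since $\sim^\ap_0$ matches reductions one-for-one, these step counts must agree for every probe, whereas an intrinsically reducing $t_1$ and a stuck $t_2$ produce incompatible step-count profiles; this contradiction forces $t_2$ to reduce as well, after which the matching reduct is supplied by (A1) together with determinism. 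I expect this step-counting analysis of the stuck case, and the parallel treatment recovering the abstraction and reduct data required by conditions (4) and (5), to be the most delicate and laborious part of the proof.
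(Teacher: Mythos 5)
Your overall architecture coincides with the paper's: both inclusions are handled exactly as you describe, and your argument for ${\sim^\Lambda}\subseteq{\sim^\ap}$ (conditions (3), (4), (6), (7) of \Cref{prop:bisim} at stage $0$ give (A1)--(A4), then condition (2) lifts the inclusion to open terms) is precisely the paper's. Likewise, conditions (1)--(2) of the converse inclusion are dispatched by the same substitution bookkeeping, and you correctly identify the stuck-versus-reducing case as the crux.

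However, your resolution of that crux has a genuine gap: the probe family you propose --- the identity $I$ and terms reducing a prescribed number of steps to $I$ --- combined with step-count comparison provably cannot complete the argument. Concretely, take $t_1 = I\app x$ and $t_2 = x\app I$. Here $t_1\to x$ while $t_2$ is stuck, so verifying condition (3) for $\sim^\ap$ requires showing $t_1\not\sim^\ap_1 t_2$. But write $D_0=I$ and $D_{j+1}=I\app D_j$; for every probe $u_x=D_j$ the closed instances are $t_1[D_j/x]=D_{j+1}$ and $t_2[D_j/x]=D_j\app I$, and both reduce deterministically in exactly $j+1$ steps to $I$, passing through strongly bisimilar (eventually identical) intermediate terms. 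So under \emph{all} of your probes the step-count profiles --- indeed the entire behaviours --- agree, yet the pair must be separated (e.g.\ by the probe $u_x=\lambda z.(I\app I)$, under which one instance reaches an abstraction in one step while the other still reduces). The point your sketch glosses over is that the head variable of the stuck term $t_2$ may also occur free in the ``intrinsically reducing'' term $t_1$, so every probe perturbs both sides symmetrically; worse, after its first reduction steps $t_1$ may itself expose that variable at the head. This is exactly why the paper's proof of condition (3) needs richer probes --- divergent terms such as $(\lambda y.y\app y)\app(\lambda y.y\app y)$, $k$-fold absorbers $\lambda x_1.\cdots\lambda x_k.\lambda y.y$, and absorbers $\lambda x_1.\cdots\lambda x_k.t$ ending in a closed non-abstraction $t$ --- together with an exhaustive case analysis (Cases 2.1.1--2.2.4) on how far $t_1$ reduces relative to the number $k$ of arguments of the stuck term, engineering a situation where after a synchronized number $m$ of steps exactly one side is a $\lambda$-abstraction. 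Your step-count invariant is sound and would follow from that construction, but it cannot be established by the probes you name, and the construction itself is precisely the content you left open.
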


\begin{proof}
By \Cref{prop:bisim,prop:gamma1,prop:gamma2}, bisimilarity is the greatest relation $\sim^\Lambda\,\seq \Lambda\times\Lambda$ such that for every $n\in \fset$ and $t_1 \sim^\Lambda_n t_2$ the following conditions hold:
\begin{enumerate}
\item $t_1[r(0),\ldots,r(n-1)] \sim^\Lambda_m t_2[r(0),\ldots,r(n-1)]$ for all $r \c n \to m$;
\item $t_1[\vec{u}] \sim^\Lambda_m t_2[\vec{u}]$ for all $m\in \fset$ and $\vec{u}\in \Lambda(m)^n$;
\item $t_1\to t_1' \implies \exists t_2'.~ t_2\to t_2' \wedge t_1' \sim^\Lambda_n t_2'$;
\item $t_1=\lambda x.t_1' \implies \exists t_2'.t_2=\lambda x.t_2' \wedge \forall e\in \Lambda(n).  t_1'[e/x] \sim^\Lambda_n t_2'[e/x]$;
\item  $\exists x\in V(n), s_1,\ldots,s_k\in \Lambda(n).\, t_1=x\app s_1\app\cdots\app s_k \Longrightarrow \exists y\in V(n), s_1',\ldots,s_m'\in \Lambda(n).\, t_2=y\app s_1'\app\cdots\app s_m'$;
\item $t_2\to t_2' \implies \exists t_1'.~ t_1\to t_1' \wedge t_1'\sim^\Lambda_n t_2'$;
\item $t_2=\lambda x.t_2' \implies \exists t_1'.t_1=\lambda x.t_1' \wedge \forall e\in \Lambda(n).  t_1'[e/x] \sim^\Lambda_n t_2'[e/x]$;
\item  $\exists y\in V(n), s_1',\ldots,s_m'\in \Lambda(n).\, t_2=y\app s_1'\app\cdots\app s_m' \Longrightarrow  \exists x\in V(n), s_1,\ldots,s_k\in \Lambda(n).\, t_1=x\app s_1\app\cdots\app s_k$. 
\end{enumerate}
Note that condition~(1) is redundant, as it follows from~(2) by putting $\vec{u}=\var_m\comp r$.

\medskip\noindent
\emph{Proof of $\sim^\Lambda \,\seq\, \sim^\ap$.}  Note first that $\sim^\Lambda_0\,\seq\, \Lambda(0)\times \Lambda(0)$ is a strong applicative bisimulation: the above conditions (3), (4), (6), (7) for $n=0$ correspond precisely to (A1)--(A4) with $\sim^\ap_0$ replaced by $\sim^\Lambda_0$. It follows that $\sim^\Lambda_0\,\seq\, \sim^\ap_0$ because $\sim^\ap_0$ is the greatest strong applicative bisimulation. Moreover, for $n>0$ and $t_1\sim^\Lambda_n t_2$, we have
\[ t_1[\vec{u}] \sim^\Lambda_0 t_2[\vec{u}]\quad \text{for every $\vec{u}\in \Lambda(0)^n$}\]
by condition (2), whence 
\[ t_1[\vec{u}] \sim^\ap_0 t_2[\vec{u}]\quad \text{for every $\vec{u}\in \Lambda(0)^n$}\]
because $\sim^\Lambda_0\,\seq\, \sim^\ap_0$, and so $t_1\sim^\ap_n t_2$. This proves $\sim^\Lambda_n \,\seq\, \sim^\ap_n$ for $n>0$ and thus $\sim^\Lambda\,\seq\, \sim^\ap$ overall.

\medskip\noindent
\emph{Proof of $\sim^\ap \,\seq\, \sim^\Lambda$.} Since $\sim^\Lambda$ is the greatest bisimulation, it suffices to show that $\sim^\ap$ is a bisimulation. Thus suppose that $n\in \fset$ and $t_1\sim^\ap_n t_2$; we need to verify the above conditions (2)--(8) with $\sim^\Lambda_n$ replaced by $\sim^\ap_n$. Let us first consider the case $n=0$:
\begin{enumerate}\addtocounter{enumi}{1}
\item Since $t_1$ and $t_2$ are closed terms, this condition simply states that $t_1\sim^\ap_m t_2$ for every $m>0$. This holds by definition of $\sim^\ap_m$ because $t_1[\vec{u}]=t_1 \sim^\ap_0 t_2=t_2[\vec{u}]$ for every $\vec{u}\in \Lambda(0)^m$. 
\item holds by (A1).
\item holds by (A2). 
\item holds vacuously because $t_1$ is a closed term. 
\item holds by (A3).
\item holds by (A4). 
\item holds vacuously because $t_2$ is a closed term. 
\end{enumerate} 
Now suppose that $n>0$: 
\begin{enumerate}
\item[(2)] Let $\vec{u}=(u_0,\ldots,u_{n-1})\in \Lambda(m)^n$. If $m=0$ we have $t_1[\vec{u}]\sim^\ap_0 t_2[\vec{u}]$ by definition of $\sim^\ap_n$. If $m>0$ and   $\vec{v}\in \Lambda(0)^m$ we have
\[ t_1[\vec{u}][\vec{v}] = t_1[u_0[\vec{v}],\ldots, u_{n-1}[\vec{v}]] \sim^\ap_0 t_2[u_0[\vec{v}],\ldots, u_{n-1}[\vec{v}]] = t_2[\vec{u}][\vec{v}], \]
whence $t_1[\vec{u}]\sim^\ap_m t_2[\vec{u}]$.   
\item[(3)] Suppose that $t_1\to t_1'$. We only need to prove that $t_2$ reduces, that is, $t_2\to t_2'$ for some $t_2'\in \Lambda(n)$. Then, for every $\vec{u}\in \Lambda(0)^n$ we have  $t_1[\vec{u}]\sim^\ap_0 t_2[\vec{u}]$ by definition of $\sim^\ap_n$, and $t_1[\vec{u}]\to t_1'[\vec{u}]$ and $t_2[\vec{u}]\to t_2'[\vec{u}]$ because reductions respect substitution. Therefore $t_1'[\vec{u}] \sim^\ap_0 t_2'[\vec{u}]$ by (A1), which proves $t_1'\sim^\ap_n t_2'$ by definition of $\sim^\ap_n$.

To prove that $t_2$ reduces, suppose towards a contradiction that $t_2$ does not reduce. There are two possible cases:

\medskip\noindent \underline{\emph{Case 1:}} $t_2$ is a $\lambda$-abstraction.\\
Since the term $t_1$ reduces, it is neither a variable nor a $\lambda$-abstraction. Therefore, for arbitrary $\vec{u}\in \Lambda(0)^n$, the term $t_1[\vec{u}]$ is not a $\lambda$-abstraction, whereas the term $t_2[\vec{u}]$ is a $\lambda$-abstraction. Thus $t_1[\vec{u}]\not\sim^\ap_0 t_2[\vec{u}]$ and therefore $t_1\not\sim^\ap_n t_2$, a contradiction.

\medskip\noindent \underline{\emph{Case 2:}} $t_2=x\app s_1\app\cdots\app s_k$ for some $x\in V(n)$ and $s_1,\ldots, s_k\in \Lambda(n)$, $k\geq 0$. \\
Given $\lambda$-terms $s,t$ and $m\geq 0$, we write $s\to^m t$ if $s$ reduces to $t$ in exactly $m$ steps; in particular, $s\to^0 t$ if $s=t$. We shall prove that there exists $\vec{u}\in \Lambda(0)^n$ such that \[t_1[\vec{u}]\to^m \tilde{t_1}\qquad \text{and}\qquad t_2[\vec{u}]\to^m\tilde{t_2} \qquad \text{for some $m\geq 0$ and $\tilde{t_1},\tilde{t_2}\in \Lambda(0)$},\] where exactly one of the terms $\tilde{t_1}$ and $\tilde{t_2}$ is a $\lambda$-abstraction. Then $\tilde{t_1}\not\sim^\ap_0\tilde{t_2}$ by (A1) and (A2), whence $t_1[\vec{u}]\not\sim^\ap_0 t_2[\vec{u}]$ by $m$-fold application of (A1), and so $t_1\not\sim^\ap_n t_2$, a contradiction.

In order to construct $\vec{u}\in \Lambda(0)^n$ with the desired property, we consider several subcases:
  
\medskip\noindent \underline{\emph{Case 2.1:}} $t_1\to^k \ol{t_1}$ for some $\ol{t_1}$.

\medskip\noindent \underline{\emph{Case 2.1.1:}} $\ol{t_1}$ is a $\lambda$-abstraction. \\
Choose $\vec{u}$ such that $u_x\to u_x$ (e.g. $u_x=(\lambda y.y\app y)\app (\lambda y.y\app y)$). Then 
$t_2[\vec{u}] \to^k t_2[\vec{u}]$ and $t_2[\vec{u}]$ is not a~$\lambda$-abstraction, while $t_1[\vec{u}]\to^k \ol{t_1}[\vec{u}]$ and $\ol{t_1}[\vec{u}]$ is a $\lambda$-abstraction.

\medskip\noindent \underline{\emph{Case 2.1.2:}} $\ol{t_1}$ is an application $\ol{t_{1,1}}\app \ol{t_{1,2}}$.\\
Choose $\vec{u}$ such that $u_x=\lambda x_1.\lambda x_2.\ldots \lambda x_k.\lambda y.y$. Then $t_2[\vec{u}]\to^k \lambda y.y$, while $t_1[\vec{u}]\to^k \ol{t_1}[\vec{u}]$ and $\ol{t_1}[\vec{u}]$ is not a~$\lambda$-abstraction.

\medskip\noindent \underline{\emph{Case 2.1.3:}} $\ol{t_1}=x$. \\
Choose $\vec{u}$ such that $u_x=\lambda x_1.\lambda x_2.\ldots \lambda x_k.t$ where $t$ is an arbitrary closed term that is not a~$\lambda$-abstraction. Note that $u_x$ is a $\lambda$-abstraction: Since $t_1$ reduces, we have $t_1\neq x = \ol{t_1}$ and thus necessarily $k>0$. Thus $t_1[\vec{u}]\to^k\ol{t_1}[\vec{u}]=u_x$ and $u_x$ is a $\lambda$-abstraction, while $t_2[\vec{u}]\to^k t$ and $t$ is not a $\lambda$-abstraction.

\medskip\noindent \underline{\emph{Case 2.1.4:}} $\ol{t_1}=y$ for some variable $y\neq x$.\\
Choose $\vec{u}$ such that $u_x=\lambda x_1.\lambda x_2.\ldots \lambda x_k.\lambda y.y$ and $u_y$ is not a $\lambda$-abstraction. Then $t_2[\vec{u}]\to^k \lambda y.y$, while $t_1[\vec{u}]\to^k \ol{t_1}[\vec{u}]=u_y$ and $u_y$ is not a $\lambda$-abstraction.

\medskip\noindent \underline{\emph{Case 2.2:}} $t_1\to^m \ol{t_1}$ for some $m<k$ such that $\ol{t_1}$ does not reduce. (Note that $m\neq 0$ because $t_1\to t_1'$.)

\medskip\noindent \underline{\emph{Case 2.2.1:}} $\ol{t_1}$ is a $\lambda$-abstraction. \\
Choose $\vec{u}$ such that $u_x\to u_x$. Then $t_1[\vec{u}]\to^m\ol{t_1}[\vec{u}]$ and $\ol{t_1}[\vec{u}]$ is a $\lambda$-abstraction, while \mbox{$t_2[\vec{u}]\xrightarrow{m} t_2[\vec{u}]$} and $t_2[\vec{u}]$ is not a $\lambda$-abstraction.

\medskip\noindent \underline{\emph{Case 2.2.2:}} $\ol{t_1} = y\app s_1'\ldots s_l'$ for some variable $y\neq x$ and terms $s_1',\ldots, s_l'$, $l\geq 0$.\\
Choose $\vec{u}$ such that $u_x\to u_x$ and $u_y=\lambda x_1.\lambda x_2.\ldots \lambda x_l.\lambda y.y$. Then $t_1[\vec{u}]\to^m\ol{t_1}[\vec{u}] \to^l \lambda y.y$ while ${t_2}[\vec{u}]\to^{m+1} {t_2}[\vec{u}]$ and ${t_2}[\vec{u}]$ is not a $\lambda$-abstraction.

\medskip\noindent \underline{\emph{Case 2.2.3:}} $\ol{t_1} = x\app s_1'\app\cdots\app s_l'$ for some $l> k-m$ and terms $s_1',\ldots, s_l'$.\\
Choose $\vec{u}$ such that $u_x=\lambda x_1.\lambda x_2.\cdots \lambda x_k.\lambda y.y$. Then $t_2[\vec{u}]\to^k \lambda y.y$, while \[t_1[\vec{u}]\to^m \ol{t_1}[\vec{u}] \to^{k-m} (\lambda x_{k-m+1}.\cdots \lambda x_k.\lambda y.y) \app s_{k-m+1}'[\vec{u}]\app\cdots\app 
  s_{l}'[\vec{u}]\] and $(\lambda x_{k-m+1}.\cdots \lambda x_k.\lambda y.y) \app s_{k-m+1}'[\vec{u}] \app\cdots\app 
  s_{l}'[\vec{u}]$ is not a $\lambda$-abstraction.

\medskip\noindent \underline{\emph{Case 2.2.4:}} $\ol{t_1} = x\app s_1'\app\cdots\app s_l'$ for some $l\leq k-m$ and terms $s_1',\ldots, s_l'$.\\
Choose $\vec{u}$ such that $u_x=\lambda x_1.\lambda x_2.\ldots \lambda x_k.t$ where $t$ is an arbitrary closed term that is not a $\lambda$-abstraction. Then \[t_2[\vec{u}]\to^{m+l} (\lambda x_{m+l+1}.\cdots \lambda x_k.t) \app s_{m+l+1}[\vec{u}] \app\cdots\app
  s_{k}[\vec{u}]\] and  $(\lambda x_{m+l+1}\cdots \lambda x_k.t) \app s_{m+l+1}[\vec{u}] \app\cdots\app 
  s_{k}[\vec{u}]$ is not a $\lambda$-abstraction (for $l=k-m$, this is just the term~$t$), while
  \[
    t_1[\vec{u}] \to^m \ol{t_1}[\vec{u}] xto^{l} \lambda
    x_{l+1}.\cdots\lambda x_k.t
  \]
  and  $\lambda x_{l+1}.\cdots\lambda x_k.t$ is a $\lambda$-abstraction
  since $l<k$. (Recall that $m\neq 0$.)

\item[(6)] holds by symmetry to (3). 
\item[(4)] Suppose that $t_1=\lambda x.t_1'$. Then $t_2$ does not reduce (otherwise $t_1$ reduces by (6), a contradiction). Moreover, $t_2$ cannot be of the form $y\app s_1'\app\cdots\app s_l'$ where $y$ is variable and $s_1',\ldots,s_l'$ are terms. In fact, suppose the  contrary, and choose $\vec{u}\in \Lambda(0)^n$ such that $u_y$ is not a $\lambda$-abstraction. Then $t_1[\vec{u}] \not\sim^\ap_0 t_2[\vec{u}]$ since $t_1[\vec{u}]$ is a $\lambda$-abstraction and $t_2[\vec{u}]$ is not, contradicting $t_1\sim^\ap t_2$.

Thus $t_2=\lambda x.t_2'$ for $x=n$ and $t_2'\in \Lambda(n+1)$. Moreover, for every  $e\in \Lambda(n)$ and $\vec{u}\in \Lambda(0)^n$ we have
\begin{align*} 
&~ t_1'[e/x][\vec{u}] = t_1'[\vec{u},e[\vec{u}]] = t_1'[\vec{u},x][e[\vec{u}]/x] \\ 
\sim^\ap_0 &~ t_2'[\vec{u},x][e[\vec{u}]/x] = t_2'[\vec{u},e[\vec{u}]] = t_2'[e/x][\vec{u}]  
\end{align*}
using (A2) and that $t_1[\vec{u}]\sim^\ap_0 t_2[\vec{u}]$ by definition of $\sim^\ap_n$. This proves $t_1'[e/x] \sim^\ap_n t_2'[e/x]$.
\item[(7)] holds by symmetry to (4).
\item[(5)] Suppose that $t_1=x\app s_1\app\cdots\app s_k$. Then $t_2$ does not reduce by (6) and is not a $\lambda$-abstraction by (7), so it must be of the form  $t_2=y\app s_1'\app\cdots\app s_m'$.
\item[(8)] holds by symmetry to (5).  \hfill$\qed$\par\addvspace{6pt}
\end{enumerate} \def\qed{}
\end{proof}
The above proposition and our general compositionality result (\Cref{th:main}) imply:

\begin{corollary}
  \label{cor:cong}
 The open extension $\sim^\ap$ of strong applicative bisimilarity is a congruence.
\end{corollary}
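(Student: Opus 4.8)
The plan is to obtain \Cref{cor:cong} as a direct instance of our general compositionality theorem (\Cref{th:main}) applied to the syntax functor $\Sigma X = V + \delta X + X\times X$ and the behaviour bifunctor $B(X,Y)=\llangle X,Y\rrangle\times(Y+Y^X+1)$ on $\C=\vcat$, using that the resulting behavioural equivalence is exactly $\sim^\ap$. First I would check that \Cref{mainassumptions} are met: $\vcat$ is a presheaf category, hence has all finite limits and colimits (computed pointwise) and is regular; $V$ is the object of variables; $\Sigma$ has the required form $V+\Sigma'$ with $\Sigma' X=\delta X+X\times X$ and generates a free monad $\Sigmas$ (its initial algebra being $\Lambda$ by \Cref{prop:lambda-initial-algebra}); and $B(\mS,-)$ admits a final coalgebra by \Cref{lem:final-coalgebra}.

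Next I would verify the three extra hypotheses of \Cref{th:main}. Regularity of $\vcat$ has just been noted. For preservation of reflexive coequalizers by $\Sigma$, I would reduce to $\Set$, since colimits in $\vcat$ are pointwise: the constant summand $V$ and the coproduct preserve all colimits, the context-extension functor $\delta$ preserves all colimits (as $\delta X(n)=X(n+1)$ is pointwise), and the squaring functor $X\mapsto X\times X$ preserves reflexive coequalizers because reflexive coequalizers commute with finite products in $\Set$. The main work is to show that $B$ preserves monomorphisms. Recall (note after \Cref{th:main}) that a mono $(f,g)$ in $\C^\opp\times\C$ amounts to an epi $f$ and a mono $g$ in $\C$. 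On the factor $Y+Y^X+1$, the summand $Y$ preserves the mono $g$, the constant $1$ is harmless, and on $Y^X$ I would use that $(-)^X$ is a right adjoint (so it preserves $g$) while $Z^f$ is a mono whenever $f$ is epi; coproducts and products of monos are again monos in the (extensive, regular) topos $\vcat$. On the hom-factor $\llangle X,Y\rrangle(n)=\NT(X^n,Y)$, the action of $\llangle f,g\rrangle$ sends $\alpha$ to $g\comp\alpha\comp f^n$; this is injective because $g$ is mono and $f^n$ is epi (a finite power of the epi $f$), so $\llangle -,-\rrangle$ sends $(f,g)$ to a mono, and the product of the two factors is again mono.

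With the hypotheses in place, \Cref{th:main} yields that the kernel pair of $\coiter(\iota^\clubsuit)\c\Lambda\to Z$ is a congruence on $\Lambda=\mS$; equivalently, $\sim^\ap$ is closed under application and $\lambda$-abstraction. This kernel pair is precisely behavioural equivalence, which (as noted before \Cref{prop:bisim}, since $B(\Lambda,-)$ preserves, and hence weakly preserves, pullbacks) coincides with coalgebraic bisimilarity $\sim^\Lambda$ of the operational model \eqref{eq:op-model-lambda}. Finally, \Cref{prop:bisim-vs-appbisim} identifies $\sim^\Lambda$ with the open extension $\sim^\ap$ of strong applicative bisimilarity, so $\sim^\ap$ is a congruence, completing the proof. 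I expect the only genuinely delicate point to be the monomorphism-preservation of $B$ in its contravariant argument, i.e.\ checking that an epi $f$ in the $X$-slot really produces a mono after applying $\llangle -,-\rrangle$ and $(-)^{(-)}$; everything else is either bookkeeping or an appeal to the cited results.
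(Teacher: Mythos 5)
Your proposal is correct and matches the paper's own proof essentially step for step: both verify the hypotheses of \Cref{th:main} (regularity of $\vcat$ as a topos, preservation of reflexive coequalizers by $\Sigma$, and preservation of monomorphisms by $B$ in both arguments, with the contravariant case of $\llangle -,-\rrangle$ handled by the same $g\mapsto g\comp f^n$ computation) and then conclude via \Cref{prop:bisim-vs-appbisim}. The only differences are cosmetic choices of justification (e.g.\ you argue $\delta$ preserves colimits pointwise where the paper invokes its right adjoint $\llangle V+1,-\rrangle$), which do not change the argument.
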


\begin{proof}
We only need to verify that our present setting satisfies the conditions of \Cref{th:main}:
\begin{enumerate} 
\item The presheaf category $\vcat$ is regular, being a topos.
\item The functor $\Sigma X = \Pt+\delta X + X\times X$ preserves
  reflexive coequalizers. In fact, $\delta$ is a left adjoint (with right adjoint $\llangle V+1,-\rrangle$, see~\cite{DBLP:conf/lics/FiorePT99}) and thus
  preserves all colimits. Moreover, reflexive
  coequalizers commute with finite products and coproducts in~$\Set$,
  hence also in $\vcat$ since limits and colimits are formed
  pointwise.
  
\item The functor $B(X,Y)=\llangle X,Y\rrangle\times (Y+Y^X+1)$ preserves monos. To see this, note first that monos in $\vcat$ are the componentwise injective natural transformations and thus stable under products and coproducts; hence it suffices to show that the functors  $(X,Y)\mapsto Y^X$ and $(X,Y) \mapsto \llangle X,Y \rrangle$ preserve monos.
The first functor preserves monos in the covariant component because $(-)^X\c \vcat\to \vcat$ is a right adjoint, and in the contravariant component because $Y^{(-)}\c (\vcat)^\opp\to \vcat$ is a right adjoint (with left adjoint $(Y^{(-)})^\opp\c \vcat\to (\vcat)^\opp$). 
The second functor preserves monos in the covariant component because $\llangle X,-\rrangle\c \vcat\to \vcat$ is a right adjoint. To see that it preserves monos in the contravariant component, suppose that $f\c X'\to X$ is an epimorphism in $\vcat$. Then $\llangle f,Y\rrangle\c \llangle X,Y\rrangle\to \llangle X',Y\rrangle$ is the natural transformation with components
\[ \llangle f,Y\rrangle_n\c \NT(X^n,Y)\to\NT((X')^n,Y), \qquad g\mapsto  g\comp f^n.\]
This map is clearly monic because $f$ is epic. Thus $\llangle f,Y\rrangle$ is monic in $\vcat$.  \hfill$\qed$\par\addvspace{6pt}
\end{enumerate} \def\qed{}
\end{proof}
\subsection{Call-by-value evaluation}\label{sec:cbv}

Much analogously to the call-by-name
$\lambda$-calculus, we can implement the call-by-value $\lambda$-calculus
(\Cref{fig:cbv}) in higher-order abstract GSOS.
\begin{figure}[t!]
  \[
    \begin{array}{l@{\quad}l@{\quad}l}
     \inference[\texttt{app1}]{ \quad q = \lambda x.\_}
        {\goes{(\lambda x.p) \app q}{p[q/x]}}
      & \inference[\texttt{app2}]{\goes{p}{p\pr}}{\goes{p \app q}{p\pr \app
        q}}
      & \inference[\texttt{app3}]{\goes{q}{q\pr}}{\goes{(\lambda x.p) \app q}{(\lambda x.p) \app q\pr}}
    \end{array}
  \]
  \caption{Small-step operational semantics of the call-by-value $\lambda$-calculus.}
  \label{fig:cbv}
\end{figure}%
The corresponding higher-order GSOS law differs from the one in \Cref{def:lamgsos}
only in the case of application on closed terms.
\begin{definition}[$V$-pointed higher-order GSOS law of the call-by-value $\lambda$-calculus]\label{def:lam-gsos-cbv}
  \[   \rho^{\cv}_{X,Y} \c
 \Sigma(jX \times B(jX,Y))
  \to 
     B(jX, \Sigma^\star (jX+Y))  \]
    is given by
  \begin{align*}
    & \rho^{\cv}_{X,Y,n}(tr) =  \texttt{case}~tr~\texttt{of} &&& \\
    & v \in V(n)
    & \mapsto \quad
    & \pi(v),* & \\
    & \mathsf{\lambda}.(t, f,\_)
    & \mapsto
      \quad
    & \llangle X, \lambda.(-)\comp \eta \comp \inr \rrangle (\rho_{1}(f)),
      (\eta \comp \inr)^{X}(\rho_{2}(f))
    & \\
    & (t_{1}, g, t_{1}\pr) \app (t_{2}, h,\_)
    & \mapsto
      \quad
    & \lambda m,\,\vec{u} \in X(m)^{n}.\, ( g_{m}(\vec{u}) \app h_{m}(\vec{u}) ),t_{1}\pr \app t_{2}
    & \\
    & (t_{1}, g, k) \app (t_{2}, h,t_{2}\pr)
    & \mapsto
      \quad
    & \lambda m,\,\vec{u} \in X(m)^{n}.\, (g_{m}(\vec{u}) \app h_{m}(\vec{u})),t_{1} \app t_{2}\pr
    & \\
    & (t_{1}, g, k) \app (t_{2}, h,k_{2})
    & \mapsto
      \quad
    & \lambda m,\, \vec{u} \in X(m)^{n}.\, (g_{m}(\vec{u}) \app h_{m}(\vec{u})),\eta \comp \inr \comp k(t_{2})
    & \\
    & (t_{1}, g, k) \app (t_{2}, h,\_)
    & \mapsto
      \quad
    & \lambda m,\,\vec{u} \in X(m)^{n}.\, (g_{m}(\vec{u}) \app h_{m}(\vec{u})),\eta \comp \inr \comp k(t_{2})
    & \\
    & (t_{1}, g, *) \app (t_{2}, h,\_)
    & \mapsto
      \quad
    & \lambda m,\,\vec{u} \in X(m)^{n}.\, (g_{m}(\vec{u}) \app h_{m}(\vec{u})),*
    &
  \end{align*}
  where $t\in \delta X(n)$, $f\in \delta\llangle X,Y\rrangle(n)$, $g,h\in
  \llangle X,Y\rrangle(n)$, $k,k_{2}\in
  Y^X(n)$, $t_1, t_2\in X(n)$ and $t_1',t_{2}'\in Y(n)$. Again
  brackets around the pairs on the right are omitted, and the last five clauses refer to the application operator $p\app q = \appp(p,q)$.
\end{definition}
Applying \Cref{th:main} to the call-by-value $\lambda$-calculus shows
that coalgebraic bisimilarity, as expressed in \Cref{prop:bisim}, is a
congruence. Note however that unlike the case for the call-by-name
$\lambda$-calculus, coalgebraic bisimilarity does not correspond to a
strong version of call-by-value applicative
bisimilarity (see e.g.~\cite{pitts_2011}): The former relates terms if they
exhibit the same behaviour when applied to arbitrary closed terms,
while the latter considers only application to \emph{values}. Capturing call-by-value applicative
bisimilarity in the coalgebraic framework is left as an open problem; see also \Cref{sec:concl}.

\subsection{Typed \texorpdfstring{$\boldsymbol{\lambda}$}{𝜆}-calculi}
We have shown in the present section how to implement untyped $\lambda$-calculi in the higher-order abstract GSOS framework, using presheaf models. Higher-order languages with types can be treated in a very similar manner, as demonstrated in \cite{gmstu24} for the case of simple type systems, and in \cite{gmtu24lics} for recursive types. In a nutshell, one moves from the base category $\Set^\fset$ to the category $(\Set^{\fset/\mathsf{Ty}})^{\mathsf{Ty}}$, where $\mathsf{Ty}$ is the set of types (regarded as a discrete category) and $\fset/\mathsf{Ty}$ is the comma category of \emph{typed variable contexts}, i.e.\ pairs $(n,\Gamma)$ of a finite cardinal $n$ and a function $\Gamma\colon n\to \mathsf{Ty}$. Informally, a presheaf $X\in (\Set^{\fset/\mathsf{Ty}})^{\mathsf{Ty}}$ associates to every $\tau\in \mathsf{Ty}$ and $\Gamma\in \fset/\mathsf{Ty}$ a set $X_\tau(\Gamma)$ of terms of type $\tau$ in context $\Gamma$. Using this categorical setup, it is not difficult to devise higher-order GSOS laws for typed $\lambda$-calculi, which are essentially type-indexed versions of \Cref{def:lamgsos} and \Cref{def:lam-gsos-cbv} above. We refer the reader to \emph{op.~cit.} for more details.

\section{Conclusions, further developments, and future work}
\label{sec:concl}

We have introduced the notion of \emph{higher-order GSOS law},
effectively transferring the principles behind the
bialgebraic framework by~\cite{DBLP:conf/lics/TuriP97} to higher-order
languages. We have demonstrated that, under mild assumptions, strong coalgebraic
bisimilarity in systems specified by higher-order GSOS laws is a congruence, a result
guaranteeing the compositionality of semantics within our abstract
framework. In addition, we have implemented combinatory logics as
well as the call-by-name $\lambda$-calculus as
higher-order GSOS laws in suitable categories.

Our compositionality result for strong coalgebraic bisimilarity illustrates that
the higher-order abstract GSOS framework can not only \emph{model} the
operational semantics of higher-order languages, but more importantly also
provides the means to \emph{reason} about such languages at a high level of
abstraction. In recent work, we have further substantiated this point by lifting
several key operational techniques to the categorical generality of our
framework. In \cite{UrbatTsampasEtAl23} we introduce a generalization of
\emph{Howe's method}, see
\cite{DBLP:conf/lics/Howe89,DBLP:journals/iandc/Howe96},
and apply it to derive a general congruence result for (weak) applicative
similarity. In \cite{gmstu24} we study unary logical predicates along with
induction up-to techniques to reason about them efficiently, with proofs of
strong normalization as a key application. Finally, in \cite{gmtu24lics} we
develop a theory of step-indexed logical relations as a sound proof method for
contextual equivalence. We note that all these operational techniques are
usually introduced in an ad hoc manner and need to be carefully adapted for
each individual language. The more principled approach based on higher-order
abstract GSOS provides a clean conceptual separation between their non-trivial,
language-dependent core and their generic, language-independent aspects.

Substantial progress has also been achieved towards supporting different
paradigms of operational semantics. Recall that \Cref{sec:cbv} hints that the naive
treatment of call-by-value languages in the present work is not entirely
satisfactory, as the ensuing coalgebraic notion of bisimilarity does not match
standard applicative bisimilarity. In~\cite{10.1145/3704871}, we
resolve this issue by moving from $\Set$-valued presheaves to $\Set^2$-valued
presheaves, with one sort for values and one for non-value terms. In recent
work~\cite{goncharov2025bialgebraicreasoningstatefullanguages}, a similar
method is used to apply our framework to stateful languages. By doing so,
we also reconcile first-order abstract GSOS with stateful languages,
(a well-known problem, see e.g.
\cite{DBLP:journals/entcs/Abou-SalehP11} and \cite{DBLP:conf/fscd/0001MS0U22}).
Moreover,
\citet{10.1145/3776697} extends the theory of higher-order abstract GSOS to
reason about behavioural conformances in languages with quantitative features.

Last but not least,
in~\cite{goncharov2026higherorderbialgebraicdenotationalsemantics} we provide a
higher-order bialgebraic account of denotational semantics in the style of Turi
and Plotkin, completing the
bialgebraic picture of \Cref{sec:bialg}.

Let us conclude with outlining some directions for future research.
A powerful technique for
compositionality results for effectful languages is given by \emph{environmental
bisimulations}, see~\cite{DBLP:conf/lics/SangiorgiKS07}, which we aim to understand
from the perspective of our categorical approach.
Another goal of interest is to extend the
notion of a \emph{morphism of distributive
  laws}, see~\cite{DBLP:journals/entcs/Watanabe02} and \cite{DBLP:conf/calco/KlinN15},
 to higher-order GSOS laws. As a potential application, this would enable modeling compilers
of higher-order languages that preserve semantic properties across
compilation. In first-order abstract GSOS, this idea has been previously
explored by \cite{DBLP:conf/cmcs/0001NDP20} and
\cite{DBLP:conf/aplas/AbateBT21}. Finally, we aim to develop a fibrational
theory of logical relations in higher-order abstract GSOS, with potential
applications to parametricity~\cite{DBLP:conf/ifip/Reynolds83} and dependent types.

\medskip\noindent\textbf{Acknowledgement.}                          %
  Stelios Tsampas wishes to thank Andreas Nuyts and Christian Williams for the
  numerous and fruitful discussions.

\medskip\noindent\textbf{Funding.} 
The authors acknowledge the following support:
\begin{itemize}
\item Sergey Goncharov is supported by the Deutsche Forschungsgemeinschaft (DFG, German
  Research Foundation) -- project number 527481841.
\item Stefan Milius and Lutz Schröder are supported by the Deutsche Forschungsgemeinschaft (DFG, German
  Research Foundation) -- project number 517924115.
\item  Henning Urbat is supported by the Deutsche Forschungsgemeinschaft (DFG, German
  Research Foundation) -- project number 470467389.
\item Stelios Tsampas is supported by the Deutsche Forschungsgemeinschaft (DFG, German
  Research Foundation) -- project number 419850228 and 527481841.
\end{itemize}

\bibliographystyle{jfplike}
\bibliography{mainBiblio}

@PREAMBLE{ {\providecommand{\noopsort}[1]{}} }

@string{acm="ACM"}

@string{lncs="LNCS"}

@string{entcs="ENTCS"}

@string{elsevier="Elsevier"}

@string{springer="Springer"}

@string{lipics="LIPIcs"}

@string{dagstuhl="Schloss Dagstuhl -- Leibniz-Zentrum f{\"u}r Informatik"}

@InProceedings{gmtu24lics,
  author        = {Sergey Goncharov and Stefan Milius and Stelios Tsampas and Henning Urbat},
  title     = {Bialgebraic Reasoning on Higher-Order Program Equivalence},
  year          = "2024b",
  publisher     = {IEEE Computer Society Press},
  booktitle     = {39th Annual ACM/IEEE Symposium on Logic in Computer Science (LICS 2024)},
  doi           = {10.1145/3661814.3662099},
  note = {Preprint: \url{https://arxiv.org/abs/2402.00625}}
}

@inproceedings{gmstu23,
author = {Goncharov, Sergey and Milius, Stefan and Schr\"{o}der, Lutz and Tsampas, Stelios and Urbat, Henning},
title = {Towards a Higher-Order Mathematical Operational Semantics},
booktitle = {50th ACM SIGPLAN Symposium on Principles of Programming Languages (POPL 2023)},
year = {2023},
issue_date = {January 2023},
publisher = acm,
OPTaddress = {New York, NY, USA},
volume = {7},
OPTnumber = {POPL},
doi = {10.1145/3571215},
abstract = {Compositionality proofs in higher-order languages are notoriously involved, and general semantic frameworks guaranteeing compositionality are hard to come by. In particular, Turi and Plotkin’s bialgebraic abstract GSOS framework, which has been successfully applied to obtain off-the-shelf compositionality results for first-order languages, so far does not apply to higher-order languages. In the present work, we develop a theory of abstract GSOS specifications for higher-order languages, in effect transferring the core principles of Turi and Plotkin’s framework to a higher-order setting. In our theory, the operational semantics of higher-order languages is represented by certain dinatural transformations that we term pointed higher-order GSOS laws. We give a general compositionality result that applies to all systems specified in this way and discuss how compositionality of the SKI calculus and the λ-calculus w.r.t. a strong variant of Abramsky’s applicative bisimilarity are obtained as instances.},
series = {Proc. ACM Program. Lang.},
OPTmonth = {Jan},
articleno = {22},
numpages = {27},
keywords = {Categorical semantics, Higher-order reasoning, Abstract GSOS}
}

@Book{MacLaneMoerdijk92,
  title = {Sheaves in {{Geometry}} and {{Logic}}: {{A First Introduction}} to {{Topos Theory}}},
  shorttitle = {Sheaves in {{Geometry}} and {{Logic}}},
  author = {Mac Lane, Saunders and Moerdijk, Ieke},
  year = {1994},
  series = {Universitext},
  publisher = {{Springer}},
  OPTaddress = {{New York, NY}},
  doi = {10.1007/978-1-4612-0927-0},
  isbn = {978-0-387-97710-2 978-1-4612-0927-0},
  langid = {english}
}

@Book{AdamekEA11,
  author = 	 {Ji\v{r}\'\i\ Ad\'{a}mek and Ji\v{r}\'\i\ Rosick\'y and Enrico Vitale},
  ALTeditor = 	 {},
  title = 	 {Algebraic Theories},
  publisher = 	 {Cambridge University Press},
  year = 	 {2011},
  OPTkey = 	 {},
  OPTvolume = 	 {},
  OPTnumber = 	 {},
  OPTseries = 	 {},
  OPTaddress = 	 {},
  OPTedition = 	 {},
  OPTmonth = 	 {},
  OPTnote = 	 {},
  OPTannote = 	 {}
}

@Article{Barr70,
  author = 	 {Michael Barr},
  title = 	 {Coequalizers and free triples},
  journal = 	 {Math.~Z.},
  year = 	 {1970},
  OPTkey = 	 {},
  volume = 	 {116},
  OPTnumber = 	 {},
  pages = 	 {307-322},
  OPTmonth = 	 {},
  OPTnote = 	 {},
  OPTannote = 	 {}
}

@Article{adamek74,
  author = 	 {Ji\v{r}\'i Ad\'amek},
  title = 	 {Free algebras and automata realizations in the language of categories},
  journal = 	 {Comment. Math. Univ. Carol.},
  year = 	 {1974},
  OPTkey = 	 {},
  volume = 	 {15},
  number = 	 {4},
  pages = 	 {589--602},
  OPTmonth = 	 {},
  OPTnote = 	 {},
  OPTannote = 	 {}
}

@article{DBLP:journals/tcs/Rutten00,
  author    = {Jan J. M. M. Rutten},
  title     = {Universal coalgebra: a theory of systems},
  journal   = {Theor. Comput. Sci.},
  volume    = {249},
  number    = {1},
  year      = {2000},
  url       = {https://doi.org/10.1016/S0304-3975(00)00056-6},
  doi       = {10.1016/S0304-3975(00)00056-6},
  timestamp = {Wed, 17 Feb 2021 21:56:36 +0100},
  biburl    = {https://dblp.org/rec/journals/tcs/Rutten00.bib},
  bibsource = {dblp computer science bibliography, https://dblp.org}
}

@article{DBLP:journals/jacm/BloomIM95,
  author    = {Bard Bloom and
               Sorin Istrail and
               Albert R. Meyer},
  title     = {Bisimulation Can't be Traced},
  journal   = {J. {ACM}},
  volume    = {42},
  number    = {1},
  pages     = {232--268},
  year      = {1995},
  url       = {https://doi.org/10.1145/200836.200876},
  doi       = {10.1145/200836.200876},
  timestamp = {Tue, 06 Nov 2018 12:51:45 +0100},
  biburl    = {https://dblp.org/rec/bib/journals/jacm/BloomIM95},
  bibsource = {dblp computer science bibliography, https://dblp.org}
}

@inproceedings{DBLP:conf/lics/TuriP97,
  author    = {Daniele Turi and
               Gordon D. Plotkin},
  title     = {Towards a Mathematical Operational Semantics},
  booktitle = {12th Annual {IEEE} Symposium on Logic in Computer Science (LICS 1997)},
  pages     = {280--291},
  year      = {1997},
  crossref-ignore  = {DBLP:conf/lics/1997},
  doi       = {10.1109/LICS.1997.614955},
  timestamp = {Thu, 25 May 2017 00:42:40 +0200},
  biburl    = {https://dblp.org/rec/bib/conf/lics/TuriP97},
  bibsource = {dblp computer science bibliography, https://dblp.org}
}

@article{DBLP:journals/entcs/Watanabe02,
  author    = {Hiroshi Watanabe},
  title     = {Well-behaved Translations between Structural Operational Semantics},
  journal   = {Electr. Notes Theor. Comput. Sci.},
  volume    = {65},
  number    = {1},
  pages     = {337--357},
  year      = {2002},
  url       = {https://doi.org/10.1016/S1571-0661(04)80372-4},
  doi       = {10.1016/S1571-0661(04)80372-4},
  timestamp = {Sun, 28 May 2017 13:22:54 +0200},
  biburl    = {https://dblp.org/rec/bib/journals/entcs/Watanabe02},
  bibsource = {dblp computer science bibliography, https://dblp.org}
}

@inproceedings{DBLP:conf/calco/KlinN15,
  author    = {Bartek Klin and
               Beata Nachyla},
  title     = {Presenting Morphisms of Distributive Laws},
  booktitle = {6th Conference on Algebra and Coalgebra in Computer Science, CALCO'15},
  pages     = {190--204},
  year      = {2015},
  series    = {LIPIcs},
  publisher = {Schloss Dagstuhl - Leibniz-Zentrum fuer Informatik},
  crossref-ignore  = {DBLP:conf/calco/2015},
  url       = {https://doi.org/10.4230/LIPIcs.CALCO.2015.190},
  doi       = {10.4230/LIPIcs.CALCO.2015.190},
  timestamp = {Thu, 23 Aug 2018 15:56:17 +0200},
  biburl    = {https://dblp.org/rec/bib/conf/calco/KlinN15},
  bibsource = {dblp computer science bibliography, https://dblp.org}
}

@article{DBLP:journals/corr/abs-1103-0510,
  author    = {Derek Dreyer and
               Amal Ahmed and
               Lars Birkedal},
  title     = {Logical Step-Indexed Logical Relations},
  journal   = {Log. Methods Comput. Sci.},
  volume    = {7},
  number    = {2},
  year      = {2011},
  doi       = {10.2168/LMCS-7(2:16)2011}
}

@inproceedings{DBLP:journals/entcs/Abou-SalehP11,
  author    = {Faris Abou{-}Saleh and
               Dirk Pattinson},
  editor    = {Michael W. Mislove and
               Jo{\"{e}}l Ouaknine},
  title     = {Towards Effects in Mathematical Operational Semantics},
  booktitle = {Mathematical Foundations of Programming
               Semantics, {MFPS} 2011},
  series    = entcs,
  volume    = {276},
  pages     = {81--104},
  publisher = elsevier,
  year      = {2011},
  nourl       = {https://doi.org/10.1016/j.entcs.2011.09.016},
  doi       = {10.1016/j.entcs.2011.09.016},
  timestamp = {Mon, 08 Feb 2021 13:26:19 +0100},
  biburl    = {https://dblp.org/rec/journals/entcs/Abou-SalehP11.bib},
  bibsource = {dblp computer science bibliography, https://dblp.org}
}

@phdthesis{56f40c248cb44359beb3c28c3263838e,
title = "On generalised coinduction and probabilistic specification formats: Distributive laws in coalgebraic modelling",
author = "Falk Bartels",
year = "2004",
language = "English",
series = "IPA dissertation series",
number = "6",
school = "Vrije Universiteit Amsterdam",
}

@book{DBLP:books/daglib/0067019,
  author    = {Robin Milner},
  title     = {Communication and concurrency},
  series    = {{PHI} Series in computer science},
  publisher = {Prentice Hall},
  year      = {1989},
  isbn      = {978-0-13-115007-2},
  timestamp = {Wed, 27 Apr 2011 17:30:28 +0200},
  biburl    = {https://dblp.org/rec/books/daglib/0067019.bib},
  bibsource = {dblp computer science bibliography, https://dblp.org}
}

@inproceedings{DBLP:conf/lics/Howe89,
  author    = {Douglas J. Howe},
  title     = {Equality In Lazy Computation Systems},
  booktitle = {4th Annual Symposium on Logic in Computer Science (LICS 1989)},
  pages     = {198--203},
  publisher = {{IEEE} Computer Society},
  year      = {1989},
  doi       = {10.1109/LICS.1989.39174},
  timestamp = {Wed, 16 Oct 2019 14:14:54 +0200},
  bibsource = {dblp computer science bibliography, https://dblp.org}
}

@article{DBLP:journals/iandc/Howe96,
  author    = {Douglas J. Howe},
  title     = {Proving Congruence of Bisimulation in Functional Programming Languages},
  journal   = {Inf. Comput.},
  volume    = {124},
  number    = {2},
  pages     = {103--112},
  year      = {1996},
  doi       = {10.1006/inco.1996.0008},
  timestamp = {Thu, 18 May 2017 09:54:17 +0200},
  bibsource = {dblp computer science bibliography, https://dblp.org}
}

@inproceedings{DBLP:conf/cmcs/0001NDP20,
  author    = {Stelios Tsampas and
               Andreas Nuyts and
               Dominique Devriese and
               Frank Piessens},
  editor    = {Daniela Petrisan and
               Jurriaan Rot},
  title     = {A Categorical Approach to Secure Compilation},
  booktitle = {15th {IFIP} {WG} 1.3 International
               Workshop on Coalgebraic Methods in Computer Science, CMCS'20},
  series    = lncs,
  volume    = {12094},
  pages     = {155--179},
  publisher = {Springer},
  year      = {2020},
  url       = {https://doi.org/10.1007/978-3-030-57201-3\_9},
  doi       = {10.1007/978-3-030-57201-3\_9},
  timestamp = {Wed, 26 Aug 2020 11:05:40 +0200},
  biburl    = {https://dblp.org/rec/conf/cmcs/0001NDP20.bib},
  bibsource = {dblp computer science bibliography, https://dblp.org}
}

@inproceedings{DBLP:conf/lics/FiorePT99,
  author    = {Marcelo P. Fiore and
               Gordon D. Plotkin and
               Daniele Turi},
  title     = {Abstract Syntax and Variable Binding},
  booktitle = {14th Annual {IEEE} Symposium on Logic in Computer Science (LICS 1999)},
  pages     = {193--202},
  publisher = {{IEEE} Computer Society},
  year      = {1999},
  doi       = {10.1109/LICS.1999.782615},
  timestamp = {Wed, 16 Oct 2019 14:14:54 +0200},
  biburl    = {https://dblp.org/rec/conf/lics/FiorePT99.bib},
  bibsource = {dblp computer science bibliography, https://dblp.org}
}

@inproceedings{DBLP:conf/lics/FioreT01,
  author    = {Marcelo P. Fiore and
               Daniele Turi},
  title     = {Semantics of Name and Value Passing},
  booktitle = {16th Annual {IEEE} Symposium on Logic in Computer Science (LICS 2001)},
  pages     = {93--104},
  publisher = {{IEEE} Computer Society},
  year      = {2001},
  doi       = {10.1109/LICS.2001.932486},
  timestamp = {Wed, 16 Oct 2019 14:14:54 +0200},
  biburl    = {https://dblp.org/rec/conf/lics/FioreT01.bib},
  bibsource = {dblp computer science bibliography, https://dblp.org}
}

@book{mac2013categories,
  title = {Categories for the {{Working Mathematician}}},
  author = {Mac Lane, S.},
  year = {1978},
  series = {Graduate {{Texts}} in {{Mathematics}}},
  edition = {2},
  volume = {5},
  publisher = springer,
  OPTaddress = {{New York}},
  url = {http://link.springer.com/10.1007/978-1-4757-4721-8},
  isbn = {978-0-387-98403-2},
  langid = {english}
}

@article{DBLP:journals/entcs/HermidaRR14,
  author    = {Claudio Hermida and
               Uday S. Reddy and
               Edmund P. Robinson},
  title     = {Logical Relations and Parametricity - {A} {{R}}eynolds Programme for Category
               Theory and Programming Languages},
  journal   = {Electron. Notes Theor. Comput. Sci.},
  volume    = {303},
  pages     = {149--180},
  year      = {2014},
  doi       = {10.1016/j.entcs.2014.02.008},
}

@inproceedings{DBLP:conf/ifip/Reynolds83,
  author    = {John C. Reynolds},
  editor    = {R. E. A. Mason},
  title     = {Types, Abstraction and Parametric Polymorphism},
  booktitle = {Information Processing 83, Proceedings of the {IFIP} 9th World Computer
               Congress, Paris, France, September 19-23, 1983},
  pages     = {513--523},
  publisher = {North-Holland/IFIP},
  year      = {1983},
  timestamp = {Sun, 28 Jul 2019 17:03:41 +0200},
  biburl    = {https://dblp.org/rec/conf/ifip/Reynolds83.bib},
  bibsource = {dblp computer science bibliography, https://dblp.org}
}

@article{tait1967intensional,
  title={Intensional interpretations of functionals of finite type {I}},
  author={Tait, William W},
  journal={J.\ Symbolic Log.},
  volume={32},
  number={2},
  pages={198--212},
  year={1967},
  nopublisher={JSTOR},
}

@inproceedings{DBLP:conf/aplas/AbateBT21,
  author    = {Carmine Abate and
               Matteo Busi and
               Stelios Tsampas},
  editor    = {Hakjoo Oh},
  title     = {Fully Abstract and Robust Compilation: And How to Reconcile the Two,
               Abstractly},
  booktitle = {19th Asian Symposium on Programming Languages and Systems, APLAS'21},
  series    = lncs,
  volume    = {13008},
  pages     = {83--101},
  publisher = {Springer},
  year      = {2021},
  url       = {https://doi.org/10.1007/978-3-030-89051-3\_6},
  doi       = {10.1007/978-3-030-89051-3\_6},
  timestamp = {Sun, 02 Oct 2022 15:54:47 +0200},
  biburl    = {https://dblp.org/rec/conf/aplas/AbateBT21.bib},
  bibsource = {dblp computer science bibliography, https://dblp.org}
}

@inproceedings{DBLP:conf/fossacs/KlinS08,
  author    = {Bartek Klin and
               Vladimiro Sassone},
  editor    = {Roberto M. Amadio},
  title     = {Structural Operational Semantics for Stochastic Process Calculi},
  booktitle = {11th International Conference Foundations of Software Science and Computational Structures, FOSSACS'08},
  series    = lncs,
  volume    = {4962},
  pages     = {428--442},
  publisher = {Springer},
  year      = {2008},
  url       = {https://doi.org/10.1007/978-3-540-78499-9\_30},
  doi       = {10.1007/978-3-540-78499-9\_30},
  timestamp = {Sun, 25 Oct 2020 22:38:53 +0100},
  biburl    = {https://dblp.org/rec/conf/fossacs/KlinS08.bib},
  bibsource = {dblp computer science bibliography, https://dblp.org}
}

@inproceedings{DBLP:conf/lics/FioreS06,
  author    = {Marcelo P. Fiore and
               Sam Staton},
  title     = {A Congruence Rule Format for Name-Passing Process Calculi from Mathematical
               Structural Operational Semantics},
  booktitle = {21st Annual {IEEE} Symposium on Logic in Computer Science, LICS'06},
  pages     = {49--58},
  publisher = {{IEEE} Computer Society},
  year      = {2006},
  url       = {https://doi.org/10.1109/LICS.2006.7},
  doi       = {10.1109/LICS.2006.7},
  timestamp = {Wed, 16 Oct 2019 14:14:54 +0200},
  biburl    = {https://dblp.org/rec/conf/lics/FioreS06.bib},
  bibsource = {dblp computer science bibliography, https://dblp.org}
}

@article{DBLP:journals/tcs/MiculanP16,
  author    = {Marino Miculan and
               Marco Peressotti},
  title     = {Structural operational semantics for non-deterministic processes with
               quantitative aspects},
  journal   = {Theor. Comput. Sci.},
  volume    = {655},
  pages     = {135--154},
  year      = {2016},
  url       = {https://doi.org/10.1016/j.tcs.2016.01.012},
  doi       = {10.1016/j.tcs.2016.01.012},
  timestamp = {Wed, 17 Feb 2021 22:01:42 +0100},
  biburl    = {https://dblp.org/rec/journals/tcs/MiculanP16.bib},
  bibsource = {dblp computer science bibliography, https://dblp.org}
}

@inproceedings{DBLP:conf/fscd/0001MS0U22,
  author    = {Sergey Goncharov and
               Stefan Milius and
               Lutz Schr{\"{o}}der and
               Stelios Tsampas and
               Henning Urbat},
  editor    = {Amy P. Felty},
  title     = {Stateful Structural Operational Semantics},
  booktitle = {7th International Conference on Formal Structures for Computation and Deduction, FSCD'22},
  series    = {LIPIcs},
  volume    = {228},
  pages     = {30:1--30:19},
  publisher = {Schloss Dagstuhl - Leibniz-Zentrum f{\"{u}}r Informatik},
  year      = {2022},
  url       = {https://doi.org/10.4230/LIPIcs.FSCD.2022.30},
  doi       = {10.4230/LIPIcs.FSCD.2022.30},
  timestamp = {Sun, 02 Oct 2022 16:01:34 +0200},
  biburl    = {https://dblp.org/rec/conf/fscd/0001MS0U22.bib},
  bibsource = {dblp computer science bibliography, https://dblp.org}
}

@article{DBLP:journals/iandc/AbramskyO93,
  author    = {Samson Abramsky and
               C.{-}H. Luke Ong},
  title     = {Full Abstraction in the Lazy Lambda Calculus},
  journal   = {Inf. Comput.},
  volume    = {105},
  number    = {2},
  year      = {1993},
  url       = {https://doi.org/10.1006/inco.1993.1044},
  doi       = {10.1006/inco.1993.1044},
  timestamp = {Fri, 12 Feb 2021 22:15:57 +0100},
  biburl    = {https://dblp.org/rec/journals/iandc/AbramskyO93.bib},
  bibsource = {dblp computer science bibliography, https://dblp.org}
}

@InCollection{Abramsky:lazylambda,
  author =       "S.  Abramsky",
  booktitle =    "Research topics in Functional Programming",
  publisher =    "Addison Wesley",
  title =        "The lazy $\lambda$-calculus",
  year =         "1990",
  pages =        "65--117",
  document-size = "129. 7 kbytes",
}

@article{10.1145/3704871,
author = {Goncharov, Sergey and Tsampas, Stelios and Urbat, Henning},
title = {Abstract Operational Methods for Call-by-Push-Value},
year = {2025a},
issue_date = {January 2025},
publisher = {Association for Computing Machinery},
address = {New York, NY, USA},
volume = {9},
number = {POPL},
url = {https://doi.org/10.1145/3704871},
doi = {10.1145/3704871},
abstract = {Levy's call-by-push-value is a comprehensive programming paradigm that combines elements from functional and imperative programming, supports computational effects and subsumes both call-by-value and call-by-name evaluation strategies. In the present work, we develop modular methods to reason about program equivalence in call-by-push-value, and in fine-grain call-by-value, which is a popular lightweight call-by-value sublanguage of the former. Our approach is based on the fundamental observation that presheaf categories of sorted sets are suitable universes to model call-by-(push)-value languages, and that natural, coalgebraic notions of program equivalence such as applicative similarity and logical relations can be developed within. Starting from this observation, we formalize fine-grain call-by-value and call-by-push-value in the higher-order abstract GSOS framework, reduce their key congruence properties to simple syntactic conditions by leveraging existing theory and argue that introducing changes to either language incurs minimal proof overhead.},
journal = {Proc. ACM Program. Lang.},
month = jan,
articleno = {35},
numpages = {27},
keywords = {Call-by-push-value, Categorical semantics, Higher-order Abstract GSOS}
}

@article{SANGIORGI_1998, title={On the bisimulation proof method}, volume={8}, DOI={10.1017/S0960129598002527}, number={5}, journal={Mathematical Structures in Computer Science}, author={Sangiorgi, Davide}, year={1998}, pages={447–479}}

@article{DBLP:journals/lmcs/HirschowitzL22,
  author    = {Tom Hirschowitz and
               Ambroise Lafont},
  title     = {A categorical framework for congruence of applicative bisimilarity
               in higher-order languages},
  journal   = {Log. Methods Comput. Sci.},
  volume    = {18},
  number    = {3},
  year      = {2022},
  url       = {https://doi.org/10.46298/lmcs-18(3:37)2022},
  doi       = {10.46298/lmcs-18(3:37)2022},
  timestamp = {Thu, 13 Oct 2022 16:02:18 +0200},
  biburl    = {https://dblp.org/rec/journals/lmcs/HirschowitzL22.bib},
  bibsource = {dblp computer science bibliography, https://dblp.org}
}

@incollection{pitts_2011,
  series={Cambridge Tracts in Theoretical Computer Science},
  title={Howe's method for higher-order languages},
  booktitle={Advanced Topics in Bisimulation and Coinduction},
  publisher={Cambridge University Press},
  author={Pitts, Andrew},
  editor={Sangiorgi, Davide and Rutten, JanEditors},
  year={2011},
  pages={197--232},
}

@inproceedings{DBLP:conf/lics/Bernstein98,
  author    = {Karen L. Bernstein},
  title     = {A Congruence Theorem for Structured Operational Semantics of Higher-Order
               Languages},
  booktitle = {13th Annual {IEEE} Symposium on Logic in Computer Science, LICS'98},
  pages     = {153--164},
  publisher = {{IEEE} Computer Society},
  year      = {1998},
  url       = {https://doi.org/10.1109/LICS.1998.705652},
  doi       = {10.1109/LICS.1998.705652},
  timestamp = {Wed, 16 Oct 2019 14:14:54 +0200},
  biburl    = {https://dblp.org/rec/conf/lics/Bernstein98.bib},
  bibsource = {dblp computer science bibliography, https://dblp.org}
}

@article{10.2307/2370619,
 ISSN = {00029327, 10806377},
 URL = {http://www.jstor.org/stable/2370619},
 author = {H. B. Curry},
 journal = {Am. J. Math.},
 number = {3},
 pages = {509--536},
 publisher = {Johns Hopkins University Press},
 title = {Grundlagen der Kombinatorischen {L}ogik},
 urldate = {2022-05-18},
 volume = {52},
 year = {1930}
}

@inproceedings{DBLP:conf/sofsem/RotBR13,
  author    = {Jurriaan Rot and
               Marcello M. Bonsangue and
               Jan J. M. M. Rutten},
  editor    = {Peter van Emde Boas and
               Frans C. A. Groen and
               Giuseppe F. Italiano and
               Jerzy R. Nawrocki and
               Harald Sack},
  title     = {Coalgebraic Bisimulation-Up-To},
  booktitle = {{SOFSEM} 2013: Theory and Practice of Computer Science, 39th International
               Conference on Current Trends in Theory and Practice of Computer Science,
               {\v{S}}pindler{\r{u}}v Ml{\'{y}}n, Czech Republic, January 26-31,
               2013. Proceedings},
  series    = {Lecture Notes in Computer Science},
  volume    = {7741},
  pages     = {369--381},
  publisher = {Springer},
  year      = {2013},
  url       = {https://doi.org/10.1007/978-3-642-35843-2\_32},
  doi       = {10.1007/978-3-642-35843-2\_32},
  timestamp = {Sat, 19 Oct 2019 20:01:21 +0200},
  biburl    = {https://dblp.org/rec/conf/sofsem/RotBR13.bib},
  bibsource = {dblp computer science bibliography, https://dblp.org}
}

@inproceedings{DBLP:conf/lics/Lassen05,
  author    = {S{\o}ren B. Lassen},
  title     = {Eager Normal Form Bisimulation},
  booktitle = {20th {IEEE} Symposium on Logic in Computer Science, LICS'05},
  pages     = {345--354},
  publisher = {{IEEE} Computer Society},
  year      = {2005},
  url       = {https://doi.org/10.1109/LICS.2005.15},
  doi       = {10.1109/LICS.2005.15},
  timestamp = {Wed, 16 Oct 2019 14:14:54 +0200},
  biburl    = {https://dblp.org/rec/conf/lics/Lassen05.bib},
  bibsource = {dblp computer science bibliography, https://dblp.org}
}

@article{DBLP:journals/iandc/MilnerPW92a,
  author    = {Robin Milner and
               Joachim Parrow and
               David Walker},
  title     = {A Calculus of Mobile Processes, {I}},
  journal   = {Inf. Comput.},
  volume    = {100},
  number    = {1},
  year      = {1992},
  url       = {https://doi.org/10.1016/0890-5401(92)90008-4},
  doi       = {10.1016/0890-5401(92)90008-4},
  timestamp = {Fri, 12 Feb 2021 22:17:04 +0100},
  biburl    = {https://dblp.org/rec/journals/iandc/MilnerPW92a.bib},
  bibsource = {dblp computer science bibliography, https://dblp.org}
}

@inproceedings{DBLP:conf/lics/SangiorgiKS07,
  author    = {Davide Sangiorgi and
               Naoki Kobayashi and
               Eijiro Sumii},
  title     = {Environmental Bisimulations for Higher-Order Languages},
  booktitle = {22nd Annual {IEEE} Symposium on Logic in Computer Science, LICS'07},
  pages     = {293--302},
  publisher = {{IEEE} Computer Society},
  year      = {2007},
  url       = {https://doi.org/10.1109/LICS.2007.17},
  doi       = {10.1109/LICS.2007.17},
  timestamp = {Sat, 19 Oct 2019 20:00:56 +0200},
  biburl    = {https://dblp.org/rec/conf/lics/SangiorgiKS07.bib},
  bibsource = {dblp computer science bibliography, https://dblp.org}
}

@inproceedings{DBLP:conf/lics/Fiore08,
  author    = {Marcelo P. Fiore},
  title     = {Second-Order and Dependently-Sorted Abstract Syntax},
  booktitle = {23d Annual {IEEE} Symposium on Logic in
               Computer Science (LICS 2008)},
  pages     = {57--68},
  publisher = {{IEEE} Computer Society},
  year      = {2008},
  doi       = {10.1109/LICS.2008.38},
  timestamp = {Wed, 16 Oct 2019 14:14:54 +0200},
  biburl    = {https://dblp.org/rec/conf/lics/Fiore08.bib},
  bibsource = {dblp computer science bibliography, https://dblp.org}
}

@techreport{3720,
  title = "Outline of a Mathematical Theory of Computation",
  author = "Dana Scott",
  year = "1970",
  institution = "Oxford University Computing Laboratory",
  month = "November",
  number = "PRG02",
}

@article{DBLP:journals/iandc/Statman85,
  author    = {Richard Statman},
  title     = {Logical Relations and the Typed lambda-Calculus},
  journal   = {Inf. Control.},
  volume    = {65},
  number    = {2/3},
  pages     = {85--97},
  year      = {1985},
  doi       = {10.1016/S0019-9958(85)80001-2},
  timestamp = {Mon, 18 May 2020 17:45:27 +0200},
  biburl    = {https://dblp.org/rec/journals/iandc/Statman85.bib},
  bibsource = {dblp computer science bibliography, https://dblp.org}
}

@article{DBLP:journals/iandc/OHearnR95,
  author    = {Peter W. O'Hearn and
               Jon G. Riecke},
  title     = {{K}ripke Logical Relations and {PCF}},
  journal   = {Inf. Comput.},
  volume    = {120},
  number    = {1},
  pages     = {107--116},
  year      = {1995},
  doi       = {10.1006/inco.1995.1103},
  timestamp = {Fri, 12 Feb 2021 22:16:39 +0100},
  biburl    = {https://dblp.org/rec/journals/iandc/OHearnR95.bib},
  bibsource = {dblp computer science bibliography, https://dblp.org}
}

@book{10.5555/309656,
editor = {Gordon, Andrew D. and Pitts, Andrew M.},
title = {Higher Order Operational Techniques in Semantics},
year = {1999},
isbn = {0521631688},
publisher = {Cambridge University Press},
address = {USA}
}

@article{DBLP:journals/iandc/Sangiorgi96,
  author    = {Davide Sangiorgi},
  title     = {Bisimulation for Higher-Order Process Calculi},
  journal   = {Inf. Comput.},
  volume    = {131},
  number    = {2},
  pages     = {141--178},
  year      = {1996},
  url       = {https://doi.org/10.1006/inco.1996.0096},
  doi       = {10.1006/inco.1996.0096},
  timestamp = {Fri, 12 Feb 2021 22:16:52 +0100},
  biburl    = {https://dblp.org/rec/journals/iandc/Sangiorgi96.bib},
  bibsource = {dblp computer science bibliography, https://dblp.org}
}

@article{DBLP:journals/iandc/Sangiorgi94,
  author    = {Davide Sangiorgi},
  title     = {The Lazy Lambda Calculus in a Concurrency Scenario},
  journal   = {Inf. Comput.},
  volume    = {111},
  number    = {1},
  year      = {1994},
  url       = {https://doi.org/10.1006/inco.1994.1042},
  doi       = {10.1006/inco.1994.1042},
  timestamp = {Fri, 12 Feb 2021 22:16:21 +0100},
  biburl    = {https://dblp.org/rec/journals/iandc/Sangiorgi94.bib},
  bibsource = {dblp computer science bibliography, https://dblp.org}
}

@article{DBLP:journals/entcs/MousaviR07,
  author    = {Mohammad Reza Mousavi and
               Michel A. Reniers},
  title     = {On Well-Foundedness and Expressiveness of Promoted Tyft: Being Promoted
               Makes a Difference},
  journal   = {Electron. Notes Theor. Comput. Sci.},
  volume    = {175},
  number    = {1},
  pages     = {45--56},
  year      = {2007},
  url       = {https://doi.org/10.1016/j.entcs.2006.09.015},
  doi       = {10.1016/j.entcs.2006.09.015},
  timestamp = {Fri, 12 Feb 2021 22:19:28 +0100},
  biburl    = {https://dblp.org/rec/journals/entcs/MousaviR07.bib},
  bibsource = {dblp computer science bibliography, https://dblp.org}
}

@inproceedings{DBLP:conf/lics/BorthelleHL20,
  author    = {Peio Borthelle and
               Tom Hirschowitz and
               Ambroise Lafont},
  editor    = {Holger Hermanns and
               Lijun Zhang and
               Naoki Kobayashi and
               Dale Miller},
  title     = {A Cellular {Howe} Theorem},
  booktitle = {35th Annual {ACM/IEEE} Symposium on Logic in Computer
               Science, LICS'20},
  pages     = {273--286},
  publisher = {{ACM}},
  year      = {2020},
  url       = {https://doi.org/10.1145/3373718.3394738},
  doi       = {10.1145/3373718.3394738},
  timestamp = {Thu, 23 Jun 2022 19:58:49 +0200},
  biburl    = {https://dblp.org/rec/conf/lics/BorthelleHL20.bib},
  bibsource = {dblp computer science bibliography, https://dblp.org}
}

@inproceedings{DBLP:conf/lics/LagoGL17,
  author    = {Dal Lago, Ugo and
               Gavazzo, Francesco and
               Levy, Paul Blain},
  title     = {Effectful applicative bisimilarity: Monads, relators, and {H}owe's method},
  booktitle = {32nd Annual {ACM/IEEE} Symposium on Logic in Computer Science (LICS 2017)},
  pages     = {1--12},
  publisher = {{IEEE} Computer Society},
  year      = {2017},
  doi       = {10.1109/LICS.2017.8005117},
  timestamp = {Fri, 27 Mar 2020 08:47:04 +0100},
  biburl    = {https://dblp.org/rec/conf/lics/LagoGL17.bib},
  bibsource = {blp computer science bibliography, https://dblp.org}
}

@inproceedings{DBLP:journals/pacmpl/Hirschowitz19,
  author    = {Tom Hirschowitz},
  title     = {Familial monads and structural operational semantics},
  booktitle   = {46th ACM-SIGACT Symposium on Principles of Programming Languages, POPL'19},
  volume    = {3},
  publisher = {ACM},
  OPTnumber    = {{POPL}},
  pages     = {21:1--21:28},
  year      = {2019},
  url       = {https://doi.org/10.1145/3290334},
  doi       = {10.1145/3290334},
  timestamp = {Thu, 23 Jun 2022 20:05:10 +0200},
  biburl    = {https://dblp.org/rec/journals/pacmpl/Hirschowitz19.bib},
  bibsource = {dblp computer science bibliography, https://dblp.org}
}

@InProceedings{UrbatTsampasEtAl23,
  author        = {Henning Urbat and Stelios Tsampas and Sergey Goncharov and
                  Stefan Milius and Lutz Schr{\"o}der},
  title         = {Weak Similarity in Higher-Order Mathematical Operational
                  Semantics},
  year          = "2023",
  publisher     = {IEEE Computer Society Press},
  booktitle     = {38th Annual ACM/IEEE Symposium on Logic in Computer Science (LICS 2023)},
  OPTeditor        = {Igor Walukiewicz},
  doi           = {10.1109/LICS56636.2023.10175706}
}

@book{hindley2008lambda,
  title = {Lambda-{{Calculus}} and {{Combinators}}: {{An Introduction}}},
  shorttitle = {Lambda-{{Calculus}} and {{Combinators}}},
  author = {Hindley, J. Roger and Seldin, Jonathan P.},
  year = {2008},
  edition = {2},
  publisher = {{Cambridge University Press}},
  OPTaddress = {{Cambridge}},
  doi = {10.1017/CBO9780511809835},
  isbn = {978-0-521-89885-0}
}

@inproceedings{gmstu24,
author = {Goncharov, Sergey and Santamaria, Alessio and Schr\"{o}der, Lutz and Tsampas, Stelios and Urbat, Henning},
title = {Logical Predicates in Higher-Order Mathematical Operational Semantics},
editor = {Naoki Kobayashi and James Worrell},
booktitle = {27th International Conference on Foundations of Software Science and Computation Structures (FoSSaCS 2024)},
series       = lncs,
pages        = {47--69},
publisher    = springer,
year         = {2024a},
doi          = {10.1007/978-3-031-57231-9\_3}
}

@InProceedings{GianantonioRPO,
author="Di Gianantonio, Pietro
and Honsell, Furio
and Lenisa, Marina",
editor="Amadio, Roberto",
title="RPO, Second-Order Contexts, and $\lambda$-Calculus",
booktitle="Foundations of Software Science and Computational Structures",
year="2008",
publisher="Springer Berlin Heidelberg",
address="Berlin, Heidelberg",
pages="334--349",
abstract="We apply Leifer-Milner RPO approach to the $\lambda$-calculus, endowed with lazy and call by value reduction strategies. We show that, contrary to process calculi, one can deal directly with the $\lambda$-calculus syntax and apply Leifer-Milner technique to a category of contexts, provided that we work in the framework of weak bisimilarities. However, even in the case of the transition system with minimal contexts, the resulting bisimilarity is infinitely branching, due to the fact that, in standard context categories, parametric rules such as $\beta$ can be represented only by infinitely many ground rules. To overcome this problem, we introduce the general notion of second-order context category. We show that, by carrying out the RPO construction in this setting, the lazy (call by value) observational equivalence can be captured as a weak bisimilarity equivalence on a finitely branching transition system. This result is achieved by considering an encoding of $\lambda$-calculus in Combinatory Logic.",
isbn="978-3-540-78499-9"
}

@misc{goncharov2025bialgebraicreasoningstatefullanguages,
      title={Bialgebraic Reasoning on Stateful Languages}, 
      author={Sergey Goncharov and Stefan Milius and Lutz Schröder and Stelios Tsampas and Henning Urbat},
      year={2025b},
      eprint={2503.10955},
      archivePrefix={arXiv},
      primaryClass={cs.PL},
      url={https://arxiv.org/abs/2503.10955}, 
}

@article{10.1145/3776697,
author = {Urbat, Henning},
title = {Higher-Order Behavioural Conformances via Fibrations},
year = {2026},
issue_date = {January 2026},
publisher = {Association for Computing Machinery},
address = {New York, NY, USA},
volume = {10},
number = {POPL},
url = {https://doi.org/10.1145/3776697},
doi = {10.1145/3776697},
abstract = {Coinduction is a widely used technique for establishing behavioural equivalence of programs in higher-order languages. In recent years, the rise of languages with quantitative (e.g. probabilistic) features has led to extensions of coinductive methods to more refined types of behavioural conformances, most notably notions of behavioural distance. To guarantee soundness of coinductive reasoning, one needs to show that the behavioural conformance at hand forms a program congruence, i.e. it is suitably compatible with the operations of the language. This is usually achieved by a complex proof technique known as Howe’s method, which needs to be carefully adapted to both the specific language and the targeted notion of behavioural conformance. We develop a uniform categorical approach to Howe’s method that features two orthogonal dimensions of abstraction: (1) the underlying higher-order language is modelled by an abstract higher-order specification (AHOS), a novel and very general categorical account of operational semantics, and (2) notions of behavioural conformance (such as relations or metrics) are modelled via fibrations over the base category of   an AHOS. Our main result is a fundamental congruence theorem at this level of generality: Under natural conditions on the categorical ingredients and the operational rules of a language modelled by an AHOS, the greatest behavioural (bi)conformance on its operational model forms a congruence. We illustrate our theory by deriving congruence of bisimilarity and behavioural pseudometrics for probabilistic higher-order languages.},
journal = {Proc. ACM Program. Lang.},
month = jan,
articleno = {55},
numpages = {29},
keywords = {Behavioural Distances, Higher-Order Languages, Howe's Method}
}

@misc{goncharov2026higherorderbialgebraicdenotationalsemantics,
      title={Towards a Higher-Order Bialgebraic Denotational Semantics}, 
      author={Sergey Goncharov and Marco Peressotti and Stelios Tsampas and Henning Urbat and Stefano Volpe},
      year={2026},
      eprint={2602.18295},
      archivePrefix={arXiv},
      primaryClass={cs.PL},
      url={https://arxiv.org/abs/2602.18295}, 
}

\end{document}